\title{Complexity of the (Connected) Cluster Vertex Deletion problem on $H$-free graphs
\thanks{Parts of this paper was presented at the 47th International Symposium on Mathematical Foundations of Computer Science (MFCS 2022)~\cite{LeLe22}.}
}
\titlerunning{Cluster Vertex Deletion on $H$-free graphs}
\author{Hoang-Oanh Le}{Independent Researcher, Berlin, Germany}{HoangOanhLe@outlook.com}{}{}
\author{Van Bang Le}{Institut f\"ur Informatik, Universit\"at Rostock, Germany}{van-bang.le@uni-rostock.de}{}{}
\authorrunning{Hoang-Oanh Le and Van Bang Le}
\keywords{Cluster vertex deletion, Connected cluster vertex deletion, Vertex cover, Computational complexity, Complexity dichotomy} 
\def\NP{\mathsf{NP}}
\def\VC{\textup{\textsc{vertex cover}}}
\def\CVC{\textup{\textsc{connected vertex cover}}}
\def\CVD{\textup{\textsc{cluster-vd}}}
\def\CCVD{\textup{\textsc{connected cluster-vd}}}
\def\3SAT{\textup{\textsc{3sat}}}
\def\NAE3SAT{\textup{\textsc{nae 3sat}}}
\def\2IN3SAT{\textup{\textsc{1-in-3 3sat}}}
\def\pq3SAT{\textup{\textsc{$(p,q)$-3sat}}}
\def\pqNAE3SAT{\textup{\textsc{$(p,q)$-nae 3sat}}}
\def\join{{\footnotesize\textcircled{\textoneoldstyle}}}
\def\cojoin{{\footnotesize\textcircled{\textzerooldstyle}}}
\tikzstyle{knoten}=[circle,inner sep=0pt]
\tikzstyle{vertex}=[draw,circle,inner sep=1.2pt] 
\tikzstyle{square}=[draw,rectangle,inner sep=1.5pt] 
\tikzstyle{vertexS}=[draw,circle,inner sep=1.2pt,fill=black] 
\tikzstyle{squareS}=[draw,rectangle,inner sep=1.5pt,fill=black]
\begin{document}

\maketitle

\begin{abstract}
The well-known Cluster Vertex Deletion problem (\CVD) asks for a given graph $G$ and an integer $k$ whether it is possible to delete a set $S$ of at most $k$ vertices of $G$ such that the resulting graph $G-S$ is a cluster graph (a disjoint union of cliques).  
We give a complete characterization of graphs $H$ for which \CVD\ on $H$-free graphs is polynomially solvable and for which it is $\NP$-complete. Moreover, in the $\NP$-completeness cases, \CVD\ cannot be solved in sub-exponential time in the vertex number of the $H$-free input graphs unless the Exponential-Time Hypothesis fails.
We also consider the connected variant of \CVD, the Connected Cluster Vertex Deletion problem (\CCVD), in which the set $S$ has to induce a connected subgraph of $G$. It turns out that \CCVD\ admits the same complexity dichotomy for $H$-free graphs. 
Our results enlarge a list of rare dichotomy theorems for well-studied problems on $H$-free graphs.
\end{abstract}

\section{Introduction and results}
A very extensively studied version of graph modification problems asks to modify a given graph to a graph that satisfies a certain property $\cal G$ by deleting a minimum number of vertices. 
The case $\cal G$ being \lq edgeless\rq\ is the well-known \VC\ problem, one of the classical $\NP$-hard problems.  
If $\cal G$ is a \lq cluster graph\rq, a graph in which every connected component is a clique, the corresponding problem is another well-known $\NP$-hard problem, the \textsc{cluster vertex deletion} problem (\CVD\ for short). 
In this paper, we revisit the computational complexity of \CVD, formally given below. 

\medskip\noindent
\fbox{
\begin{minipage}{.96\textwidth} 
\CVD\\[.5ex]
\begin{tabular}{l l}
{\em Instance:}& A graph $G=(V,E)$ and an integer $k$.\\
{\em Question:}& Does there exist a vertex set $S\subseteq V$ of size at most $k$ such that $G-S$\\ 
               & is a cluster graph?
\end{tabular}
\end{minipage}
}

\medskip
Being an hereditary property on induced subgraphs, \CVD\ is $\NP$-complete~\cite{LewisY80} and cannot be solved in $2^{o(n+m)}$ time unless the ETH (Exponential-Time Hypothesis) fails~\cite{Komusiewicz18}, where $n$ and $m$ are the vertex and edge number of the input graphs, respectively.
\CVD\ remains $\NP$-complete even when restricted to planar graphs~\cite{Yannakakis78} and to bipartite graphs~\cite{Yannakakis81a}, and to planar bipartite graphs of maximum degree~$3$~\cite{HsiehLLP22}. 
Most recent works on \CVD\ deal with exact, {FPT} and approximation algorithms~\cite{AprileDFH22,BoralCKP16,HuffnerKMN10,Tsur21}. 

It is noticeable that there are only a few known cases where the problem can be solved efficiently: \CVD\ is polynomially solvable on block graphs, split graphs and interval graphs~\cite{0001KOY18}, and on graphs of bounded treewidth~\cite{SauS21}. On the other hand, the complexity status of \CVD\ on many well-studied graph classes is still open, e.g., chordal graphs discussed in~\cite{0001KOY18} and planar bipartite graphs mentioned in~\cite{ChakrabortyCPP21}.

In this paper we initiate studying the computational complexity of \CVD\ on graphs defined by forbidding certain induced subgraphs. We remark that related approaches for other problems are quite common in the literature, e.g., for \VC\ (aka \textsc{independent set})~\cite{
GartlandL20,GrzesikKPP22} and \textsc{coloring}~\cite{GolovachJPS17,KralKTW01}, and that many popular graph classes are defined or characterized by forbidding induced subgraphs, e.g., chordal and bipartite graphs (by infinitely many forbidden subgraphs), and cographs and line graphs (by finitely many forbidden subgraphs).

All graphs considered are undirected, finite and have no multiple edges or self-loops. Let~$H$ be a given graph. A graph $G$ is \emph{$H$-free} if no induced subgraph in $G$ is isomorphic to~$H$. A path with $n$ vertices and $n-1$ edges is denoted by $P_n$.  
The main result of the present paper is the following complexity dichotomy:
\begin{theorem}\label{thm:H-free}
Let $H$ be a fixed graph. \CVD\ is polynomially solvable on $H$-free graphs if $H$ is an induced subgraph of the $4$-vertex path $P_4$, and $\NP$-complete otherwise. 

Furthermore, in case $H$ is not an induced subgraph of $P_4$, no algorithm of runtime $2^{o(n)}$ can solve \CVD\ on $H$-free $n$-vertex graphs, unless the ETH fails.
\end{theorem}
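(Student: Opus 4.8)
I would establish the two halves of the dichotomy separately and read off the ETH lower bound from the hardness reductions. For the \emph{polynomial} half, observe that if $H$ is an induced subgraph of $P_4$ then every $H$-free graph is $P_4$-free, hence a cograph, and cographs have clique-width at most~$2$. Since ``$G-S$ is a cluster graph'' is the same as ``$G-S$ has no induced $P_3$'', an $\mathrm{MSO}_1$-property, the cluster vertex deletion number of $G$ is the optimum of an $\mathrm{MSO}_1$-definable minimization over vertex subsets; by the algorithmic meta-theorem for $\mathrm{MSO}_1$-optimization on graphs of bounded clique-width this optimum, and hence the answer to every instance $(G,k)$, is computable in polynomial time. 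A self-contained substitute is a dynamic program along the cotree: over a disjoint union the optima add, and over a join one either deletes one side entirely or keeps a clique on each side, because a join of two non-empty graphs is connected and a connected cluster graph is a clique.

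\textbf{Reducing the hard half to a few base classes.} For the converse I would first isolate the combinatorial fact that $H$ is \emph{not} an induced subgraph of $P_4$ if and only if $H$ contains one of $3K_1,\,K_3,\,2K_2,\,C_4,\,C_5$ as an induced subgraph; this follows from $R(3,3)=6$ (no graph on at least six vertices is both $K_3$-free and $3K_1$-free) together with a finite check of the graphs on at most five vertices. Combined with two monotonicity observations — (i) if $F$ is an induced subgraph of $H$ then every $F$-free graph is $H$-free, and (ii) if $H$ is not bipartite then every bipartite graph is $H$-free — it suffices to prove that \CVD\ is \NP-complete, with no $2^{o(n)}$-time algorithm under the ETH, on $n$-vertex graphs in each of three classes: bipartite graphs of girth at least~$6$ (covering every non-bipartite $H$, hence every $H$ containing $K_3$ or $C_5$, and also every $H$ with an induced $C_4$); $3K_1$-free graphs (covering every $H$ with $\alpha(H)\ge 3$); and $2K_2$-free graphs (covering every $H$ with an induced $2K_2$). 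Since every $H$ that is not an induced subgraph of $P_4$ contains at least one of the five obstructions, all cases are then handled.

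\textbf{The reductions.} Each of the three hardness statements I would obtain by a linear-size polynomial-time many-one reduction from an \NP-complete bounded-occurrence variant of \3SAT\ — for instance \3SAT\ with a constant bound on the number of occurrences of each variable, or one of the restricted versions \pq3SAT, \pqNAE3SAT, \2IN3SAT\ set up in this paper — which under the ETH cannot be decided in time $2^{o(n)}$ in the number $n$ of variables (for bounded-occurrence versions the number of clauses is already $O(n)$, so no sparsification step is needed); since the reduction is linear, the output graph has $O(n)$ vertices, and a $2^{o(|V|)}$-time \CVD\ algorithm on it would refute the ETH. The gadgets must (a) make satisfying assignments correspond to small cluster vertex deletion sets and vice versa, and (b) keep the constructed graph inside the prescribed class. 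For the $3K_1$-free target I would work with dense gadgets throughout — variable and clause gadgets are near-complete graphs — using that a $3K_1$-free cluster graph is a union of at most two cliques, so \CVD\ there asks to leave two cliques that can be arranged to encode which literals are true. For the $2K_2$-free target I would use that a $2K_2$-free cluster graph is a single clique plus isolated vertices, so \CVD\ asks to keep one big clique together with an independent set having no edges to it, and I would build the gadgets from large cliques attached so that no two vertex-disjoint edges survive outside the selected clique. For the bipartite, girth at least~$6$ target I would use that in a triangle-free graph a cluster graph is exactly a graph of maximum degree at most~$1$, so \CVD\ becomes the \textsc{maximum dissociation set} problem; I would then build an incidence-type bipartite graph out of the formula, routing every variable--clause interaction through a path long enough that no cycle of length less than~$6$ (in particular no $C_4$) is ever created.

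\textbf{Expected main obstacle.} The cograph algorithm and the Ramsey/monotonicity bookkeeping are routine. The technical core is making each reduction respect its forbidden induced subgraph while still transmitting the logical structure of the formula: a single unintended induced $C_4$ (bipartite case), a single unintended induced $2K_2$ ($2K_2$-free case), or a single unintended independent triple ($3K_1$-free case) ruins membership in the target class, so the internal wiring of the gadgets and the way clause gadgets attach to variable gadgets must be chosen with care, and one must verify the absence of these configurations simultaneously with the forward and backward correctness of the reduction. I expect the bipartite girth-at-least-$6$ (i.e.\ $C_4$-free) reduction to be the hardest, since the $4$-cycle is precisely what a naive ``variable meets clause'' incidence gadget produces; the $2K_2$-free case should be next, because forbidding two independent edges forces almost everything that survives into one clique and leaves little freedom in designing independent parts of the construction.
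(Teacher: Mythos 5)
Your overall architecture is sound, and your case analysis is in fact a legitimate variant of the paper's: where the paper characterizes the induced subgraphs of $P_4$ as the $\{3P_1,2P_2\}$-free forests (so that the obstructions are $3P_1$, $2P_2$, or an arbitrarily long cycle, forcing a girth parameter $g=|V(H)|$), you use Ramsey's theorem to shrink the obstruction set to the finite family $\{3K_1,K_3,2K_2,C_4,C_5\}$, which would let you get away with girth~$6$. The polynomial half (cographs, clique-width~$2$, $MSOL_1$, or the cotree dynamic program) coincides with the paper's. So far, so good.

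The genuine gap is that the \NP-completeness half of your argument consists of three reductions, none of which you actually construct or verify; you yourself flag the gadget design as ``the technical core'' and the ``expected main obstacle'' and then stop there. This is not a routine omission: for the dense classes your sketches are speculative in a way that matters. In a $3K_1$-free instance the surviving cluster graph is a union of at most two cliques, so every variable gadget must communicate its truth value through a single global two-way partition --- it is far from clear that independent variable choices and clause checks can be encoded this way, and you give no gadget that does it. The same concern applies to the $2K_2$-free case, where everything that survives must collapse into one clique plus isolated vertices. The paper avoids this difficulty entirely: it does not build SAT gadgets for the dense case at all, but reduces from \VC\ on subcubic graphs of large girth by taking $G'=\overline{G}\join\overline{G}$ (two copies of the complement with all edges between them) and setting $k'=2k$; the join forces $G'-S'$ to be a single clique, the correspondence with vertex covers of $G$ is immediate, and $G'$ is automatically $\{\overline{C_3},\ldots,\overline{C_g}\}$-free, hence $\{3P_1,2P_2\}$-free, handling both of your dense classes with one two-line construction. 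For the sparse case the paper likewise does not design an incidence gadget from scratch but starts from Yannakakis's known reduction to bipartite graphs of bounded degree and amplifies the girth by an edge-subdivision lemma (subdividing each edge three times changes the optimum by exactly $m$). Until you supply concrete, verified constructions for your three target classes --- or adopt reductions of this kind --- the hardness direction, and with it the ETH lower bound, remains unproven.
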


We also consider the connected variant of \CVD, which is as follows.

\medskip\noindent
\fbox{
\begin{minipage}{.96\textwidth} 
\CCVD\\[.5ex]
\begin{tabular}{l l}
{\em Instance:}& A graph $G=(V,E)$ and an integer $k$.\\
{\em Question:}& Does there exist a vertex set $S\subseteq V$ of size at most $k$ such that $G-S$\\ 
               & is a cluster graph and $G[S]$ is connected?
\end{tabular}
\end{minipage}
}

\medskip\noindent
It is known that \CCVD\ is $\NP$-complete and cannot be solved in $2^{o(n+m)}$ time unless the ETH fails~\cite{Komusiewicz18}. 
It turns out that \CCVD\ admits the same complexity dichotomy as for \CVD:
\begin{theorem}\label{thm:H-free-connected}
Let $H$ be a fixed graph. \CCVD\ is polynomially solvable on $H$-free graphs if $H$ is an induced subgraph of the $4$-vertex path $P_4$, and $\NP$-complete otherwise. 

Furthermore, in case $H$ is not an induced subgraph of $P_4$, no algorithm of runtime $2^{o(n)}$ can solve \CCVD\ on $H$-free $n$-vertex graphs, unless the ETH fails.
\end{theorem}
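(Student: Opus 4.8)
The plan is to prove the two directions separately and, since the two problems are closely tied, to bootstrap from the already-established dichotomy for \CVD\ (Theorem~\ref{thm:H-free}) wherever possible.

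\textbf{The polynomial cases.} If $H$ is an induced subgraph of $P_4$, then every $H$-free graph is $P_4$-free, i.e.\ a cograph, so it suffices to solve \CCVD\ on cographs in polynomial time. I would do this by a bottom-up dynamic program over the cotree (equivalently, the modular decomposition), refining the polynomial \CVD-algorithm on cographs that is guaranteed by Theorem~\ref{thm:H-free}. Two structural facts drive the recursion. First, if $G$ is a disconnected cograph and $G[S]$ is connected, then $S$ lies inside a single component of $G$ (or $S=\emptyset$) and every other component must already be a cluster graph. Second, if $G=G_1\join G_2$ is a connected cograph and $G-S$ is a cluster graph, then $V_1\subseteq S$, or $V_2\subseteq S$, or $V\setminus S$ is a clique of $G$; the last case turns \CVD\ on $G$ into \VC\ on the cograph $\overline{G}$, which is polynomial, while the first two cases recurse on $G_1$ or $G_2$. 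Storing, at each cotree node, the minimum size of a partial solution of each of a constant number of types --- the kept part being empty, a single clique, or a cluster graph with at least two cliques, crossed with whether the removed part is empty, connected and nonempty, or disconnected --- yields a polynomial algorithm (at a join node the removed parts of the two sides merge into a connected subgraph as soon as both are nonempty). That a plain reduction to \CVD\ does not work is witnessed already by $G=K_{3,3}$: its unique minimum \CVD-solution is one side of the bipartition, which is independent, so the optimum of \CVD\ on $K_{3,3}$ is $3$ while that of \CCVD\ is $4$; the extra bookkeeping is therefore genuinely needed.

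\textbf{The $\NP$-complete cases.} Suppose now that $H$ is not an induced subgraph of $P_4$; I must show \CCVD\ is $\NP$-complete and admits no $2^{o(n)}$-time algorithm on $H$-free $n$-vertex graphs unless the ETH fails. I would split along the usual lines: $H$ contains an induced cycle $C_\ell$; or $H$ is a forest with a vertex of degree at least $3$ and hence contains an induced claw $K_{1,3}$; or $H$ is a linear forest that is not an induced subgraph of $P_4$, so $H$ contains an induced $2P_2$, $P_3+P_1$, or $3P_1$ (a longer induced path already yields a $2P_2$). In each case one picks a subclass $\mathcal C_H$ of the $H$-free graphs on which \CCVD\ is hard via a linear-size reduction --- for instance the triangle-free (indeed bipartite) graphs when $H$ has a triangle, the claw-free graphs (e.g.\ line graphs) when $H$ has an induced claw, and, in the remaining cases, the very graphs produced by the \CVD-reduction behind Theorem~\ref{thm:H-free}, modified so that a minimum solution can be assumed connected. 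One clean modification that handles several $H$ is to append to the \CVD-instance $G$ a new vertex $u$ adjacent to all of $V(G)$ together with a new vertex $u'$ whose only neighbour is $u$: then $G$ has a \CVD-solution of size $k$ if and only if the enlarged graph has a \CCVD-solution of size $k+1$, since a solution of the enlarged graph may always be assumed to contain $u$ (which makes it connected), whereupon $u'$ becomes isolated and harmless. As every such modification adds only $O(1)$ or $O(n)$ vertices, the ETH lower bounds of Theorem~\ref{thm:H-free} transfer verbatim.

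\textbf{Main obstacle.} The subtle point is not the correctness of the connectivity gadgets but their compatibility with $H$-freeness. The universal-vertex gadget above is not $H$-free-preserving in general: $u$ together with any nonadjacent pair of $G$ induces a $P_3$, and $u'$ together with any triangle of $G$ induces a $K_3+P_1$, so that gadget is useless precisely when $H$ is triangle- or path-like. Hence the reduction has to be designed case by case: for triangle-containing or claw-containing $H$ one should argue inside triangle-free, respectively claw-free, instances and never create the forbidden structure; for disconnected $H$, or $H$ with pendant vertices, one needs a connectivity mechanism whose new vertices are absorbed into existing parts of the construction (for example a clique joined to a suitable dominating set of the reduction graph) rather than a solitary universal vertex. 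Verifying, for every $H$ outside $\{P_1,2P_1,P_2,P_3,\overline{P_3},P_4\}$, that the chosen reduction simultaneously preserves the answer, outputs $H$-free graphs, and has linear size, is the technical heart of Theorem~\ref{thm:H-free-connected}; by comparison, the polynomial direction is routine cotree dynamic programming.
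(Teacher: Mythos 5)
Your polynomial direction is sound and essentially matches what the paper does in its appendix: a bottom-up computation over the cotree, distinguishing at a join node whether one side is wholly deleted or the kept part is a single clique (the latter reducing to \VC\ on the complement), with extra bookkeeping for connectivity of the deleted part. (The paper's primary argument is shorter still --- it expresses connected cluster vertex deletion in $MSOL_1$ and invokes the Courcelle--Makowsky--Rotics meta-theorem on bounded clique-width --- but your DP is the same as the paper's Lemmas on $\varsigma_c$ of joins and co-joins.) Your $K_{3,3}$ example correctly shows that \CCVD\ does not trivially reduce to \CVD.

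The hardness direction, however, has a genuine gap exactly where you locate "the technical heart," and none of the mechanisms you sketch closes it. The decisive case is $H$ containing a cycle, where the $H$-free class must be realized by graphs of girth greater than $|V(H)|$. Your universal-vertex gadget fails there (as you note, $u$ plus any non-edge of $G$ creates a $P_3$, and more to the point $u$ creates triangles and $C_4$'s), and your fallback --- "a clique joined to a suitable dominating set of the reduction graph" --- also creates triangles and $4$-cycles, destroying the large girth that $H$-freeness requires. What is actually needed, and what the paper supplies, is a \emph{tree} gadget $H(g,r,s)$ attached to the \CVD-instance by single pendant edges: a caterpillar-like spine whose degree-$3$ vertices form simultaneously an optimal and the unique connected cluster vertex deletion set of the gadget, so that any solution can be normalized to contain exactly these $(g+2)n$ vertices, which then connect up an arbitrary cluster vertex deletion set of $G$ while preserving bipartiteness, bounded degree, and girth $>g$. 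Inventing this gadget (and proving the forcing/normalization argument for it) is the missing content. Two smaller points: your subcase for $H$ containing a claw routes through claw-free or line graphs, but the complexity of \CVD\ and \CCVD\ on line graphs is explicitly left open in the paper; this detour is both unsupported and unnecessary, since $K_{1,3}$ contains $3P_1$ and the dense $\{3P_1,2P_2\}$-free construction (two copies of $\overline{G}$ joined completely, where the deletion set automatically induces a connected subgraph because it meets both sides of the join) already covers every forest $H$ outside $P_4$. Finally, "the answer-preservation of the universal-vertex gadget" is itself not immediate --- a solution avoiding $u$ need only leave a clique in $G$, not a cluster graph --- but this gadget is discarded anyway.
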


Theorems~\ref{thm:H-free} and~\ref{thm:H-free-connected} enlarge a list of rare dichotomy theorems on $H$-free graphs: Korobitsin~\cite{Korobitsin92} proved that \textsc{dominating set} is solvable in polynomial time on $H$-free graphs if~$H$ is an induced subgraph of $P_4+tP_1$, the union of $P_4$ and $t$ isolated vertices for $t\ge0$, and $\NP$-complete otherwise. Munaro~\cite{Munaro17} proved that the same dichotomy holds for \textsc{connected dominating set} and for \textsc{graph VC$_{\text{con}}$ dimension}. 
Kr\'al, Kratochv{\'{\i}}l, Tuza and Woeginger~\cite{KralKTW01} proved that \textsc{colouring} on $H$-free graphs is solvable in polynomial time if~$H$ is an induced subgraph of $P_4$ or of $P_3+P_1$ and $\NP$-complete otherwise. Kami\'nski~\cite{Kaminski12} proved that \textsc{max-cut} is solvable in polynomial time if $H$ is an induced subgraph of $P_4$ and $\NP$-complete otherwise.

\section{Preliminaries}
For a set $\cal H$ of graphs, ${\cal H}$-free graphs are those in which no induced subgraph is isomorphic to a graph in $\cal H$. We denote by $K_{1,n}$ the tree with $n+1\ge 3$ vertices and $n$ leaves, by $C_n$ the $n$-vertex cycle. The girth $girth(G)$ of a graph~$G$ is the smallest length of a cycle in~$G$; we set $girth(G)=\infty$ if~$G$ is a \emph{forest}, a graph without cycles. Thus, for any fixed integer $g\ge 3$, $gith(G)>g$ if and only if~$G$ is $\{C_3,C_4,\allowbreak\ldots,\allowbreak C_g\}$-free. 

As usual, we denote by $\overline{G}$ the complement of a graph~$G$. 
The union $G+H$ of two vertex-disjoint graphs $G$ and $H$ is the graph with vertex set $V(G)\cup V(H)$ and edge set $E(G)\cup E(H)$; we write $pG$ for the union of~$p$ copies of~$G$. For a subset $S \subseteq V(G)$, let $G[S]$ denote the subgraph of $G$ induced by~$S$; $G-S$ stands for $G[V(G)\setminus S]$.  
By \lq $G$ contains an $H$\rq\ we mean~$G$ contains~$H$ as an induced subgraph. Graphs in which every vertex has degree~$3$ are called \emph{$3$-regular graphs} or \emph{cubic graphs} and graphs with maximum degree~$3$ \emph{subcubic graphs}.

A graph $G$ is a \emph{cluster graph} if each of its connected components is a clique. Observe that~$G$ is a cluster graph if and only if~$G$ is $P_3$-free. If $S\subseteq V(G)$ is a subset of vertices of $G$ such that $G-S$ is $P_3$-free, then~$S$ is called a \emph{cluster vertex deletion set} of~$G$. An \emph{optimal} cluster vertex deletion set is one of minimum size. 

Algorithmic lower bounds in this paper are conditional, based on the Exponential Time Hypothesis (ETH)~\cite{ImpagliazzoP01}. The ETH asserts that no algorithm can solve \3SAT\ in subexponential time $2^{o(n)}$ for $n$-variable \textsc{3-cnf} formulas. As shown by the Sparsification Lemma in~\cite{ImpagliazzoPZ01}, the hard cases of \3SAT\ consist of sparse formulas with $m=O(n)$ clauses. Hence, the ETH implies that \3SAT\ cannot be solved in time $2^{o(n+m)}$. 

Recall that an instance for \NAE3SAT\ is a \textsc{3-cnf} formula $F=C_1\land C_2\land \cdots \land C_m$ over $n$ variables, in which each clause $C_j$ consists of three distinct literals. The problem asks whether there is a truth assignment of the variables such that every clause in $F$ has at least one true and at least one false literal. Such an assignment is called an \emph{nae assignment}, i.e. a not-all-equal assignment. There is a polynomial reduction from \3SAT\ to \NAE3SAT\ (\cite[Theorem 7.3]{Moret98}), which transforms an instance for \3SAT\ with $n$ variables and $m$ clauses to an equivalent instance for \NAE3SAT\ with $2n+24m$ variables and $32m$ clauses. Thus, we obtain: 

\begin{theorem}[\cite{Moret98,ImpagliazzoPZ01}]\label{thm:nae3sat}
\NAE3SAT\ is $\NP$-complete and, assuming ETH, cannot be solved in time $2^{o(n+m)}$ on inputs with $n$ variables and $m$ clauses.
\end{theorem}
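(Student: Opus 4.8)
The plan is to obtain both parts of the statement from a single polynomial-time reduction from \3SAT, exploiting the fact that this reduction increases the instance size only by a constant factor. Membership in $\NP$ is immediate: an nae-assignment is a polynomial-size certificate, checkable in linear time by inspecting every clause.

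For $\NP$-hardness I would use the classical reduction from \3SAT\ described in \cite[Theorem~7.3]{Moret98}, whose conceptual core is the two-step chain $\3SAT \le_p \textsc{nae-4sat} \le_p \NAE3SAT$. In the first step one introduces a fresh global variable $s$ and replaces each clause $(\ell_1\lor\ell_2\lor\ell_3)$ by the not-all-equal clause $\{\ell_1,\ell_2,\ell_3,s\}$. Since complementing all variables turns an nae-assignment into an nae-assignment, one may assume $s=\mathit{false}$; then the nae-condition on $\{\ell_1,\ell_2,\ell_3,s\}$ says precisely that some $\ell_i$ is \emph{true}, i.e.\ that the \3SAT\ clause is satisfied, and the converse is equally clear. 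In the second step one splits every $4$-literal clause $\{a,b,c,d\}$ into $\{a,b,z\}$ and $\{\overline z,c,d\}$ with a private fresh variable $z$; a short case analysis shows that the two instances are nae-equisatisfiable. Carrying out the bookkeeping of \cite{Moret98} (which additionally normalises all clauses so that each has three distinct literals) shows that an $n$-variable, $m$-clause \3SAT\ instance is mapped to an equivalent \NAE3SAT\ instance on $2n+24m$ variables and $32m$ clauses.

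For the ETH lower bound I would combine this linear size bound with the sparsified form of the ETH. By the Sparsification Lemma \cite{ImpagliazzoPZ01}, the ETH implies that \3SAT\ cannot be solved in time $2^{o(n+m)}$ even on instances with $m=O(n)$ clauses. Suppose \NAE3SAT\ admitted an algorithm of running time $2^{o(n'+m')}$ on instances with $n'$ variables and $m'$ clauses. Applying the reduction above to an $n$-variable, $m$-clause \3SAT\ instance produces a \NAE3SAT\ instance with $n'+m'=(2n+24m)+32m=O(n+m)$, which could then be decided in time $2^{o(n'+m')}=2^{o(n+m)}$, contradicting the ETH.

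The only point that genuinely needs care is this last step: one must verify that the reduction is \emph{linear} in $n+m$, not merely polynomial (in particular that it does not blow $m$ up to, say, $\Theta(n^2)$), so that subexponential running time is preserved under it. Everything else — correctness of the two gadget steps and the routine accounting of variables and clauses — is standard.
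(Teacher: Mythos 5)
Your proposal is correct and follows essentially the same route as the paper: it invokes the linear-size reduction of \cite[Theorem~7.3]{Moret98} from \3SAT\ to \NAE3SAT\ (producing $2n+24m$ variables and $32m$ clauses) and combines it with the sparsified form of the ETH from \cite{ImpagliazzoPZ01}, which is exactly how the paper derives Theorem~\ref{thm:nae3sat}. The extra detail you supply about the internals of Moret's two-step reduction is consistent with, and merely expands on, what the paper cites.
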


We will also need the following restriction of \NAE3SAT. For integers $p, q\ge 2$, let \pq3SAT\ denote the problem of deciding if a \textsc{3-cnf} formula in which each variable occurs at most~$p$ times positively and at most~$q$ times negatively is satisfiable. \pqNAE3SAT\ is defined analogously. A reduction from \3SAT, linear in the number of clauses, due to Tovey~\cite{Tovey84} shows that \textsc{$(2,2)$-3sat} remains $\NP$-complete and, assuming ETH, cannot be solved in time $2^{o(n)}$ time for inputs with $n$ variables. 
Now, the reduction due to Moret~\cite[Theorem 7.3]{Moret98} mentioned above transforms an instance for \textsc{$(2,2)$-3sat} to an equivalent instance for \textsc{$(4,4)$-nae 3sat}, linear in the number of variables and clauses. Hence, we obtain: 

\begin{theorem}[\cite{Tovey84,Moret98,ImpagliazzoPZ01}]\label{thm:(4,4)-nae3sat}
\textup{\textsc{$(4,4)$-nae 3sat}} is $\NP$-complete and, assuming ETH, cannot be solved in time $2^{o(n)}$ on inputs with $n$ variables.
\end{theorem}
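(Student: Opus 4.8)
The plan is to prove both claims at once by exhibiting a satisfiability-preserving reduction from \3SAT\ that blows up the number of variables by only a constant factor and lands in \textsc{$(4,4)$-nae 3sat}; membership in $\NP$ is immediate (check that each clause contains a true and a false literal). The reduction will be the composition of the two reductions already named above, so the work is in recalling them and in tracking the occurrence bounds and the linear blow-ups. First I would invoke the ETH together with the Sparsification Lemma~\cite{ImpagliazzoP01,ImpagliazzoPZ01}: it suffices to reduce from sparse \3SAT, i.e.\ $n$-variable formulas with $m=O(n)$ clauses, for which under ETH there is no $2^{o(n)}=2^{o(n+m)}$ algorithm. (Equivalently one may start from the variant of \3SAT\ in which each variable has at most three occurrences, already $\NP$-hard and ETH-hard via a linear reduction.)

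Next I would apply Tovey's reduction~\cite{Tovey84} to reach \textsc{$(2,2)$-3sat}: a variable $x$ with occurrences $\ell_1,\dots,\ell_d$ is replaced by fresh variables $x_1,\dots,x_d$, the $i$-th occurrence uses $x_i$, and one adds the implication cycle $(\neg x_1\vee x_2)\wedge(\neg x_2\vee x_3)\wedge\cdots\wedge(\neg x_d\vee x_1)$, padded to clauses of three (distinct) literals in the standard way. The cycle forces $x_1=\dots=x_d$ in any satisfying assignment, so satisfiability is preserved, and each $x_i$ now occurs once in an original clause and once in each polarity in the cycle, hence at most twice positively and twice negatively. Starting from sparse \3SAT\ (or the $3$-occurrence variant) this multiplies $n$ and $m$ only by a constant, so under ETH \textsc{$(2,2)$-3sat} has no $2^{o(n)}$ algorithm.

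Then I would apply Moret's reduction~\cite[Theorem~7.3]{Moret98} from \textsc{$(2,2)$-3sat} to \textsc{$(4,4)$-nae 3sat}. The subtle point is the occurrence bound. The textbook NAE transformation rewrites $(l_1\vee l_2\vee l_3)$ as $\mathrm{NAE}(l_1,l_2,e)\wedge\mathrm{NAE}(\neg e,l_3,z)$ with $e$ local to the clause but $z$ a single \emph{global} auxiliary variable, which destroys bounded occurrence. Moret's construction instead uses a constant number of fresh variables per clause, a constant number of NAE-$3$ clauses to encode the clause, and a further $O(1)$ occurrences per variable to force the per-clause copies of the auxiliary value to agree; adding, for each variable, the at most two occurrences per polarity coming from the $(2,2)$-instance to the $O(1)$ occurrences introduced by the gadgets and their linkage, one obtains the bound $4$ in each polarity. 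This is exactly what has to be read off the explicit gadget, which also yields the stated sizes $2n+24m$ variables and $32m$ clauses, hence linearity in $n+m$.

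Composing the two reductions sends an $n$-variable, $O(n)$-clause \3SAT\ instance to an equivalent \textsc{$(4,4)$-nae 3sat} instance on $N=O(n)$ variables; a $2^{o(N)}$ algorithm for the latter would decide \3SAT\ in $2^{o(n)}$, contradicting ETH, and $\NP$-hardness follows because the chain consists of polynomial-time reductions, which together with membership in $\NP$ gives the theorem. The hard part will be the third step: Moret's reduction as usually presented is not occurrence-bounded, so one must re-inspect the gadget of \cite[Theorem~7.3]{Moret98}, check that every fresh variable and every copy of an original literal appears at most four times in each polarity, and verify that the mechanism equalizing the per-clause auxiliary variables stays within that bound. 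The remaining items---padding Tovey's implication clauses to three distinct literals, and confirming that all size parameters grow only linearly so the $2^{o(\cdot)}$ lower bound transfers---are routine bookkeeping.
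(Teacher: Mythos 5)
Your proposal follows essentially the same route as the paper: establish ETH-hardness of \textsc{$(2,2)$-3sat} via Tovey's linear reduction and then apply Moret's reduction \cite[Theorem~7.3]{Moret98}, observing that it is linear in the number of variables and clauses and at most doubles the per-polarity occurrence count, yielding \textsc{$(4,4)$-nae 3sat}. The paper states this composition in exactly the same way (without spelling out the gadgets), so your additional gadget-level bookkeeping and the flagged verification of the occurrence bound in Moret's construction are consistent with, and slightly more detailed than, the paper's own argument.
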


\smallskip\noindent
\textbf{Structure of the paper.} 
We first address the polynomial part of Theorems~\ref{thm:H-free} and~\ref{thm:H-free-connected} in the next section. Then we present two new $\NP$-completeness results for \CVD\ and \CCVD\ in Sections~\ref{sec:3P1and2P2} and~\ref{sec:girth}. 
These hardness results allow us to clear the $\NP$-completeness part of Theorems~\ref{thm:H-free} and~\ref{thm:H-free-connected} in Section~\ref{sec:NPc-part}. The last section concludes the paper. 

\section{$H$-free graphs: polynomial cases}\label{sec:P-part}
The polynomial part in Theorems~\ref{thm:H-free} and~\ref{thm:H-free-connected} consists of six cases; see Fig.~\ref{fig:easy} for all graphs~$H$ for which \CVD\ and \CCVD\ are polynomially solvable on $H$-free graphs. 

\begin{figure}[!ht]
\begin{center}

\begin{tikzpicture}[scale=.42] 
\node[vertex] (x1) at (1,3)  {};

\node at (1,1.5) {$P_1$};
\end{tikzpicture} 
\qquad\,
\begin{tikzpicture}[scale=.42] 
\node[vertex] (x1) at (1,3)  {};
\node[vertex] (x2) at (2.5,3)  {}; 

\node at (1.75,1.5) {$2P_1$};
\end{tikzpicture} 
\qquad\,
\begin{tikzpicture}[scale=.42] 
\node[vertex] (x1) at (1,3)  {};
\node[vertex] (x2) at (2.5,3)  {};

\draw (x1)--(x2); 

\node at (1.75,1.5) {$P_2$};
\end{tikzpicture} 
\qquad\,
\begin{tikzpicture}[scale=.42] 
\node[vertex] (x1) at (1,3)  {};
\node[vertex] (x2) at (2.5,3)  {};
\node[vertex] (x3)  at (4,3) {}; 

\draw (x1)--(x2); 

\node at (2.5,1.5) {$P_2+P_1$};
\end{tikzpicture} 
\qquad\,
\begin{tikzpicture}[scale=.42] 
\node[vertex] (x1) at (1,3)  {};
\node[vertex] (x2) at (2.5,3)  {};
\node[vertex] (x3)  at (4,3) {}; 

\draw (x1)--(x2)--(x3); 

\node at (2.5,1.5) {$P_3$};
\end{tikzpicture} 
\qquad\,
\begin{tikzpicture}[scale=.42] 
\node[vertex] (x1) at (1,3)  {};
\node[vertex] (x2) at (2.5,3)  {};
\node[vertex] (x3)  at (4,3) {}; 
\node[vertex] (x4)  at (5.5,3) {};

\draw (x1)--(x2)--(x3)--(x4); 

\node at (3.25,1.5) {$P_4$};
\end{tikzpicture} 
\caption{The graphs $H$ for which \CVD\ and \CCVD\ are polynomially solvable on $H$-free graphs.}\label{fig:easy}
\end{center}
\end{figure}
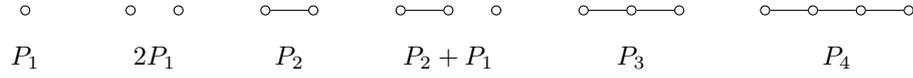

Observe that $H$-freeness is hereditary, meaning if $H'$ is an induced subgraph of $H$ then $H'$-free graphs are $H$-free graphs. Thus, it suffices to prove the polynomial part only for the case where $H$ is the $4$-vertex path $P_4$. 

The proof will follow from the concept of clique-width of graphs in connection with the so-called monadic second-order logic, $MSOL_1$ for short, an extension of first-order logic with quantification over vertex set variables. Briefly, the clique-width of a graph $G$, introduced in~\cite{CourcelleER93}, is the minimum number of labels needed to construct $G$ by:
\begin{itemize}
\item creating a new vertex with label $i$,
\item taking a disjoint union of two labeled graphs,
\item joining every vertex with label $i$ to every vertex with label $j\not= i$, and
\item renaming label $i$ to label $j$. 
\end{itemize}
Such a construction with $k$ labels defines an algebraic $k$-expression. 
A well-known meta-theorem by Courcelle, Makowsky and Rotics~\cite{CourcelleMR00} states that any graph property expressible in $MSOL_1$ is decidable in linear time for graphs with bounded clique-width, provided a $k$-expression of the graphs is given. 
It is well known that $P_4$-free graphs, also known as cographs, have clique-width at most~$2$ and a corresponding $2$-expression can be constructed in linear time (see, e.g.,~\cite{CourcelleMR00}). Hence, any $MSOL_1$ graph property is decidable in linear time when restricted to $P_4$-free graphs. 

Now, being a cluster vertex deletion set is a $MSOL_1$ property:
\begin{align*}
& \forall u, v, w \big(\neg S(u)\land \neg S(v)\land \neg S(w) \land E(u,v)\land E(v,w)\land (u\not= w) \to E(u,w)\big), &
\end{align*}
where $S(x)$ means $x\in S$ and $E(x,y)$ means $xy\in E(G)$. (The sentence says that the graph $G-S$ is $P_3$-free.) 

Also, the fact that the vertex set $S$ in a graph $G$ induces a connected subgraph of $G$ can be written as a $MSOL_1$ sentence:
\begin{align*}
& \forall T\subseteq S \Big((S\not=\emptyset \land S\setminus T\not=\emptyset) \to \big(\exists u\in S\setminus T,\, \exists v\in T:\, E(u,v)\big)\Big).&
\end{align*}
(The sentence says that, for any bipartition of $S$ into two non-empty sets, there is an edge joining two vertices in different parts of the bipartition.) 

Thus, \CVD\ and \CCVD\ can be solved in linear time on $P_4$-free graphs.  
Indeed, we have a stronger fact. The weighted optimization version of \CVD\ and \CCVD,  \textsc{minimum cluster-vd} and \textsc{minimum connected cluster-vd}, are $LinEMSOL_{\tau_{1,p}}$ problems ($LinEMSOL_{\tau_{1,p}}$ is an extension of $MSOL_1$ which allows one to search for optimal sets of vertices with respect to some linear objective function). 
We refer to the paper~\cite{CourcelleMR00} for details, in which it is shown that every $LinEMSOL_{\tau_{1,p}}$ problem on $P_4$-free graphs can be solved in linear time~\cite[Theorem~4]{CourcelleMR00}. To sum up, we have:

\begin{proposition}\label{pro:P4}
\CVD\ and \CCVD\ can be solved in linear time on $P_4$-free graphs, even in the weighted optimization version.
\end{proposition}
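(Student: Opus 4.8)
The plan is to reduce \textsc{minimum cluster-vd} and \textsc{minimum connected cluster-vd} to the algorithmic meta-theorem for $P_4$-free graphs, namely Courcelle, Makowsky and Rotics~\cite[Theorem~4]{CourcelleMR00}, which solves every $LinEMSOL_{\tau_{1,p}}$ problem in linear time on graphs of bounded clique-width once a bounded-width expression is given. Since cographs have clique-width at most~$2$ and a $2$-expression can be produced in linear time, it suffices to exhibit the two problems as $LinEMSOL_{\tau_{1,p}}$ optimization problems, i.e.\ to write down an $MSOL_1$ formula $\varphi(S)$ describing the admissible sets $S$ and to specify a linear objective (the vertex count, or in the weighted version the total vertex weight) that we minimize over all $S$ satisfying $\varphi$.

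First I would record the two $MSOL_1$ predicates already displayed in the text: the sentence $\forall u,v,w\,\big(\neg S(u)\land\neg S(v)\land\neg S(w)\land E(u,v)\land E(v,w)\land(u\neq w)\to E(u,w)\big)$ expresses that $G-S$ is $P_3$-free, hence a cluster graph, and the sentence $\forall T\subseteq S\,\big((S\neq\emptyset\land S\setminus T\neq\emptyset)\to\exists u\in S\setminus T\,\exists v\in T:E(u,v)\big)$ expresses that $G[S]$ is connected. For \textsc{minimum cluster-vd} one takes $\varphi(S)$ to be just the first sentence; for \textsc{minimum connected cluster-vd} one takes the conjunction of both. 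In either case $\varphi$ has a single free set variable $S$, so minimizing $|S|$ (or $\sum_{v\in S} w(v)$) subject to $G\models\varphi(S)$ is literally an instance of the $LinEMSOL_{\tau_{1,p}}$ template with $p=1$ and the constant linear functional ``weight of the first argument set''.

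Then I would invoke \cite[Theorem~4]{CourcelleMR00}: given the linear-time-constructible $2$-expression for the cograph $G$, the optimal such $S$ is found in linear time, and comparing $|S|$ with the bound $k$ answers the decision versions \CVD\ and \CCVD. This yields the proposition, including the stronger weighted-optimization statement. I do not expect a genuine obstacle here; the only point that needs a sentence of care is that the connectivity formula is a bona fide $MSOL_1$ sentence over the single monadic variable $S$ (the quantifier $\forall T\subseteq S$ is second-order monadic, which $MSOL_1$ permits), and that the empty set is correctly handled as ``connected'' by the guard $S\neq\emptyset$ — a degenerate case that is harmless since an empty deletion set arises only when $G$ is already a cluster graph.
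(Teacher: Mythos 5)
Your proposal is correct and follows essentially the same route as the paper: expressing the cluster-deletion and connectivity conditions as $MSOL_1$ formulas, then invoking the $LinEMSOL_{\tau_{1,p}}$ meta-theorem of Courcelle, Makowsky and Rotics on the linear-time-constructible $2$-expression of a cograph. (The paper additionally sketches a second, more elementary cotree-based algorithm in the appendices, but its proof of the proposition is the one you give.)
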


Another approach for obtaining the above results is to use the so-called cotree of cographs. Using the cotree of a cograph $G$, we are able to compute an optimal (connected) cluster vertex deletion set of $G$ in linear time in a direct and simple way. The details are given in the appendices~\ref{app:P4free} and~\ref{app:P4free-connected}.

\section{Cluster-VD and Connected Cluster-VD on dense graphs}\label{sec:3P1and2P2}
In this section, we give a polynomial reduction from \VC\ to \CVD, showing that \CVD\ remains $\NP$-complete when restricted to $\{3P_1, 2P_2\}$-free $n$-vertex graphs with minimum degree at least $n-4$.  

Recall that the \VC\ problem asks, for a given graph $G$ and an integer $k$, if one can delete a vertex set $S$ of size at most $k$ such that $G-S$ is edgeless. 
It is well known that \VC\ is $\NP$-complete and, assuming ETH, cannot be solved in $2^{o(n+m)}$ time on $n$-vertex $m$-edge graphs. This fact and a result in~\cite{JohnsonS99} imply that, assuming ETH, \VC\ cannot be solved in $2^{o(n)}$ time on subcubic $n$-vertex graphs. 
There is a polynomial-time reduction from \VC\ in cubic graphs to \VC\ in subcubic planar graphs with arbitrarily large girth, which transforms an instance $(G,k)$ of the first version to an equivalent instance $(G',k')$ for the second version, where the vertex number of $G'$ is linear in the vertex number of $G$ (see, e.g.,~\cite{Murphy92} or~\cite{Komusiewicz18}). Thus, we obtain:

\begin{theorem}[\cite{JohnsonS99,Murphy92,Komusiewicz18}]\label{thm:vc-3}
Let $g\ge 3$ be a fixed integer. \VC\ is $\NP$-complete even when restricted to subcubic graphs of girth $>g$ and, assuming ETH, \VC\ cannot be solved in  $2^{o(n)}$ time in this restricted graph class.
\end{theorem}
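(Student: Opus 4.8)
The plan is to compose the two reductions recalled just above the theorem; write $\tau(G)$ for the minimum size of a vertex cover of $G$. The starting point is fact~(a): by the ETH lower bound for \VC\ on sparse graphs together with the degree-reduction reduction of~\cite{JohnsonS99} (whose output has size linear in $n+m$, hence linear in $n$ on sparse inputs), \VC\ is $\NP$-complete on subcubic graphs and, under ETH, is not solvable in $2^{o(n)}$ time there. Since subcubic graphs of girth $>g$ form a subclass of subcubic graphs, $\NP$-hardness will transfer to them for free once we have a polynomial reduction into that subclass; and if that reduction blows the vertex number up only by a constant factor, the $2^{o(n)}$ lower bound transfers as well. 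Membership in $\NP$ is immediate. So the whole content is a linear-size, polynomial-time reduction from subcubic \VC\ to \VC\ on subcubic graphs of girth $>g$.

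I would realise this reduction by subdividing edges. Fix the constant $g\ge 3$; given a subcubic instance $(G,k)$ with $n$ vertices and $m$ edges, let $G'$ arise from $G$ by subdividing every edge exactly $2g$ times, i.e.\ replacing each edge $uv$ by a path on $2g$ new internal vertices between $u$ and $v$. Three things are routine: $G'$ is subcubic, because the new vertices have degree $2$ and the old ones keep their degrees; $girth(G')>g$, because every cycle of $G'$ is a cycle of $G$ with each edge subdivided $2g$ times and so has length at least $3(2g+1)=6g+3>g$ (with $girth(G')=\infty$ if $G$ is a forest); and $|V(G')|=n+2gm\le n+2g\cdot\frac{3}{2}n=(3g+1)n$ is linear in $n$ since $g$ is fixed. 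The substance is the identity $\tau(G')=\tau(G)+gm$: granting it, $(G,k)\mapsto(G',k+gm)$ is a correct reduction, and a $2^{o(|V(G')|)}$-time algorithm on the target class would, through it, solve subcubic \VC\ in $2^{o((3g+1)n)}=2^{o(n)}$ time, contradicting~(a). This gives both halves of the theorem.

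The one step needing care is $\tau(G')=\tau(G)+gm$. For ``$\le$'', I would extend a minimum vertex cover $C$ of $G$ by $g$ well-chosen internal vertices on each subdivided path --- for a path $u,x_1,\dots,x_{2g},v$ with, say, $v\in C$, the set $\{x_1,x_3,\dots,x_{2g-1}\}$ works --- spending exactly $gm$ extra vertices. For ``$\ge$'', I would note that every vertex cover $C'$ of $G'$ must put at least $g$ internal vertices on each subdivided path, and at least $g+1$ on a path both of whose original endpoints it misses; then $C'\cap V(G)$ together with one original endpoint of each such ``endpoint-free'' path is a vertex cover of $G$, and the internal-vertex bookkeeping shows its size is at most $|C'|-gm$. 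The reason these two bounds meet is that the number of subdivisions per edge is \emph{even}; an odd number would shift the per-path count by one in some cases and break the clean formula, so this parity check is where a little attention is required. One may instead simply cite the corresponding lemma from~\cite{Murphy92} or~\cite{Komusiewicz18}, which additionally shows that planarity can be kept throughout (not needed for the present theorem).
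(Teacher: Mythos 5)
Your proposal is correct and matches the paper's approach: the paper obtains this theorem by exactly the composition you describe, namely ETH-hardness of \VC\ on subcubic graphs via~\cite{JohnsonS99} followed by the linear-size, girth-increasing edge-subdivision reduction cited from~\cite{Murphy92,Komusiewicz18}. The only difference is that you spell out the subdivision lemma $\tau(G')=\tau(G)+gm$ (correctly, including the even-parity point) where the paper simply cites it.
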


We now describe the announced reduction. Let $g\ge 3$ be an integer and let $(G,k)$ be an instance for \VC, where $G$ is a $n$-vertex subcubic graph with girth $>g$. We may assume that 

\begin{itemize}
\item $G$ is not perfect. 
This is because \VC\ is polynomially solvable on perfect graphs (see~\cite{GLS1988}); notice that $G$ is perfect if and only if $\overline{G}$ is perfect and perfect graphs can be recognized in polynomial time~\cite{ChudnovskyCLSV05},  and

\item $k\le |V(G)|/2$. This fact 
can be easily seen as follows: given $G$ with $n$ vertices and an integer $k$, let $G'$ be obtained from $G$ by adding $p=\max\{0,2k-n\}$ isolated vertices. Then $k=|V(G')|/2$ and $(G,k)\in\VC$ if and only if $(G',k)\in\VC$. Notice that like $G$, $G'$ is subcubic, not perfect and has girth~$>g$, too. 
\end{itemize}

From $(G,k)$ we construct an equivalent instance $(G',k')$ for \CVD\ as follows: $G'$ is obtained from two disjoint copies of $\overline{G}$, $G_1$ and $G_2$, by adding all possible edges between $V(G_1)$ and $V(G_2)$. Set $k'=2k$. 

We argue that $(G,k)\in\VC$ if and only $(G',k')\in \CVD$. First, let $S\subset V(G)$ be a vertex cover, that is $G-S$ is edgeless, with $|S|\le k$. Let $S_1$ and $S_2$ be the copy of $S$ in~$G_1$ and~$G_2$, respectively. 
Then, for each $i\in\{1,2\}$, $G_i-S_i$ is a clique in $G_i=\overline{G}$, and with $S'=S_1\cup S_2$, $G'-S'$ is a clique in~$G'$ with $|S'|=2|S|\le 2k=k'$. 

Conversely, let $S'\subseteq V(G')$ be a cluster vertex deletion set of $G'$ with $|S'|\le k'=2k$. 
Observe that, for each $i\in\{1,2\}$, $S'\cap V(G_i)$ is a proper nonempty subset of $V(G_i)$: if for some $i$, $S'\cap V(G_i)=\emptyset$ then $G_i$ (hence $G$) would be perfect because in this case $G_i$ would be a cluster, and if $V(G_i)\subset S'$ then $2k\ge |S'|>|V(G_i)|=|V(G)|$, contradicting $k\le|V(G)|/2$. 
It follows from the above that $G'-S'$ is a single clique, implying for each $i\in\{1,2\}$, $G_i-S_i$ is a clique in $G_i$ where $S_i=S'\cap V(G_i)$. Since $|S'|\le 2k$, $|S_1|\le k$ or $|S_2|\le k$. Let $|S_1|\le k$, say, and let $S\subseteq V(G)$ be the set of the corresponding vertices in $G$. Then $G-S$ is edgeless with $|S|\le k$. 

We have seen that $G$ has a vertex cover of size at most $k$ if and only if $G'$ has a cluster vertex deletion set of size at most $k'$, as claimed.

Note that $G'$ has $2n$ vertices and minimum degree at least~$2n-4$ (as $G$ has $n$ vertices and maximum degree at most~$3$). 
Now, observe that, for any \emph{connected} graph $X$, if $G$ is $X$-free then $G'$ is $\overline{X}$-free.  
Since $G$ is $\{C_3,C_4,\ldots,C_g\}$-free, we obtain with Theorem~\ref{thm:vc-3}:
\begin{theorem}\label{thm:3P1and2P2}
For any fixed $g\ge 3$, \CVD\ is $\NP$-complete on $\{\overline{C_3}, \overline{C_4}, \allowbreak\ldots, \allowbreak\overline{C_g}\}$-free $n$-vertex graphs with minimum degree at least~$n-4$ and, assuming ETH, cannot be solved in $2^{o(n)}$ time. 

In particular, \CVD\ is $\NP$-complete on $\{3P_1, 2P_2\}$-free graphs and, assuming ETH, cannot be solved in $2^{o(n)}$ time. 
\end{theorem}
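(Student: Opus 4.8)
The plan is to package the reduction $(G,k)\mapsto(G',k')$ described above into a proof that simply reads off the stated complexity facts from Theorem~\ref{thm:vc-3}, which asserts that for every fixed $g\ge 3$ \VC\ is $\NP$-complete on subcubic graphs of girth $>g$ and, under ETH, admits no $2^{o(n)}$ algorithm there. Membership of \CVD\ in $\NP$ is immediate, so only hardness and the lower bound need to be transferred.

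First I would re-examine the two normalizing assumptions on the source instance — that $G$ is not perfect and that $k\le|V(G)|/2$ — and check they are without loss of generality: a perfect $G$ is already handled by the polynomial algorithm for \VC\ on perfect graphs, and padding $G$ with $\max\{0,2k-n\}$ isolated vertices forces $k=|V(G)|/2$ while keeping $G$ subcubic, imperfect and of girth $>g$ (adding isolated vertices creates neither a new induced subgraph nor a new cycle). Then I would verify the equivalence $(G,k)\in\VC \iff (G',k')\in\CVD$ exactly as argued in the text: forwards, deleting the two copies of a vertex cover leaves a clique completely joined to a clique, hence a clique; backwards, the two assumptions force any cluster vertex deletion set $S'$ of $G'$ with $|S'|\le 2k$ to meet each copy $V(G_i)$ in a proper nonempty subset, whence $G'-S'$ is a single clique, each $G_i-S_i$ is a clique in $\overline G$ (so $S_i$ covers all edges of $G$), and the smaller of $S_1,S_2$ is a vertex cover of $G$ of size $\le k$.

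Next I would record the structural properties of $G'$. It has $n'=2|V(G)|$ vertices, and since $G$ is subcubic, inside each copy $\overline{G_i}$ a vertex is non-adjacent to at most $3$ others while being adjacent to the entire other copy, so $\delta(G')\ge 2|V(G)|-4=n'-4$. For the forbidden induced subgraphs, observe that $\overline{G'}=G_1+G_2$ is a disjoint union of two copies of $G$; for any \emph{connected} graph $X$, a copy of $X$ sits as an induced subgraph of $G_1+G_2$ iff it sits in $G$ (a connected induced subgraph lies entirely in one component), and complementing this gives: $G$ is $X$-free $\Rightarrow$ $G'$ is $\overline X$-free. Applying this with $X=C_j$ for $3\le j\le g$ (each $C_j$ is connected and $G$, having girth $>g$, is $C_j$-free) shows $G'$ is $\{\overline{C_3},\dots,\overline{C_g}\}$-free.

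Putting it together, the reduction is polynomial and linear in the vertex number ($n'=2|V(G)|$), so the $\NP$-hardness and, under ETH, the $2^{o(n)}$ lower bound carry over to \CVD\ on $\{\overline{C_3},\dots,\overline{C_g}\}$-free $n$-vertex graphs with minimum degree $\ge n-4$; the ``in particular'' clause is then the case $g=4$, using $\overline{C_3}=3P_1$ and $\overline{C_4}=2P_2$. I do not expect a genuine obstacle here; the one point that wants care is the reverse direction of the correctness proof, namely confirming that \emph{both} normalizing assumptions are really used — without ``$G$ not perfect'' a cluster vertex deletion set of $G'$ could avoid one copy entirely, and without ``$k\le|V(G)|/2$'' it could swallow one copy entirely — and that once these two degenerate possibilities are excluded $G'-S'$ is indeed forced to be a single clique.
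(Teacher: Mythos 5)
Your proposal is correct and follows essentially the same route as the paper: the identical two-copies-of-$\overline{G}$ construction with the same normalizing assumptions ($G$ imperfect, $k\le|V(G)|/2$), the same forward/backward correctness argument, and the same observations on minimum degree and $\overline{X}$-freeness via $\overline{G'}=G_1+G_2$. The only detail worth noting is that the ``in particular'' clause is indeed just the instance $g=4$ with $\overline{C_3}=3P_1$ and $\overline{C_4}=2P_2$, exactly as the paper intends.
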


We observe that the proof of Theorem~\ref{thm:3P1and2P2} remains true for connected cluster vertex deletion sets: $G$ has a vertex cover of size at most $k\le |V(G)|/2$ if and only if $G'$ has a \emph{connected} cluster vertex deletion set of size at most $k'=2k$. Thus, Theorem~\ref{thm:3P1and2P2} also holds for \CCVD:

\begin{theorem}\label{thm:3P1and2P2-ccvd}
For any fixed $g\ge 3$, \CCVD\ is $\NP$-complete on $\{\overline{C_3}, \overline{C_4}, \allowbreak\ldots, \allowbreak\overline{C_g}\}$-free $n$-vertex graphs with minimum degree at least~$n-4$ and, assuming ETH, cannot be solved in $2^{o(n)}$ time. 

In particular, \CCVD\ is $\NP$-complete on $\{3P_1, 2P_2\}$-free graphs and, assuming ETH, cannot be solved in $2^{o(n)}$ time. 
\end{theorem}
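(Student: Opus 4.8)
The plan is to reuse, verbatim, the reduction from \VC\ to \CVD\ that was just used to prove Theorem~\ref{thm:3P1and2P2}. So, starting from an instance $(G,k)$ of \VC\ with $G$ an $n$-vertex subcubic non-perfect graph of girth $>g$ and $k\le n/2$, I form $G'$ as the join of two disjoint copies $G_1,G_2$ of $\overline{G}$ and set $k'=2k$. Since $G'$ is literally the same graph as in that proof, it still has $2n$ vertices, minimum degree at least $2n-4$, and is $\{\overline{C_3},\overline{C_4},\ldots,\overline{C_g}\}$-free (in particular $\{3P_1,2P_2\}$-free when $g=3$), and the ETH-based lower bound for \VC\ on this class transfers exactly as before. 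Hence it suffices to verify the equivalence $(G,k)\in\VC \iff (G',k')\in\CCVD$.

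For the ``only if'' direction I would argue as follows. Let $S\subseteq V(G)$ be a vertex cover with $|S|\le k$; because $G$ is not perfect it has at least one edge, so $S\neq\emptyset$. Let $S_1,S_2$ be the copies of $S$ inside $G_1,G_2$ and put $S'=S_1\cup S_2$. Exactly as in the proof of Theorem~\ref{thm:3P1and2P2}, $G'-S'$ is a single clique and $|S'|=2|S|\le k'$. The extra point to check is connectivity of $G'[S']$: since $S$ is nonempty, $S_1$ and $S_2$ are both nonempty, and by the construction of the join every vertex of $S_1$ is adjacent in $G'$ to every vertex of $S_2$; thus $G'[S']$ contains a complete bipartite graph on the nonempty parts $S_1,S_2$ and is therefore connected. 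Hence $S'$ is a \emph{connected} cluster vertex deletion set of $G'$ of size at most $k'$.

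For the ``if'' direction no new work is needed: any connected cluster vertex deletion set of $G'$ of size at most $k'$ is, in particular, a cluster vertex deletion set of $G'$ of size at most $k'$, so $(G',k')\in\CCVD$ implies $(G',k')\in\CVD$, and the argument in the proof of Theorem~\ref{thm:3P1and2P2} then yields $(G,k)\in\VC$. Putting the two directions together gives the claimed equivalence, and with it $\NP$-completeness and the $2^{o(n)}$ lower bound for \CCVD\ on the stated graph classes.

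I do not anticipate a genuine obstacle here; the whole content is that the deletion set produced by the reduction happens to be connected. The only point that needs (minor) care is the non-emptiness of $S$ in the first direction, which is exactly what the standing assumption ``$G$ is not perfect'' buys us, together with the trivial observation that the $\overline{C_i}$-freeness and minimum-degree properties are inherited unchanged because $G'$ is the same graph as in the \CVD\ reduction.
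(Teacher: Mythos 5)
Your proposal is correct and is essentially the paper's own argument: the paper simply observes that the reduction of Theorem~\ref{thm:3P1and2P2} carries over verbatim to the connected variant, and the connectivity check you supply (both copies $S_1,S_2$ nonempty and completely joined, so $G'[S']$ is connected) together with the trivial backward direction is exactly the content of that observation.
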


\section{Cluster-VD and Connected Cluster-VD on sparse graphs}\label{sec:girth}
In~\cite[Lemma 1]{Yannakakis81a}, Yannakakis gave a polynomial-time reduction from \NAE3SAT\ to \CVD, which transforms an instance for \NAE3SAT\ with $n$ variables and $m$ clauses, into an equivalent instance $(G,k)$ for \CVD, where $G$ is a bipartite graph with $6n+12m$ vertices. 
Thus, by Theorem~\ref{thm:nae3sat}, 
\CVD\ is $\NP$-complete even when restricted to bipartite graphs and, assuming ETH, \CVD\ cannot be solved in $2^{o(n)}$ time on bipartite graphs with $n$ vertices. 

We remark that by considering \textsc{$(4,4)$-nae 3sat} instead of \NAE3SAT, the bipartite graph obtained from the reduction of Yannakakis mentioned above has maximum degree at most four. Thus, by Theorem~\ref{thm:(4,4)-nae3sat}, we obtain:

\begin{theorem}[\cite{Yannakakis81a}]\label{thm:cvd-4}
\CVD\ is $\NP$-complete even when restricted to $n$-vertex bipartite graphs of maximum degree at most~$4$ and, assuming ETH, cannot be solved in $2^{o(n)}$ time.
\end{theorem}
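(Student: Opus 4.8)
The plan is to re-run the reduction of Yannakakis~\cite[Lemma 1]{Yannakakis81a} from \NAE3SAT\ to \CVD, but to start it from the bounded-occurrence variant \textsc{$(4,4)$-nae 3sat} and then observe that the target graph has bounded degree. So I would take an instance $F$ of \textsc{$(4,4)$-nae 3sat} with $n$ variables and $m$ clauses; by Theorem~\ref{thm:(4,4)-nae3sat} we may assume $m=O(n)$, and by definition each variable occurs at most four times positively and at most four times negatively. Feeding $F$ to Yannakakis's construction yields, in polynomial time, a bipartite graph $G$ on $6n+12m$ vertices together with an integer $k$ such that $F$ has an nae assignment if and only if $(G,k)\in\CVD$. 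The correctness of this equivalence, and the bipartiteness of $G$, are exactly Yannakakis's lemma and require nothing new; likewise $|V(G)|=6n+12m=O(n)$ is immediate from $m=O(n)$.

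The one point that needs genuine verification is the maximum degree of $G$. In Yannakakis's reduction each variable $x$ is encoded by a fixed constant-size gadget in which the edges carrying the positive occurrences of $x$ are incident to one prescribed vertex (and the negative occurrences to another), every remaining gadget vertex has degree bounded by the fixed internal structure, and each clause is encoded by a fixed constant-size gadget carrying just three external edges, one to each of its literals' variable gadgets. Consequently the degree of any vertex of $G$ is at most its (constant) internal degree plus the number of clause-edges attached to it, which for the occurrence-carrying vertices is at most four. I would carry out the bookkeeping on the concrete gadgets of~\cite{Yannakakis81a} to confirm that all these sums are at most four, which is precisely the content of the remark preceding the statement.

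The ETH lower bound then follows formally: an algorithm deciding \CVD\ on $N$-vertex bipartite graphs of maximum degree at most four in time $2^{o(N)}$ would, composed with the above reduction, decide \textsc{$(4,4)$-nae 3sat} on $n$-variable inputs in time $2^{o(n)}$ (using $N=6n+12m=O(n)$), contradicting Theorem~\ref{thm:(4,4)-nae3sat}.

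I expect the main (and essentially only) obstacle to be the degree accounting inside Yannakakis's gadgets: one has to inspect the explicit construction and make sure that no vertex ends up with degree five, i.e.\ that the occurrence-carrying vertices have internal degree zero (or small enough to absorb the bound four) and that the clause gadget is chosen so that its internal vertices also stay within degree four. Should the construction as literally stated overshoot by one at some vertex, a small local surgery — replacing an overloaded occurrence vertex by two copies and redistributing the clause-edges, padded so as not to create a new induced $P_3$ outside the intended deletion set — would repair it without changing correctness; but I anticipate that the bound holds directly, as asserted.
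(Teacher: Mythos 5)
Your proposal matches the paper's argument exactly: the paper likewise obtains Theorem~\ref{thm:cvd-4} by feeding \textsc{$(4,4)$-nae 3sat} instances (Theorem~\ref{thm:(4,4)-nae3sat}) into Yannakakis's reduction, observing that the resulting bipartite graph on $6n+12m=O(n)$ vertices then has maximum degree at most four, with the ETH bound following by composition. The degree bookkeeping you flag as the one point needing verification is precisely what the paper asserts in the remark preceding the theorem, so there is nothing further to add.
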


In~\cite{HsiehLLP22}, Hsieh, Le, Le and Peng 
gave another polynomial-time reduction from \NAE3SAT\ to \CVD, which transforms an instance for \NAE3SAT\ with $n$ variables and $m$ clauses, into an equivalent instance $(G,k)$ for \CVD, where~$G$ is a subcubic bipartite graph with $6nm+30m$ vertices. Recall that we may assume (by the Sparsification Lemma) that $m=O(n)$.  
Thus, by Theorem~\ref{thm:nae3sat}, we obtain:

\begin{theorem}[\cite{HsiehLLP22}]\label{thm:cvd-3}
\CVD\ is $\NP$-complete even when restricted to subcubic $n$-vertex bipartite graphs and, assuming ETH, cannot be solved in time $2^{o(\sqrt{n})}$.
\end{theorem}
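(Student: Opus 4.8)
The plan is to combine the polynomial-time reduction from \NAE3SAT\ to \CVD\ of Hsieh, Le, Le and Peng~\cite{HsiehLLP22}, recalled in the paragraph preceding the theorem, with the conditional lower bound of Theorem~\ref{thm:nae3sat}; the only additional ingredient is the bookkeeping that converts the quadratic size blow-up of that reduction into a loss of a square root in the exponent. Throughout, $n$ and $m$ denote the number of variables and clauses of the \NAE3SAT\ instance, and $N$ the number of vertices of the graph produced by the reduction.

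First I would dispose of membership in $\NP$: given a candidate set $S$, one checks $|S|\le k$ and tests that $G-S$ is $P_3$-free, both in polynomial time. For $\NP$-hardness, I would invoke the reduction of~\cite{HsiehLLP22}: it maps an \NAE3SAT\ formula $F$ with $n$ variables and $m$ clauses, in polynomial time, to an instance $(G,k)$ of \CVD\ in which $G$ is subcubic and bipartite with $N = 6nm+30m$ vertices, and $F$ has an nae assignment if and only if $(G,k)\in\CVD$. Together with Theorem~\ref{thm:nae3sat}, this gives that \CVD\ is $\NP$-complete already on subcubic bipartite graphs.

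For the ETH lower bound I would argue by contraposition. Suppose \CVD\ on subcubic bipartite $N$-vertex graphs can be solved in time $2^{o(\sqrt N)}$. By the Sparsification Lemma~\cite{ImpagliazzoPZ01} we may restrict to \NAE3SAT\ instances $F$ with $m = O(n)$; for such $F$ the reduction outputs a graph $G$ with $N = 6nm+30m = O(n^2)$ vertices, hence $\sqrt N = O(n)$. Applying the assumed algorithm to $(G,k)$ would then decide nae-satisfiability of $F$ in time $\mathrm{poly}(n+m) + 2^{o(\sqrt N)} = 2^{o(n)} = 2^{o(n+m)}$, contradicting Theorem~\ref{thm:nae3sat}. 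Thus no $2^{o(\sqrt N)}$-time algorithm exists unless ETH fails, which is exactly the asserted bound (with $N$ the vertex number of the input graph).

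The one genuinely nontrivial ingredient is the step imported as a black box, namely the construction in~\cite{HsiehLLP22} of a reduction that is simultaneously bipartite and of maximum degree three. The classical reduction from \NAE3SAT\ to \CVD\ (such as the one of Yannakakis underlying Theorem~\ref{thm:cvd-4}) is already bipartite but has unbounded degree, since a variable may occur in many clauses; lowering the degree to three forces each variable to be routed through a long consistency gadget (a path or cycle of literal copies, one per occurrence), and this replacement is precisely what keeps the graph bipartite while inflating the vertex count from linear to quadratic in $n+m$ — which is why the resulting conditional lower bound is only $2^{o(\sqrt n)}$ rather than $2^{o(n)}$. Once that reduction is granted, all the remaining work is the routine complexity transfer sketched above.
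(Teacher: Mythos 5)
Your proposal is correct and follows essentially the same route as the paper: it cites the subcubic bipartite reduction of Hsieh, Le, Le and Peng producing a graph with $6nm+30m$ vertices, applies the Sparsification Lemma to get $m=O(n)$ and hence $N=O(n^2)$, and transfers the $2^{o(n+m)}$ lower bound for \NAE3SAT\ into the stated $2^{o(\sqrt{N})}$ bound. The only (harmless) difference is that you spell out the contrapositive bookkeeping and the $\NP$-membership check, which the paper leaves implicit.
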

 
In this section, we will further improve Theorems~\ref{thm:cvd-4} and~\ref{thm:cvd-3} by Theorems~\ref{thm:girth} and~\ref{thm:girth-3}, respectively. We begin with the following fact.

\begin{lemma}\label{lem:3-division}
Given a graph $G$, let $G'$ be obtained from $G$ by subdividing each edge $e=xy$ in~$G$ with three new vertices $e_x, e_{xy}$ and $e_y$, thus obtaining the $5$-vertex path $xe_xe_{xy}e_yy$ in~$G'$ in which all new vertices are of degree~$2$. Assuming $G$ is triangle-free, $G$ has a cluster vertex deletion set of size at most~$k$ if and only if~$G'$ has a cluster vertex deletion set of size at most $k+m$, where $m$ is the edge number of $G$.
\end{lemma}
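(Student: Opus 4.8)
The plan is to prove the two implications separately; the triangle-freeness of $G$ will be needed only for the forward implication. Throughout, for an edge $e=xy$ of $G$ I refer to $e_x,e_{xy},e_y$ as the three \emph{subdivision vertices} of $e$, and I record the three facts I will keep using: in $G'$ the three vertices $e_x,e_{xy},e_y$ induce a $P_3$ (the path $e_x e_{xy} e_y$), each of them has degree $2$, and the subdivision-vertex triples of distinct edges are pairwise disjoint and partition $V(G')\setminus V(G)$.

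\textbf{From $G$ to $G'$.} Suppose $S$ is a cluster vertex deletion set of $G$ with $|S|\le k$. Since $G$ is triangle-free, $G-S$ is both $P_3$-free and triangle-free, so every component of $G-S$ is a $K_1$ or a $K_2$; let $M$ be the matching $E(G-S)$ on $V(G)\setminus S$. I would then form $S'$ by selecting, for each edge $e=xy$ of $G$, exactly one subdivision vertex of $e$: the middle vertex $e_{xy}$ when $e\in M$ or when both $x,y\in S$, and otherwise the subdivision vertex adjacent to the unique endpoint of $e$ that is \emph{not} in $S$. Set $S'=S\cup\{\text{the }m\text{ selected vertices}\}$; the selected vertices are distinct (one per edge, from disjoint triples) and lie outside $V(G)\supseteq S$, so $|S'|=|S|+m\le k+m$. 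The verification that $G'-S'$ is a disjoint union of cliques is then routine bookkeeping: every original vertex $x\notin S$ retains in $G'-S'$ at most one neighbor, namely the vertex $e_x$ coming from its (at most one, since $M$ is a matching) incident matching edge $e$, and that $e_x$ is a leaf; all other surviving subdivision vertices form $K_1$'s and $K_2$'s among themselves. Hence $G'-S'$ is $P_3$-free.

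\textbf{From $G'$ to $G$.} Suppose $S'$ is a cluster vertex deletion set of $G'$ with $|S'|\le k+m$. For each edge $e$ put $s_e=|S'\cap\{e_x,e_{xy},e_y\}|$; since $e_xe_{xy}e_y$ induces a $P_3$ we have $s_e\ge 1$. Writing $T=S'\cap V(G)$ and $B=\{e\in E(G):s_e\ge 2\}$ and using that the subdivision triples partition $V(G')\setminus V(G)$, one gets $|S'|=|T|+\sum_e s_e\ge|T|+m+|B|$, hence $|T|+|B|\le k$. I would then set $\widehat S=T\cup\{v_e:e\in B\}$, where $v_e$ is an arbitrarily chosen endpoint of $e$; then $|\widehat S|\le|T|+|B|\le k$. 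It remains to show $G-\widehat S$ is $P_3$-free. Suppose not, and let $a,b,c$ induce a $P_3$ in $G-\widehat S$, say with $ab,bc\in E(G)$ and $ac\notin E(G)$. Then $a,b,c\notin T$, hence $a,b,c\notin S'$, and concatenating the two $5$-vertex paths of $G'$ arising from the edges $ab$ and $bc$ yields the path $a,(ab)_a,(ab)_{ab},(ab)_b,b,(bc)_b,(bc)_{bc},(bc)_c,c$, which is an \emph{induced} $P_9$ in $G'$ (chordless precisely because $ac\notin E(G)$ and all subdivision vertices have degree $2$). A $P_9$ cannot be made $P_3$-free by deleting fewer than three of its vertices, and since none of $a,b,c$ lies in $S'$ all those deletions fall among the six subdivision vertices, so $s_{ab}+s_{bc}\ge 3$; therefore $s_{ab}\ge 2$ or $s_{bc}\ge 2$, that is $ab\in B$ or $bc\in B$, and then one endpoint of that edge — one of $a,b,c$ — lies in $\widehat S$, a contradiction.

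\textbf{Main obstacle.} I expect the converse implication to be the delicate part: one has to find the right accounting for the deletions that $S'$ ``wastes'' inside a single subdivided edge (this is exactly the set $B$ and the inequality $|T|+|B|\le k$), and then argue, via the induced $P_9$, that every such waste can be charged to one of the two original endpoints of the offending edge. The forward implication is mostly bookkeeping once the $K_1/K_2$-component structure of $G-S$ is in hand, and it is there, and only there, that triangle-freeness is used: a subdivided triangle is a $C_{12}$, which requires four deletions, whereas its three edges supply only three.
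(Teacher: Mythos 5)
Your proof is correct. The forward direction coincides with the paper's construction: for each edge $e=xy$ you delete the middle vertex $e_{xy}$ when $x$ and $y$ agree on membership in $S$, and otherwise the subdivision vertex next to the endpoint outside $S$; both you and the paper use triangle-freeness only here, via the observation that the components of $G-S$ are $K_1$'s and $K_2$'s. The converse is where you genuinely diverge. The paper first \emph{normalizes}: it takes $S'$ of minimum size, notes $1\le s_e\le 2$ for every edge, and runs a three-case exchange argument (swapping a subdivision vertex for an endpoint of $e$) to reach a minimum solution with $s_e=1$ for all $e$, after which $S'\cap V(G)$ works directly. You skip the normalization entirely: the inequality $|T|+|B|\le k$ extracts the budget from an \emph{arbitrary} solution of size at most $k+m$, each overloaded edge in $B$ is charged to one of its endpoints, and the induced-$P_9$ count (at least three deletions are needed, none of which can be $a$, $b$ or $c$) shows that every surviving $P_3$ of $G-\widehat S$ would force an edge of $B$ with an endpoint among $\{a,b,c\}$, a contradiction. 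Your version buys a shorter argument with no minimality assumption on $S'$ and no case analysis, at the cost of the slightly more global $P_9$ counting step; the paper's normalization, on the other hand, yields the cleaner structural statement that one may always assume exactly one subdivision vertex per edge is deleted. Both arguments are sound, and your observation that triangle-freeness is needed only in the forward direction (the subdivided triangle $C_{12}$ needing four deletions) is accurate.
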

\begin{proof}
Observe that since $G$ is triangle-free, a cluster in $G$ is a collection of isolated vertices and edges. 

For one direction, extend a cluster vertex deletion set $S\subseteq V(G)$ to a cluster vertex deletion set $S'\subseteq V(G')$ of size $|S|+m$ as follows; see also Fig.~\ref{fig:3-division}: initially, set $S'=S$. Then, for each edge $e=xy$ in $G$,
\begin{itemize}
\item if both $x$ and $y$ are in $S$ or outside $S$, put $e_{xy}$ into $S'$;
\item if $x\in S$ and $y\notin S$, put $e_y$ into $S'$;
\item if $x\notin S$ and $y\in S$, put $e_x$ into $S'$.
\end{itemize}
To see that $G'-S'$ is $P_3$-free, notice that by construction, for each edge $e=xy$ in~$G$, exactly one of $e_x, e_{xy}$ and $e_y$ is in~$S'$, and if $e_x, e_{xy}\notin S'$ then $x\in S$, and if $e_x, x\notin S'$ then~$y\notin S$, hence~$e_{xy}\in S'$. 
Since each $P_3$ in $G'$ has the form $xe_xe_{xy}$, $e_xe_{xy}e_y$ or $e_xxe'_x$ for some edge $e=xy$ and $e'=xz$, 
it follows from these facts and the assumption that $G$ is triangle-free that $G'-S'$ is $P_3$-free. 

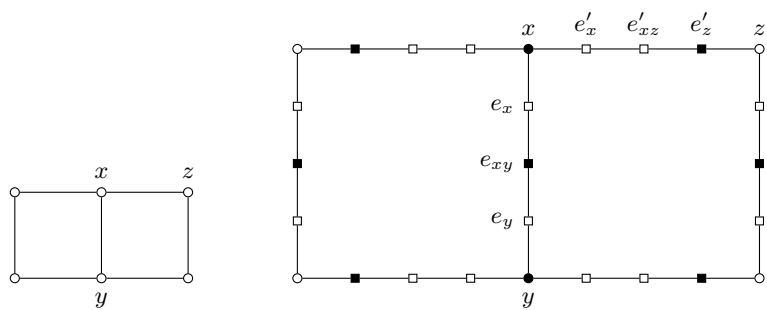
\begin{figure}[!ht]
\begin{center}
\begin{tikzpicture}[scale=.38] 
\node[vertex] (g1) at (1,1)  {}; 
\node[vertex] (g2) at (4,1)  [label=below:{\small $y$}] {};
\node[vertex] (g3) at (7,1)  {}; 
\node[vertex] (g4) at (7,4)  [label=above:{\small $z$}] {};
\node[vertex] (g5) at (4,4)  [label=above:{\small $x$}] {};
\node[vertex] (g6) at (1,4) {};
\draw (g1)--(g2)--(g3)--(g4)--(g5)--(g6)--(g1); \draw (g2)--(g5);
\end{tikzpicture}
\qquad\quad
\begin{tikzpicture}[scale=.38] 
\node[vertex] (g1) at (1,1)  {}; 
\node[vertexS] (g2) at (9,1)  [label=below:{\small $y$}] {};
\node[vertex] (g3) at (17,1)  {}; 
\node[vertex] (g4) at (17,9)  [label=above:{\small $z$}] {};
\node[vertexS] (g5) at (9,9)  [label=above:{\small $x$}] {};
\node[vertex] (g6) at (1,9) {}; 
\node[squareS] (g'12) at (3,1)  {};
\node[square] (12) at (5,1)  {};
\node[square] (g'21) at (7,1)  {}; 
\draw (g1)--(g'12)--(12)--(g'21)--(g2);
\node[square] (g'23) at (11,1)  {};
\node[square] (23) at (13,1)  {};
\node[squareS] (g'32) at (15,1)  {}; 
\draw (g2)--(g'23)--(23)--(g'32)--(g3);
\node[square] (g'34) at (17,3)  {};
\node[squareS] (34) at (17,5)  {};
\node[square] (g'43) at (17,7)  {}; 
\draw (g3)--(g'34)--(34)--(g'43)--(g4);
\node[squareS] (g'45) at (15,9)  [label=above:{\small $e'_z$}] {};
\node[square] (45) at (13,9)  [label=above:{\small $e'_{xz}$}] {};
\node[square] (g'54) at (11,9)  [label=above:{\small $e'_x$}] {};
\draw (g4)--(g'45)--(45)--(g'54)--(g5);
\node[square] (g'56) at (7,9)  {};
\node[square] (56) at (5,9)  {};
\node[squareS] (g'65) at (3,9)  {}; 
\draw (g5)--(g'56)--(56)--(g'65)--(g6);
\node[square] (g'61) at (1,7)  {};
\node[squareS] (61) at (1,5)  {};
\node[square] (g'16) at (1,3)  {}; 
\draw (g6)--(g'61)--(61)--(g'16)--(g1);
\node[square] (g'25) at (9,3)  [label=left:{\small $e_y$}] {};
\node[squareS] (25) at (9,5)  [label=left:{\small $e_{xy}$}] {};
\node[square] (g'52) at (9,7)  [label=left:{\small $e_x$}] {};
\draw (g2)--(g'25)--(25)--(g'52)--(g5);
\end{tikzpicture} 
\caption{Proof of Lemma~\ref{lem:3-division} illustrated: A triangle-free graph $G$ (left) with two highlighted edges $e=xy$ and $e'=xz$, and the graph $G'$ obtained from $G$ as described in Lemma~\ref{lem:3-division} (right); the cluster vertex deletion set $S=\{x,y\}$ of $G$ is extended to the cluster vertex deletion set $S'$ of $G'$ consisting of the nine black vertices.}\label{fig:3-division}
\end{center}
\end{figure}

For the other direction, suppose that $G'$ has a cluster vertex deletion set of size at most $k+m$, and consider such a set~$S'$ of minimum size. Then, we may assume that, for each edge $e=xy$ in $G$, $S'$ contains exactly one of $e_x, e_{xy}$ and~$e_y$: note that $e_xe_{xy}e_y$ is a~$P_3$, hence $|S'\cap\{e_x,e_{xy},e_y\}|\ge 1$, and by minimality, $|S'\cap\{e_x,e_{xy},e_y\}|\le 2$. 
Now, if $|S'\cap\{e_x,e_{xy},e_y\}| = 2$ for some edge $e=xy$ in $G$, then $S'$ can be modified to a minimum cluster vertex deletion set containing exactly one of $e_x, e_{xy}$ and~$e_y$ as follows: 
\begin{itemize}
\item suppose that $e_x, e_{xy}\in S'$. Then $x, y\not\in S'$ (if $x\in S'$ then $S'-e_x$ would be a cluster vertex deletion set of $G'$, and if $y\in S'$ then $S'-e_{xy}$ would be a cluster vertex deletion set of $G'$, contradicting the minimality of $S'$), and $S''=S'-e_{xy}+y$ is the desired cluster vertex deletion set of minimum size;
\item suppose that $e_y, e_{xy}\in S'$. Then similar to the above case, $x, y\not\in S'$, and $S''=S'-e_{xy}+x$ is the desired cluster vertex deletion set of minimum size;
\item suppose that $e_x,e_y\in S'$. Then $x,y\notin S'$ (if $x\in S'$ or $y\in S'$ then $S''=S'-e_x$, respectively $S''=S'-e_y$, would be a cluster vertex deletion set of $G'$, contradicting the minimality of $S'$), and $S''=S'-e_x+x$ is the desired cluster vertex deletion set of minimum size. 
\end{itemize}
Hence, $S=S'\cap V(G)$ has at most $k$ vertices, and $G-S$ is $P_3$-free: if there would be an induced $P_3$ $xyz$ in~$G$ with edges $e=xy$ and $e'=yz$, then, as $|S'\cap\{e_x,e_{xy},e_y\}|=1\allowbreak=|S'\cap\{e'_y,e'_{yz},e'_z\}|$, one of the $3$-paths  $xe_xe_{xy}$, $e_yye'_y$ and $e'_{yz}e'_zz$ would be outside $S'$.

Thus, $G$ has a cluster vertex deletion set of size at most $k$ if and only if $G'$ has a cluster vertex deletion set of size at most $k+m$, as claimed.
\end{proof}

We now show that, for any given tree~$T$ containing two vertices of degree~$ 3$, \CVD\ remains $\NP$-complete when restricted to $T$-free bipartite graphs of maximum degree~$4$ and with arbitrarily large girth. 
 
\begin{theorem}\label{thm:girth}
For any given integer $g\ge 3$ and any given tree $T$ containing two degree-$3$ vertices, \CVD\ is $\NP$-complete on $T$-free $n$-vertex bipartite graphs of maximum degree at most~$4$ and with girth~$>g$ and, assuming ETH,  cannot be solved in $2^{o(n)}$ time. 
\end{theorem}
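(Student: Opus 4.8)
The plan is to start from Theorem~\ref{thm:cvd-4} and apply the $3$-subdivision of Lemma~\ref{lem:3-division} a constant number of times. Given an instance $(G,k)$ of \CVD\ with $G$ an $n$-vertex bipartite graph of maximum degree at most~$4$, I would put $G_0=G$ and, for $i\ge 0$, let $G_{i+1}$ be obtained from $G_i$ by subdividing every edge $xy$ into a path $xe_xe_{xy}e_yy$ as in Lemma~\ref{lem:3-division}. First I would record the invariants that make this iteration legal: every $G_i$ is bipartite, since a cycle of length $\ell$ in $G_i$ becomes a cycle of length $4\ell$ in $G_{i+1}$, so $G_{i+1}$ has only even cycles; hence every $G_i$ is triangle-free and Lemma~\ref{lem:3-division} applies at each step. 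Moreover every $G_i$ has maximum degree at most~$4$, because the $3$-subdivision keeps the degrees of old vertices and creates only new vertices of degree~$2$; and writing $m_i=|E(G_i)|$ we have $m_{i+1}=4m_i$ and $|V(G_{i+1})|=|V(G_i)|+3m_i$, so after $t$ iterations $G_t$ has $n+m\,(4^t-1)/3$ vertices, where $m=|E(G)|\le 2n$.

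Next I would control girth and $T$-freeness. Subdividing every edge multiplies the girth by~$4$ (there is a bijection between the cycles of $G_i$ and those of $G_{i+1}$, each lengthened by a factor $4$), so $girth(G_t)=4^t\cdot girth(G_0)$ if $G$ has a cycle and $G_t$ is a forest otherwise; in either case $girth(G_t)>g$ as soon as $4^t>g$. For $T$-freeness the key observation is that in $G_t$ the only vertices of degree at least~$3$ are the original vertices of $G$, and any two distinct original vertices lie at distance at least~$4^t$ in $G_t$ (two $G$-adjacent vertices are at distance exactly~$4^t$, and others are even farther). Hence, if $G_t$ contained $T$ as an induced subgraph, the images $u',v'$ of two distinct degree-$3$ vertices $u,v$ of $T$ would be distinct vertices with $\deg_{G_t}(u')\ge 3$ and $\deg_{G_t}(v')\ge 3$, so both original, while the $u$--$v$ path of $T$, which is a path in $G_t$, would force $dist_{G_t}(u',v')\le dist_T(u,v)=:d_T$; this is impossible once $4^t>d_T$. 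So I would fix the constant $t$ with $4^t>\max\{g,d_T\}$, which makes $G_t$ a $T$-free bipartite graph of maximum degree at most~$4$ and girth~$>g$.

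Finally I would assemble the reduction. By $t$ successive applications of Lemma~\ref{lem:3-division}, $(G,k)\in\CVD$ if and only if $(G_t,k')\in\CVD$ with $k'=k+\sum_{i=0}^{t-1}m_i=k+m\,(4^t-1)/3$, and $G_t$ together with $k'$ is computable in polynomial time. Since $t$ is a fixed constant, $|V(G_t)|=O(n)$, so an algorithm solving \CVD\ in time $2^{o(|V(G_t)|)}$ on $T$-free bipartite graphs of maximum degree at most~$4$ and girth~$>g$ would solve \CVD\ in time $2^{o(n)}$ on $n$-vertex bipartite graphs of maximum degree at most~$4$, contradicting Theorem~\ref{thm:cvd-4} under ETH; membership in $\NP$ is immediate. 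The only slightly delicate points I anticipate are the bookkeeping that keeps every $G_i$ bipartite (so that Lemma~\ref{lem:3-division} keeps applying) and the short distance argument forbidding~$T$; I do not expect either to present a genuine obstacle.
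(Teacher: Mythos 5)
Your proposal is correct and follows essentially the same route as the paper: iterate the $3$-subdivision of Lemma~\ref{lem:3-division} a constant number $t$ of times starting from Theorem~\ref{thm:cvd-4}, tracking bipartiteness, maximum degree, and girth, and using the fact that all degree-$3$ vertices of the subdivided graph are original vertices lying at distance at least $4^t$ apart to force $T$-freeness. (Only a cosmetic slip: with $|V(G_{i+1})|=|V(G_i)|+3m_i$ and $m_{i+1}=4m_i$ the vertex count after $t$ rounds is $n+m(4^t-1)$, not $n+m(4^t-1)/3$; this is still $O(n)$, so nothing changes.)
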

\begin{proof}
Note that \CVD\ restricted to the graph class in question is in $\NP$. Below we give a polynomial-time reduction from \CVD\ restricted to bipartite graphs of degree at most~$4$ to \CVD\ restricted to $T$-free bipartite graphs of degree at most~$4$ and with arbitrarily large girth.

First, given a bipartite graph $G$ of maximum degree at most~$4$ with~$n$ vertices and $m$ edges, let $G'$ be obtained from $G$  by subdividing the edges as described in Lemma~\ref{lem:3-division}. 
Note that like $G$, $G'$ is bipartite and has maximum degree at most~$4$. 
By Lemma~\ref{lem:3-division}, $G$ has a cluster vertex deletion set of size at most~$k$ if and only if $G'$ has a cluster vertex deletion set of size at most~$k+m$. 

Now, given $g>0$ and a tree $T$ with two degree-$3$ vertices, fix an integer $t\ge \max\{\log_4 g, \allowbreak |V(T)|\}$. Then, repeating the construction in Lemma~\ref{lem:3-division} $t$ 
times, the final bipartite graph~$G'$ has girth~$4^t\cdot girth(G) > g$ and maximum degree at most~$4$, and contains no induced subgraph isomorphic to~$T$ (as the distance between two degree-3 vertices in $G'$ is larger than $|V(T)|$). 
Thus the $\NP$-hardness part of the theorem follows from the first part of Theorem~\ref{thm:cvd-4}. 
Note that $G'$ has $n+(4^t-1)m=O(n)$ vertices, 
hence, the second part of the theorem follows from the second part of Theorem~\ref{thm:cvd-4}. 
\end{proof}

Observe that if we consider subcubic bipartite graphs and make use of Theorem~\ref{thm:cvd-3} instead of Theorem~\ref{thm:cvd-4} in the proof of Theorem~\ref{thm:girth}, we obtain:

\begin{theorem}\label{thm:girth-3}
For any given integer $g\ge 3$ and any given tree $T$ containing two degree-$3$ vertices, \CVD\ is $\NP$-complete on $T$-free subcubic bipartite graphs and with girth~$>g$ and, assuming ETH,  cannot be solved in  $2^{o(\sqrt{n})}$ time. 
\end{theorem}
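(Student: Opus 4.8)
The plan is to reuse the proof of Theorem~\ref{thm:girth} essentially unchanged, replacing the base hardness result Theorem~\ref{thm:cvd-4} by Theorem~\ref{thm:cvd-3}. Membership in $\NP$ is immediate. For the hardness, I start from an instance $(G,k)$ of \CVD\ in which $G$ is a subcubic bipartite graph with $n$ vertices and $m\le 3n/2$ edges; by Theorem~\ref{thm:cvd-3} this version is $\NP$-complete and, under ETH, not solvable in $2^{o(\sqrt n)}$ time.

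Next, given $g$ and the tree $T$, I fix an integer $t\ge\max\{\log_4 g,|V(T)|\}$ --- a constant, since $g$ and $T$ are fixed --- and apply the subdivision of Lemma~\ref{lem:3-division} $t$ times, obtaining a graph $G'$. Lemma~\ref{lem:3-division} applies at every stage because bipartiteness is preserved (each cycle of length $\ell$ becomes one of length $4\ell$) and bipartite graphs are triangle-free; iterating, $G$ has a cluster vertex deletion set of size $\le k$ if and only if $G'$ has one of size $\le k+m'$, where $m'$ is the total number of subdivided edges. It then remains to verify the structural properties of $G'$. It is bipartite, and its girth is $4^t\cdot girth(G)>g$ --- with the convention $\infty>g$ when $G$ is a forest. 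It is subcubic, because the subdivision leaves the degrees of the old vertices unchanged and introduces only vertices of degree $2$. And it is $T$-free: the degree-$3$ vertices of $G'$ are exactly those of $G$, and after $t$ subdivisions any two of them lie at distance at least $4^t>|V(T)|$ in $G'$; since $T$ has two vertices of degree $3$, an induced copy of $T$ in $G'$ would contain two degree-$3$ vertices at distance at most $|V(T)|-1$, a contradiction.

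Finally I count: subdividing $m$ edges adds $3m$ new vertices, hence $|V(G')|=n+m(4^t-1)=O(n)$ since $t$ is constant. Therefore a $2^{o(\sqrt{|V(G')|})}$-time algorithm for \CVD\ on $T$-free subcubic bipartite graphs of girth $>g$ would yield a $2^{o(\sqrt n)}$-time algorithm for \CVD\ on subcubic bipartite graphs, contradicting ETH by Theorem~\ref{thm:cvd-3}; and $\NP$-hardness follows from the first part of that theorem. The reduction runs in polynomial time because $t$ is fixed.

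I do not anticipate a genuine obstacle here: the content is already packaged in Lemma~\ref{lem:3-division} and the proof of Theorem~\ref{thm:girth}. The one point that needs care is the size bookkeeping --- the quadratic blow-up is already built into Theorem~\ref{thm:cvd-3}, which is precisely why only $2^{o(\sqrt n)}$, rather than $2^{o(n)}$, can be excluded, and one must check that the (constant-factor) vertex blow-up of the repeated subdivision does not weaken this any further, which it does not since $\sqrt{O(n)}=O(\sqrt n)$.
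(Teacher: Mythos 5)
Your proposal is correct and is essentially the paper's own argument: the paper proves Theorem~\ref{thm:girth-3} by observing that the proof of Theorem~\ref{thm:girth} goes through verbatim starting from subcubic bipartite instances via Theorem~\ref{thm:cvd-3}, which is exactly what you do (and your extra checks on subcubicity preservation and the $\sqrt{O(n)}=O(\sqrt{n})$ bookkeeping are the right details to verify).
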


We now are going to show that \CCVD\ remains $\NP$-complete when restricted to bipartite graphs with arbitrarily large girth. (Notice that a reduction based on Lemma~\ref{lem:3-division}, similar to the reduction in Theorem~\ref{thm:girth}, does not work for \CCVD.) 
Let $g>0$ be a given integer. From an instance $(G,k)$ of \CVD, where $G=(X\cup Y,E)$ is a bipartite graph with girth $>g$, 
we construct an instance $(G(g),k')$, where $G(g)$ is a bipartite graph of girth $>g$,  
for \CCVD\ as follows:
\begin{itemize}
\item We may assume that $g$ is odd (otherwise, replace $g$ by $g+1$);
\item Write $X=\{x_1,x_2,\ldots,x_r\}$, $Y=\{y_1,y_2,\ldots,y_s\}$, and $n=r+s$;
\item Let $H(g,r,s)$ be the tree depicted in Fig.~\ref{fig:H(g)}; note that $H(g,r,s)$ has $6r+3gr+6s+3gs=(6+3g)n$ vertices. 
The property of $H(g,r,s)$ that will be used is that the set of all degree-3 vertices 
of $H(g, r, s)$, that is all $x_{ig}$, $1\le i \le r$, and all $y_{jg}$, $1\le j\le s$, is both an optimal cluster vertex deletion set and the unique connected cluster vertex deletion set. 
The vertices $x_{ig}$ and $y_{jg}$ will have degree~3 in the whole graph $G(g)$. 
In Fig.~\ref{fig:H(g)} the unique connected cluster vertex deletion set contains the $(g + 2)n$ black vertices.
\end{itemize}

\begin{figure}[!h] 
\begin{center}
\begin{tikzpicture}[scale=.47] 
\node[vertexS] (x10) at (3,12)  {};
\node at (3.78,12.4) {\small $x_{10}$};
\node[squareS] (a1) at (5,12) {};
\node[vertexS] (x20) at (7,12)  {};
\node at (7.78,12.4) {\small $x_{20}$};
\node[squareS] (a2) at (9,12) {};
\node[vertexS] (xi0) at (11,12) {}; 
\node[squareS] (ai) at (13,12) {};
\node[vertexS] (xr0) at (15,12)  {};
\node at (15.78,12.4) {\small $x_{r0}$};
\node[squareS] (ar) at (17,12) {};
\draw (x10)--(a1)--(x20)--(a2)--(xi0)--(ai)--(xr0)--(ar);
\node[squareS] (y10) at (3,3)  {};
\node at (3.78,2.6) {\small $y_{10}$};
\node[vertexS] (b1) at (5,3) {};
\node[squareS] (y20) at (7,3)  {};
\node at (7.78,2.6) {\small $y_{20}$};
\node[vertexS] (b2) at (9,3) {};
\node[squareS] (yj0) at (11,3) {}; 
\node[vertexS] (bj) at (13,3) {};
\node[squareS] (ys0) at (15,3)  {};
\node at (15.78,2.6) {\small $y_{s0}$};
\node[vertexS] (bs) at (17,3) {};
\draw (y10)--(b1)--(y20)--(b2)--(yj0)--(bj)--(ys0)--(bs);

\draw (ar)--(bs); 

\node[square] (1) at (3,13.2) {};
\node[vertex] (2) at (3,14) {};
\draw (x10)--(1)--(2);
\node[vertex] (1) at (5,13.2) {};
\node[square] (2) at (5,14) {};
\draw (a1)--(1)--(2);
\node[square] (1) at (7,13.2) {};
\node[vertex] (2) at (7,14) {};
\draw (x20)--(1)--(2);
\node[vertex] (1) at (9,13.2) {};
\node[square] (2) at (9,14) {};
\draw (a2)--(1)--(2);
\node[square] (1) at (11,13.2) {};
\node[vertex] (2) at (11,14) {};
\draw (xi0)--(1)--(2);
\node at (11,12.3) [label=right:{\small $\dots$}] {};
\node[vertex] (1) at (13,13.2) {};
\node[square] (2) at (13,14) {};
\draw (ai)--(1)--(2);
\node[square] (1) at (15,13.2) {};
\node[vertex] (2) at (15,14) {};
\draw (xr0)--(1)--(2);
\node[vertex] (1) at (17,13.2) {};
\node[square] (2) at (17,14) {};
\draw (ar)--(1)--(2);
\node[squareS] (x11) at (3,11)  {}; 
\node at (3.9,11) {\small $x_{11}$};
\node[vertexS] (x12) at (3,10) {}; 
\node at (3,10.2) [label=right:{\small $\vdots$}] {};
\node[squareS] (x1g) at (3,9) {}; 
\node at (3.9,9) {\small $x_{1g}$};
\draw (x10)--(x11)--(x12)--(x1g);
\node[squareS] (x21) at (7,11) {};
\node at (7.9,11) {\small $x_{21}$};
\node[vertexS] (x22) at (7,10) {};
\node at (7,10.2) [label=right:{\small $\vdots$}] {};
\node[squareS] (x2g) at (7,9) {};
\node at (7.9,9) {\small $x_{2g}$};
\draw (x20)--(x21)--(x22)--(x2g);
\node[squareS] (xi1) at (11,11) {};
\node[vertexS] (xi2) at (11,10) {};
\node at (11,10.2) [label=right:{\small $\vdots$}] {};
\node[squareS] (xig) at (11,9) {};
\draw (xi0)--(xi1)--(xi2)--(xig);
\node[squareS] (xr1) at (15,11) {};
\node at (15.9,11) {\small $x_{r1}$};
\node[vertexS] (xr2) at (15,10) {};
\node at (15,10.2) [label=right:{\small $\vdots$}] {};
\node[squareS] (xrg) at (15,9) {};
\node at (15.9,9) {\small $x_{rg}$};
\draw (xr0)--(xr1)--(xr2)--(xrg);
\node[vertex] (1) at (2,11) {};
\node[square] (2) at (1,11) {};
\draw (x11)--(1)--(2);
\node[square] (1) at (2,10) {};
\node[vertex] (2) at (1,10) {};
\draw (x12)--(1)--(2);
\node[vertex] (1) at (2,9) {};
\node[square] (2) at (1,9) {};
\draw (x1g)--(1)--(2);
\node[vertex] (1) at (6,11) {};
\node[square] (2) at (5,11) {};
\draw (x21)--(1)--(2);
\node[square] (1) at (6,10) {};
\node[vertex] (2) at (5,10) {};
\draw (x22)--(1)--(2);
\node[vertex] (1) at (6,9) {};
\node[square] (2) at (5,9) {};
\draw (x2g)--(1)--(2);
\node[vertex] (1) at (10,11) {};
\node[square] (2) at (9,11) {};
\draw (xi1)--(1)--(2);
\node[square] (1) at (10,10) {};
\node[vertex] (2) at (9,10) {};
\draw (xi2)--(1)--(2);
\node[vertex] (1) at (10,9) {};
\node[square] (2) at (9,9) {};
\draw (xig)--(1)--(2);
\node[vertex] (1) at (14,11) {};
\node[square] (2) at (13,11) {};
\draw (xr1)--(1)--(2);
\node[square] (1) at (14,10) {};
\node[vertex] (2) at (13,10) {};
\draw (xr2)--(1)--(2);
\node[vertex] (1) at (14,9) {};
\node[square] (2) at (13,9) {};
\draw (xrg)--(1)--(2);
\node[vertex] (1) at (3,1.8) {};
\node[square] (2) at (3,1) {};
\draw (y10)--(1)--(2);
\node[square] (1) at (5,1.8) {};
\node[vertex] (2) at (5,1) {};
\draw (b1)--(1)--(2);
\node[vertex] (1) at (7,1.8) {};
\node[square] (2) at (7,1) {};
\draw (y20)--(1)--(2);
\node[square] (1) at (9,1.8) {};
\node[vertex] (2) at (9,1) {};
\draw (b2)--(1)--(2);
\node[vertex] (1) at (11,1.8) {};
\node[square] (2) at (11,1) {};
\draw (yj0)--(1)--(2);
\node at (11,2.6) [label=right:{\small $\dots$}] {};
\node[square] (1) at (13,1.8) {};
\node[vertex] (2) at (13,1) {};
\draw (bj)--(1)--(2);
\node[vertex] (1) at (15,1.8) {};
\node[square] (2) at (15,1) {};
\draw (ys0)--(1)--(2);
\node[square] (1) at (17,1.8) {};
\node[vertex] (2) at (17,1) {};
\draw (bs)--(1)--(2);
\node[vertexS] (y11) at (3,4)  {}; 
\node at (3.9,4) {\small $y_{11}$};
\node[squareS] (y12) at (3,5) {};
\node at (3,5.2) [label=right:{\small $\vdots$}] {};
\node[vertexS] (y1g) at (3,6) {}; 
\node at (3.9,6) {\small $y_{1g}$};
\draw (y10)--(y11)--(y12)--(y1g);
\node[vertexS] (y21) at (7,4) {};
\node at (7.9,4) {\small $y_{21}$};
\node[squareS] (y22) at (7,5) {};
\node at (7,5.2) [label=right:{\small $\vdots$}] {};
\node[vertexS] (y2g) at (7,6) {};
\node at (7.9,6) {\small $y_{2g}$};
\draw (y20)--(y21)--(y22)--(y2g);
\node[vertexS] (yj1) at (11,4) {};
\node[squareS] (yj2) at (11,5) {};
\node at (11,5.2) [label=right:{\small $\vdots$}] {};
\node[vertexS] (yjg) at (11,6) {};
\draw (yj0)--(yj1)--(yj2)--(yjg);
\node[vertexS] (ys1) at (15,4) {};
\node at (15.9,4) {\small $y_{s1}$};
\node[squareS] (ys2) at (15,5) {};
\node at (15,5.2) [label=right:{\small $\vdots$}] {};
\node[vertexS] (ysg) at (15,6) {};
\node at (15.9,6) {\small $y_{sg}$};
\draw (ys0)--(ys1)--(ys2)--(ysg);
\node[square] (1) at (2,4) {};
\node[vertex] (2) at (1,4) {};
\draw (y11)--(1)--(2);
\node[vertex] (1) at (2,5) {};
\node[square] (2) at (1,5) {};
\draw (y12)--(1)--(2);
\node[square] (1) at (2,6) {};
\node[vertex] (2) at (1,6) {};
\draw (y1g)--(1)--(2);
\node[square] (1) at (6,4) {};
\node[vertex] (2) at (5,4) {};
\draw (y21)--(1)--(2);
\node[vertex] (1) at (6,5) {};
\node[square] (2) at (5,5) {};
\draw (y22)--(1)--(2);
\node[square] (1) at (6,6) {};
\node[vertex] (2) at (5,6) {};
\draw (y2g)--(1)--(2);
\node[square] (1) at (10,4) {};
\node[vertex] (2) at (9,4) {};
\draw (yj1)--(1)--(2);
\node[vertex] (1) at (10,5) {};
\node[square] (2) at (9,5) {};
\draw (yj2)--(1)--(2);
\node[square] (1) at (10,6) {};
\node[vertex] (2) at (9,6) {};
\draw (yjg)--(1)--(2);
\node[square] (1) at (14,4) {};
\node[vertex] (2) at (13,4) {};
\draw (ys1)--(1)--(2);
\node[vertex] (1) at (14,5) {};
\node[square] (2) at (13,5) {};
\draw (ys2)--(1)--(2);
\node[square] (1) at (14,6) {};
\node[vertex] (2) at (13,6) {};
\draw (ysg)--(1)--(2);
\end{tikzpicture} 
\caption{The tree $H(g,r,s)$. The $(g+2)n$ black vertices form an optimal (connected) cluster vertex deletion set.}\label{fig:H(g)}
\end{center}
\end{figure}
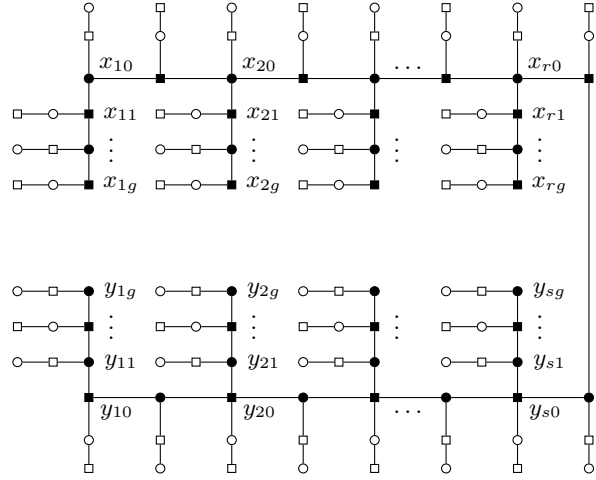

Then, let $G(g)$ be obtained from $G$ and $H(g,r,s)$ by adding an edge between $x_i$ and $x_{ig}$, $1\le i\le r$, and between $y_j$ and $y_{jg}$, $1\le j\le s$. Note that like $G$, $G(g)$ is bipartite (as $g$ is odd) and has $n'=n+(6+3g)n=(7+3g)n$ vertices. See Fig.~\ref{fig:exampleG(3)} for an example in case $g=3$.  
Finally, set $k'=k+(g+2)n$. Clearly, $(G(g),k')$ can be constructed in polynomial time from $(G,k)$.

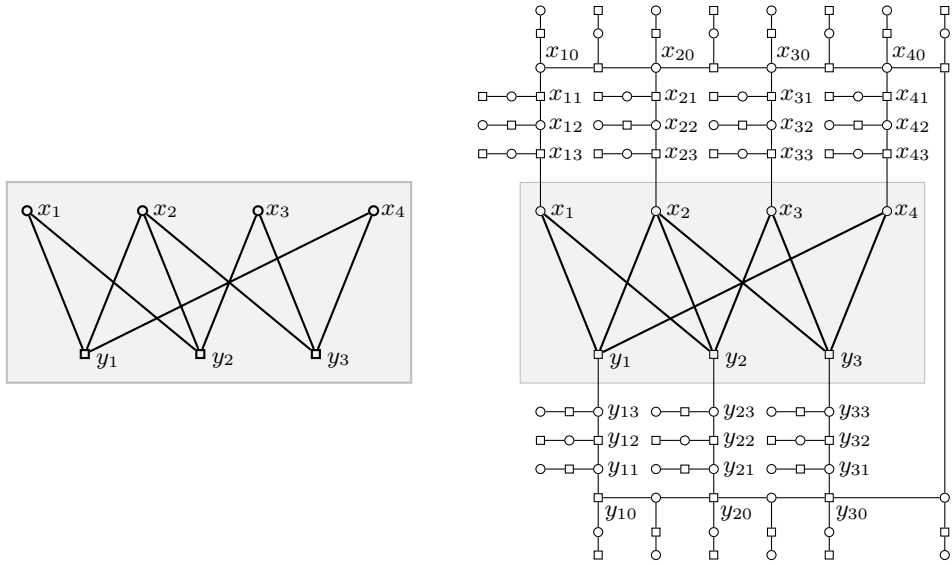
\begin{figure}[!ht]
\begin{center}
\begin{tikzpicture}[scale=.38, thick] 
\filldraw[fill=black!5!white, draw=lightgray] (2.3,6) rectangle (16.3,13);
\node[vertex] (x1) at (3,12)  {}; 
\node at (3.8,12) {\small $x_1$};
\node[vertex] (x2) at (7,12)  {}; 
\node at (7.8,12) {\small $x_2$};
\node[vertex] (x3) at (11,12) {}; 
\node at (11.7,12) {\small $x_3$};
\node[vertex] (x4) at (15,12) {}; 
\node at (15.7,12) {\small $x_4$};
\node[square] (y1) at (5,7)  {}; 
\node at (5.8,6.75) {\small $y_1$};
\node[square] (y2) at (9,7)  {}; 
\node at (9.8,6.8) {\small $y_2$};
\node[square] (y3) at (13,7) {}; 
\node at (13.8,6.8) {\small $y_3$};
\draw (x1)--(y1)--(x2)--(y2)--(x3)--(y3)--(x4)--(y1);
\draw (x1)--(y2); \draw (x2)--(y3);

\node at (5,0.12) {};
\end{tikzpicture}
\qquad
\begin{tikzpicture}[scale=.38] 
\filldraw[fill=black!5!white, draw=lightgray] (2.3,6) rectangle (16.3,13);
\node[vertex] (x1) at (3,12)  {}; 
\node at (3.8,12) {\small $x_1$};
\node[vertex] (x2) at (7,12)  {}; 
\node at (7.8,12) {\small $x_2$};
\node[vertex] (x3) at (11,12) {}; 
\node at (11.7,12) {\small $x_3$};
\node[vertex] (x4) at (15,12) {}; 
\node at (15.7,12) {\small $x_4$};
\node[square] (y1) at (5,7)  {}; 
\node at (5.8,6.75) {\small $y_1$};
\node[square] (y2) at (9,7)  {}; 
\node at (9.8,6.8) {\small $y_2$};
\node[square] (y3) at (13,7) {}; 
\node at (13.8,6.8) {\small $y_3$};
\draw[thick] (x1)--(y1)--(x2)--(y2)--(x3)--(y3)--(x4)--(y1);
\draw[thick] (x1)--(y2); \draw[thick] (x2)--(y3);
\node[vertex] (x10) at (3,17)  {};
\node at (3.78,17.5) {\small $x_{10}$};
\node[square] (a1) at (5,17) {};
\node[vertex] (x20) at (7,17)  {};
\node at (7.78,17.5) {\small $x_{20}$};
\node[square] (a2) at (9,17) {};
\node[vertex] (x30) at (11,17) {}; 
\node at (11.78,17.5) {\small $x_{30}$};
\node[square] (3) at (13,17) {};
\node[vertex] (x40) at (15,17)  {};
\node at (15.78,17.5) {\small $x_{40}$};
\node[square] (4) at (17,17) {};
\draw (x10)--(a1)--(x20)--(a2)--(x30)--(3)--(x40)--(4);
\node[square] (1) at (3,18.2) {};
\node[vertex] (2) at (3,19) {};
\draw (x10)--(1)--(2);
\node[vertex] (1) at (5,18.2) {};
\node[square] (2) at (5,19) {};
\draw (a1)--(1)--(2);
\node[square] (1) at (7,18.2) {};
\node[vertex] (2) at (7,19) {};
\draw (x20)--(1)--(2);
\node[vertex] (1) at (9,18.2) {};
\node[square] (2) at (9,19) {};
\draw (a2)--(1)--(2);
\node[square] (1) at (11,18.2) {};
\node[vertex] (2) at (11,19) {};
\draw (x30)--(1)--(2);
\node[vertex] (1) at (13,18.2) {};
\node[square] (2) at (13,19) {};
\draw (3)--(1)--(2);
\node[square] (1) at (15,18.2) {};
\node[vertex] (2) at (15,19) {};
\draw (x40)--(1)--(2);
\node[vertex] (1) at (17,18.2) {};
\node[square] (2) at (17,19) {};
\draw (4)--(1)--(2);
\node[square] (x11) at (3,16)  {}; 
\node at (3.9,16) {\small $x_{11}$};
\node[vertex] (x12) at (3,15) {};
\node at (3.9,15) {\small $x_{12}$};
\node[square] (x13) at (3,14) {}; 
\node at (3.9,14) {\small $x_{13}$};
\draw (x10)--(x11)--(x12)--(x13);
\node[square] (x21) at (7,16) {};
\node at (7.9,16) {\small $x_{21}$};
\node[vertex] (x22) at (7,15) {};
\node at (7.9,15) {\small $x_{22}$};
\node[square] (x23) at (7,14) {};
\node at (7.9,14) {\small $x_{23}$};
\draw (x20)--(x21)--(x22)--(x23);
\node[square] (x31) at (11,16) {};
\node at (11.9,16) {\small $x_{31}$};
\node[vertex] (x32) at (11,15) {};
\node at (11.9,15) {\small $x_{32}$};
\node[square] (x33) at (11,14) {};
\node at (11.9,14) {\small $x_{33}$};
\draw (x30)--(x31)--(x32)--(x33);
\node[square] (x41) at (15,16) {};
\node at (15.9,16) {\small $x_{41}$};
\node[vertex] (x42) at (15,15) {};
\node at (15.9,15) {\small $x_{42}$};
\node[square] (x43) at (15,14) {};
\node at (15.9,14) {\small $x_{43}$};
\draw (x40)--(x41)--(x42)--(x43);
\node[vertex] (1) at (2,16) {};
\node[square] (2) at (1,16) {};
\draw (x11)--(1)--(2);
\node[square] (1) at (2,15) {};
\node[vertex] (2) at (1,15) {};
\draw (x12)--(1)--(2);
\node[vertex] (1) at (2,14) {};
\node[square] (2) at (1,14) {};
\draw (x13)--(1)--(2);
\node[vertex] (1) at (6,16) {};
\node[square] (2) at (5,16) {};
\draw (x21)--(1)--(2);
\node[square] (1) at (6,15) {};
\node[vertex] (2) at (5,15) {};
\draw (x22)--(1)--(2);
\node[vertex] (1) at (6,14) {};
\node[square] (2) at (5,14) {};
\draw (x23)--(1)--(2);
\node[vertex] (1) at (10,16) {};
\node[square] (2) at (9,16) {};
\draw (x31)--(1)--(2);
\node[square] (1) at (10,15) {};
\node[vertex] (2) at (9,15) {};
\draw (x32)--(1)--(2);
\node[vertex] (1) at (10,14) {};
\node[square] (2) at (9,14) {};
\draw (x33)--(1)--(2);
\node[vertex] (1) at (14,16) {};
\node[square] (2) at (13,16) {};
\draw (x41)--(1)--(2);
\node[square] (1) at (14,15) {};
\node[vertex] (2) at (13,15) {};
\draw (x42)--(1)--(2);
\node[vertex] (1) at (14,14) {};
\node[square] (2) at (13,14) {};
\draw (x43)--(1)--(2);
\draw (x13)--(x1); \draw (x23)--(x2); \draw (x33)--(x3); \draw (x43)--(x4);
\node[square] (y10) at (5,2)  {};
\node at (5.8,1.45) {\small $y_{10}$};
\node[vertex] (b1) at (7,2) {};
\node[square] (y20) at (9,2)  {};
\node at (9.8,1.45) {\small $y_{20}$};
\node[vertex] (b2) at (11,2) {};
\node[square] (y30) at (13,2) {}; 
\node at (13.8,1.45) {\small $y_{30}$};
\node[vertex] (3) at (17,2) {};
\draw (y10)--(b1)--(y20)--(b2)--(y30)--(3);

\draw (4)--(3); 

\node[vertex] (1) at (5,.8) {};
\node[square] (2) at (5,0) {};
\draw (y10)--(1)--(2);
\node[square] (1) at (7,.8) {};
\node[vertex] (2) at (7,0) {};
\draw (b1)--(1)--(2);
\node[vertex] (1) at (9,.8) {};
\node[square] (2) at (9,0) {};
\draw (y20)--(1)--(2);
\node[square] (1) at (11,.8) {};
\node[vertex] (2) at (11,0) {};
\draw (b2)--(1)--(2);
\node[vertex] (1) at (13,.8) {};
\node[square] (2) at (13,0) {};
\draw (y30)--(1)--(2);
\node[square] (1) at (17,.8) {};
\node[vertex] (2) at (17,0) {};
\draw (3)--(1)--(2);
\node[vertex] (y11) at (5,3)  {};
\node at (5.9,3) {\small $y_{11}$};
\node[square] (y12) at (5,4) {};
\node at (5.9,4) {\small $y_{12}$};
\node[vertex] (y13) at (5,5) {};
\node at (5.9,5) {\small $y_{13}$};
\draw (y10)--(y11)--(y12)--(y13);
\node[vertex] (y21) at (9,3) {};
\node at (9.9,3) {\small $y_{21}$};
\node[square] (y22) at (9,4) {};
\node at (9.9,4) {\small $y_{22}$};
\node[vertex] (y23) at (9,5) {};
\node at (9.9,5) {\small $y_{23}$};
\draw (y20)--(y21)--(y22)--(y23);
\node[vertex] (y31) at (13,3) {};
\node at (13.9,3) {\small $y_{31}$};
\node[square] (y32) at (13,4) {};
\node at (13.9,4) {\small $y_{32}$};
\node[vertex] (y33) at (13,5) {};
\node at (13.9,5) {\small $y_{33}$};
\draw (y30)--(y31)--(y32)--(y33);
\node[square] (1) at (4,3) {};
\node[vertex] (2) at (3,3) {};
\draw (y11)--(1)--(2);
\node[vertex] (1) at (4,4) {};
\node[square] (2) at (3,4) {};
\draw (y12)--(1)--(2);
\node[square] (1) at (4,5) {};
\node[vertex] (2) at (3,5) {};
\draw (y13)--(1)--(2);
\node[square] (1) at (8,3) {};
\node[vertex] (2) at (7,3) {};
\draw (y21)--(1)--(2);
\node[vertex] (1) at (8,4) {};
\node[square] (2) at (7,4) {};
\draw (y22)--(1)--(2);
\node[square] (1) at (8,5) {};
\node[vertex] (2) at (7,5) {};
\draw (y23)--(1)--(2);
\node[square] (1) at (12,3) {};
\node[vertex] (2) at (11,3) {};
\draw (y31)--(1)--(2);
\node[vertex] (1) at (12,4) {};
\node[square] (2) at (11,4) {};
\draw (y32)--(1)--(2);
\node[square] (1) at (12,5) {};
\node[vertex] (2) at (11,5) {};
\draw (y33)--(1)--(2);
\draw (y13)--(y1); \draw (y23)--(y2); \draw (y33)--(y3);
\end{tikzpicture} 
\caption{An example of the reduction from \CVD\ to \CCVD: A bipartite graph $G$ (left) and the bipartite graph $G(3)$ (right) obtained from $G$ and $H(3,4,3)$; the bipartition of the vertex set is indicated by circle and rectangle vertices.}\label{fig:exampleG(3)}
\end{center}
\end{figure}

Now, let $S$ be a cluster vertex deletion set of $G$ of size at most $k$. Then $G(g)$ has a connected cluster vertex deletion set $S'$ of size $|S|+(g+2)n\le k'$: $S'$ is obtained from $S$ by adding all vertices of $H(g,r,s)$ with degree $3$ in $G(g)$ (the $(g+2)n$ black vertices in Fig.~\ref{fig:H(g)}).  
Observe that $S'$ induces a connected subgraph in $G(g)$ since every vertex in $S$ is adjacent to some $x_{ig}$ or $y_{jg}$, and all vertices of $H(g,r,s)$ with degree $3$ in $G(g)$ induce a connected subgraph in $G(g)$.
 
Conversely, let $S'$ be a (connected or not) cluster vertex deletion set of $G(g)$ of size at most $k'$. Since every vertex $u$ in $H(g,r,s)$ with degree $3$ in $G(g)$ (the black vertices in Fig.~\ref{fig:H(g)}) belongs to an induced $P_3=uvw$ in $H(g,r,s)$ with $\deg_{G(g)}(v)=2$ and $\deg_{G(g)}(w)=1$, we may assume that $S'$ contains all $(g+2)n$ vertices of $H(g,r,s)$ with degree $3$ (and no other vertices of $H(g,r,s)$). Let~$S$ be the restriction of $S'$ on $V(G)$. Then $S$ is a cluster vertex deletion set of $G$ of size $|S|=|S'|-(g+2)n\le k$. 

Observe that the girth of $G(g)$ is at least $\max\{girth(G), 2g+6\}>g$ and the maximum degree of $G(g)$ is one more than the maximum degree of $G$. Hence, by Theorems~\ref{thm:girth} and~\ref{thm:girth-3}, we obtain:

\begin{theorem}\label{thm:girth-ccvd}
For any given integer $g\ge 3$, \CCVD\ is $\NP$-complete on bipartite graphs of maximum degree at most~$5$ and with girth~$>g$ and, assuming ETH,  cannot be solved in $2^{o(n)}$ time.
\end{theorem}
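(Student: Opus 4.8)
The plan is to turn the construction $(G,k)\mapsto(G(g),k')$ given above into a formal reduction. Membership of \CCVD\ in $\NP$ on the target class is clear, since a candidate set $S$ is checked in polynomial time for $|S|\le k'$, for $G(g)-S$ being $P_3$-free, and for $G(g)[S]$ being connected. For the $\NP$-hardness I would reduce from \CVD\ restricted to bipartite graphs of maximum degree at most~$4$ and girth larger than~$g$; this restricted problem is $\NP$-complete and, under ETH, admits no $2^{o(n)}$-time algorithm, since the graph class in Theorem~\ref{thm:girth} is a subclass of it. Starting from such an instance $(G,k)$ with $G=(X\cup Y,E)$, $|X|=r$, $|Y|=s$ and $n=r+s$, I increase $g$ by one if needed so that $g$ is odd, build $G(g)$ from $G$ and $H(g,r,s)$ exactly as above, and set $k'=k+(g+2)n$; this is a polynomial-time transformation.

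The technical core is the structural property of the tree $H:=H(g,r,s)$ claimed in the text. Let $B$ denote its set of $(g+2)n$ backbone (black) vertices. First, deleting $B$ leaves exactly the pendant paths minus their backbone endpoints, so $H-B$ is a disjoint union of $(g+2)n$ copies of $P_2$; hence $B$ is a cluster vertex deletion set of $H$, and $B$ induces a connected subtree of $H$ (the two spines joined by the edge $a_rb_s$, together with all branches). Second, each backbone vertex $b$ together with its pendant path induces a $P_3$ whose two non-backbone vertices have degrees~$2$ and~$1$; these $(g+2)n$ induced $P_3$'s lie inside $V(H)$, are pairwise vertex-disjoint, and remain induced $P_3$'s in $G(g)$ because the pendant vertices gain no neighbours. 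It follows that every cluster vertex deletion set $T$ of $G(g)$ satisfies $|T\cap V(H)|\ge(g+2)n$; in particular $B$ is an optimal cluster vertex deletion set of $H$, and a short local analysis of how a connected set can meet each pendant ``cherry'' shows that $B$ is the unique connected cluster vertex deletion set of $H$ of minimum size.

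Granting this, the equivalence is routine. If $S$ is a cluster vertex deletion set of $G$ with $|S|\le k$, put $S'=S\cup B$: every bridging edge $x_ix_{ig}$ or $y_jy_{jg}$ has its $H$-endpoint in $B$, so $G(g)-S'$ is the disjoint union of the cluster graph $G-S$ and the $P_2$'s of $H-B$, hence is $P_3$-free; and $S'$ is connected because $B$ is connected and every vertex of $S\subseteq V(G)$ is adjacent to some $x_{ig}$ or $y_{jg}$ in $B$. So $G(g)$ has a connected cluster vertex deletion set of size $|S|+(g+2)n\le k'$. Conversely, let $S'$ be any cluster vertex deletion set of $G(g)$ with $|S'|\le k'$. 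By the disjoint-$P_3$ bound $|S'\cap V(H)|\ge(g+2)n$, so $S'':=(S'\setminus V(H))\cup B$ is again a cluster vertex deletion set of $G(g)$ — the bridging edges are destroyed by $B$ — with $|S''|\le|S'|\le k'$; we may thus assume $S'\cap V(H)=B$, whence $S:=S'\cap V(G)$ is a cluster vertex deletion set of $G$ (as $G$ is an induced subgraph of $G(g)$) with $|S|=|S'|-(g+2)n\le k$. Finally, $G(g)$ is bipartite since $H$ is a tree and, $g$ being odd, the bipartition of $H$ is compatible with the bridging edges; its girth exceeds~$g$ because any cycle not already present in $G$ must traverse two bridging edges together with a long path through the tree $H$; its maximum degree is one more than that of $G$, hence at most~$5$ (the vertices $x_{ig}$ and $y_{jg}$ have degree~$2$ in $H$ and degree~$3$ in $G(g)$); and it has $(7+3g)n=O(n)$ vertices since $g$ is a constant. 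The $\NP$-completeness of \CCVD\ on the target class, and the conditional $2^{o(n)}$ lower bound, then follow from Theorem~\ref{thm:girth}.

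The one step I expect to require real care is the second half of the structural paragraph, in particular the uniqueness of the minimum-size connected cluster vertex deletion set $B$ of $H$: ruling out a minimum-size connected deletion set that avoids a backbone vertex needs an argument exploiting that all pendant cherries hang off a single spanning subtree of $H$. I note, though, that the reduction itself uses only that $B$ is a cluster vertex deletion set of $H$ and that the $(g+2)n$ cherries are pairwise disjoint induced $P_3$'s, so even a weaker structural statement would be enough to carry the theorem through.
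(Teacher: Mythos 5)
Your proposal is correct and follows essentially the same route as the paper: the same construction $(G,k)\mapsto(G(g),k')$ with the tree $H(g,r,s)$, the same exchange argument showing one may assume the deletion set meets $H(g,r,s)$ exactly in its $(g+2)n$ degree-$3$ vertices (justified, as you do, by the pairwise disjoint pendant induced $P_3$'s), and the same appeal to Theorem~\ref{thm:girth} for the degree-$4$, girth-$>g$ source of hardness. Your closing observation is also accurate: the paper's argument only uses that the black vertices form a connected cluster vertex deletion set hit by disjoint cherries, not the stronger uniqueness claim.
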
 

\begin{theorem}\label{thm:girth-ccvd-4}
For any given integer $g\ge 3$, \CCVD\ is $\NP$-complete on bipartite graphs of maximum degree at most~$4$ and with girth~$>g$ and, assuming ETH,  cannot be solved in $2^{o(\sqrt{n})}$ time.
\end{theorem}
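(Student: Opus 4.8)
The plan is to run the very same reduction from \CVD\ to \CCVD\ that was just used to prove Theorem~\ref{thm:girth-ccvd}, only feeding it the stronger hardness source of Theorem~\ref{thm:girth-3} in place of Theorem~\ref{thm:girth}. Concretely, I would take an instance $(G,k)$ of \CVD\ in which $G$ is a \emph{subcubic} bipartite graph of girth $>g$; dropping the $T$-freeness in Theorem~\ref{thm:girth-3}, deciding such instances is $\NP$-hard and, under ETH, not solvable in $2^{o(\sqrt n)}$ time. Assuming $g$ odd (otherwise replace $g$ by $g+1$), I then build the tree $H(g,r,s)$ and the graph $G(g)$ by joining each $x_i$ to $x_{ig}$ and each $y_j$ to $y_{jg}$, exactly as before, and set $k'=k+(g+2)n$.

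Correctness carries over verbatim from the proof of Theorem~\ref{thm:girth-ccvd}: given a cluster vertex deletion set $S$ of $G$ with $|S|\le k$, adding to it the $(g+2)n$ distinguished vertices of $H(g,r,s)$ that form its unique connected cluster vertex deletion set yields a connected cluster vertex deletion set of $G(g)$ of size $\le k'$; conversely, an optimal cluster vertex deletion set of $G(g)$ may be assumed to consist of exactly those $(g+2)n$ vertices together with a cluster vertex deletion set of $G$, which then has size $\le k$. This argument never invoked a bound on the maximum degree of $G$, so it applies unchanged. As already observed, $G(g)$ is bipartite (since $g$ is odd), has girth $\max\{girth(G),2g+6\}>g$, and $(G(g),k')$ is computable from $(G,k)$ in polynomial time; membership in $\NP$ is clear.

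The two points that genuinely need rechecking for this specialization are the degree and the size blow-up. Because $G$ is subcubic, each original vertex $x_i,y_j$ acquires exactly one extra neighbour in $G(g)$ and hence has degree $\le 4$, while $H(g,r,s)$ itself has maximum degree $4$ (attained at the internal path vertices $x_{i0},y_{j0}$, whose degrees are not touched by the added edges); hence $G(g)$ has maximum degree $\le 4$. Moreover $G(g)$ has $n'=(7+3g)n=O(n)$ vertices for fixed $g$, so a hypothetical $2^{o(\sqrt{n'})}$-time algorithm for \CCVD\ on this class would yield a $2^{o(\sqrt n)}$-time algorithm for \CVD\ on subcubic bipartite graphs of girth $>g$, contradicting Theorem~\ref{thm:girth-3} under ETH. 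There is essentially no new obstacle here; the only step deserving a moment's care is confirming that switching the source from maximum-degree-$4$ to subcubic inputs actually lowers $\Delta(G(g))$ from $5$ to $4$ — i.e. that the degree-$4$ vertices of $H(g,r,s)$ are precisely the ones left untouched by the new edges — which is exactly what the construction provides.
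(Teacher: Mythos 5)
Your proposal is correct and is essentially the paper's own proof: the paper obtains Theorem~\ref{thm:girth-ccvd-4} by running the identical $G\mapsto G(g)$ reduction on the subcubic bipartite instances of Theorem~\ref{thm:girth-3}, noting that the maximum degree of $G(g)$ is one more than that of $G$ and that $|V(G(g))|=(7+3g)n=O(n)$, exactly as you argue. Your explicit check that the degree-$4$ vertices of $H(g,r,s)$ are untouched by the added edges is a correct (and slightly more careful) spelling-out of the paper's one-line degree observation.
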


\section{$H$-free graphs: $\NP$-completeness cases}\label{sec:NPc-part}
In this section we give the proof of the $\NP$-completeness part of Theorems~\ref{thm:H-free} and~\ref{thm:H-free-connected}. 

Let $H$ be a fixed graph. By Proposition~\ref{pro:P4}, \CVD\ is polynomially solvable on $H$-free graphs whenever $H$ is an induced subgraph of the $4$-vertex path $P_4$. The following fact is easy to see:

\begin{observation}\label{obs:P4}
A graph is an induced subgraph of the $4$-path $P_4$ if and only if it is a $\{3P_1, 2P_2\}$-free forest.
\end{observation}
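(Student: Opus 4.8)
The plan is to prove both implications by a short structural analysis, the only external ingredient being the standard inequality $\alpha(T)\ge\lceil |V(T)|/2\rceil$ for a tree $T$, which follows from K\H{o}nig's theorem since trees are bipartite.

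For the forward implication I would argue in a couple of lines: any induced subgraph of $P_4$ is acyclic, hence a forest; the independence number of $P_4$ is $2$, so no induced subgraph of $P_4$ can contain $3P_1$; and $P_4$ has only four vertices, with its two non-incident edges joined by a third edge, so no induced subgraph of $P_4$ contains an induced $2P_2$. Hence every induced subgraph of $P_4$ is a $\{3P_1,2P_2\}$-free forest.

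For the converse, let $G$ be a $\{3P_1,2P_2\}$-free forest. First I would bound the components: each component $T$ is a tree, hence bipartite, so $\alpha(T)\ge\lceil|V(T)|/2\rceil$, while $3P_1$-freeness gives $\alpha(T)\le\alpha(G)\le 2$; thus $|V(T)|\le 4$. Among trees on at most four vertices, $K_{1,3}$ has three pairwise non-adjacent leaves and is therefore excluded, so every component is one of $P_1,P_2,P_3,P_4$. Next, $2P_2$-freeness forces at most one component to contain an edge, since an edge from one component together with an edge from another induces a $2P_2$. Finally I would write $G$ as the disjoint union of a single component $C\in\{P_1,P_2,P_3,P_4\}$ with $t\ge 0$ isolated vertices; the constraint $\alpha(C)+t=\alpha(G)\le 2$ leaves precisely $tP_1$ with $t\le 2$, $P_2$, $P_2+P_1$, $P_3$, and $P_4$, each of which is an induced subgraph of $P_4$ (as recorded in Fig.~\ref{fig:easy}).

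I do not anticipate a genuine obstacle; the only place needing a little care is making the case analysis in the converse exhaustive --- in particular not overlooking that $K_{1,3}$ is exactly the tree on at most four vertices eliminated by $3P_1$-freeness, and treating the degenerate all-isolated-vertices case (including the empty graph) correctly.
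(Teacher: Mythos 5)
Your argument is correct and complete: both directions check out, the case analysis in the converse is exhaustive (including $K_{1,3}$ and the all-isolated-vertices case), and the only quibble is that invoking K\H{o}nig's theorem is unnecessary, since $\alpha(T)\ge\lceil|V(T)|/2\rceil$ follows directly from taking the larger colour class of the bipartition of the tree $T$. The paper states this as an observation with no proof at all (``The following fact is easy to see''), so there is no authorial argument to compare against; your write-up is a valid elementary verification of exactly the fact the paper takes for granted.
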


Thus, it remains to consider the cases where $H$ contains a cycle or a $3P_1$ or a $2P_2$ as an induced subgraph.
 
Now, if $H$ contains a cycle then graphs of girth $> g=|V(H)|$ are $H$-free, hence Theorems~\ref{thm:girth} and~\ref{thm:girth-ccvd} imply that \CVD\ and \CCVD\ are $\NP$-complete on $H$-free graphs and, assuming ETH, cannot be solved in $2^{o(n)}$ time on $H$-free $n$-vertex graphs. 
If $H$ contains a $3P_1$ or a $2P_2$ then $\{3P_1, 2P_2\}$-free graphs are $H$-free graphs, hence Theorems~\ref{thm:3P1and2P2} and~\ref{thm:3P1and2P2-ccvd} imply that \CVD\ and \CCVD\ are $\NP$-complete on $H$-free graphs and, assuming ETH, cannot be solved in $2^{o(n)}$ time on $H$-free $n$-vertex graphs. 

The proofs of Theorems~\ref{thm:H-free} and~\ref{thm:H-free-connected} are complete.

\section{Conclusion}
We have found a complete characterization of graphs $H$ for which \CVD\ on $H$-free graphs is polynomially solvable and for which it is $\NP$-complete (Theorem~\ref{thm:H-free}). The same complexity dichotomy holds also for \CCVD\ (Theorem~\ref{thm:H-free-connected}). 

We remark that a complexity dichotomy for \VC\ and \CVC\ on $H$-free graphs, like Theorem~\ref{thm:H-free} and Theorem~\ref{thm:H-free-connected} for \CVD\ and \CCVD, respectively, seems very hard to achieve. 
Indeed, it is a long-standing open problem whether there exists a constant $t$ for which \VC\ or \CVC\ is $\NP$-complete on $P_t$-free graphs. 
So far it is known that such a constant $t$, if any, must be at least $7$ for \VC~\cite{GrzesikKPP22}, respectively, at least~$6$ for \CVC~\cite{JohnsonPP20}.  

Let $\cal H$ be a set of (possibly infinitely many) graphs. A natural question generalizing the case of one forbidden induced subgraph is: what is the complexity of \CVD\ and of \CCVD\ on ${\cal H}$-free graphs? 
The case ${\cal H}=\{H\}$ is completely solved by Theorems~\ref{thm:H-free} and~\ref{thm:H-free-connected}. The case ${\cal H}=\{C_\ell\mid \ell\ge 4\}$, also known as chordal graphs, addressed in~\cite{0001KOY18} is still open. 
The next step may be the case of two-element sets ${\cal H} =\{H_1,H_2\}$; 
in particular, ${\cal H} =\{H,\overline{H}\}$. 
Another interesting problem is to clear the complexity of \CVD\ and \CCVD\ on line graphs, a well-studied graph class defined by excluding nine small induced subgraphs.


\bibliography{H-freeCVD}

\appendix

\section{Computing the cluster vertex deletion number of cographs using the cotrees}\label{app:P4free}

Recall that $P_4$-free graphs are also called \emph{cographs}~\cite{CorneilLB81}. 
More precisely, for vertex-disjoint graphs $G_i=(V_i,E_i)$, $i=1,2$, let $G_1\cojoin G_2$ be the union (or \emph{co-join}) of $G_1$ and~$G_2$, 
\[G_1\cojoin G_2=(V_1\cup V_2, E_1\cup E_2),
\]
and let $G_1\join G_2$ be the \emph{join} of $G_1$ and $G_2$, 
\[G_1\join G_2=(V_1\cup V_2, E_1\cup E_2\cup\{uv\mid u\in V_1, v\in V_2\}).
\]
With these notations, cographs are exactly those graphs that can be constructed from the one-vertex graph by applying the join and co-join operations. Thus, a cograph is the one-vertex graph or is the join of two smaller cographs or is the co-join of two smaller cographs.

Recall that $S\subseteq V(G)$ is a vertex cover if $G-S$ is edgeless and is a cluster vertex deletion set if $G-S$ is a cluster graph. Let $\tau(G)$ and $\varsigma(G)$ denote the vertex cover number and the cluster vertex deletion number of $G$, respectively,  
\begin{align*}
\tau(G)&=\min\{|S| : \text{$S$ is a vertex cover of $G$}\},\\ 
\varsigma(G)&=\min\{|S| :  \text{$S$ is a cluster vertex deletion set of $G$}\}. 
\end{align*}

We will see that $\tau(G)$ and $\varsigma(G)$ can be computed efficiently when restricted to cographs. The calculation is based on the following fact:

\begin{lemma}\label{lem:join-cojoin}
For any (not necessarily $P_4$-free) graphs $G_1$ and $G_2$, the following relations hold:
\begin{align}
\tau(G_1\cojoin G_2)& =\tau(G_1)+\tau(G_2)\label{tau-cojoin};\\
\tau(G_1\join G_2) & = \min\{\tau(G_1)+|V(G_2)|, \tau(G_2)+|V(G_1)|\};\label{tau-join}\\ 
\varsigma(G_1\cojoin G_2)& =\varsigma(G_1)+\varsigma(G_2);\label{sigma-cojoin}\\
\varsigma(G_1\join G_2) & = \min\{\varsigma(G_1)+|V(G_2)|, \varsigma(G_2)+|V(G_1)|, \tau(\overline{G_1})+\tau(\overline{G_2})\}.\label{sigma-join}
\end{align}
\end{lemma}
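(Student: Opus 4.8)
The plan is to prove each of the four identities separately, treating the co-join cases first since they are immediate and the join cases second since they require more thought. Throughout, I would use the structural observation already noted in the excerpt: a graph is $P_3$-free (i.e.\ a cluster graph) if and only if it is a disjoint union of cliques, and I would freely use $\overline{G_1 \join G_2} = \overline{G_1} \cojoin \overline{G_2}$.

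For \eqref{tau-cojoin} and \eqref{sigma-cojoin}: in a co-join $G_1 \cojoin G_2$ there are no edges between $V(G_1)$ and $V(G_2)$, so any edgeless subgraph (resp.\ any cluster subgraph — here one must note that a connected component of the remaining graph lies entirely in $G_1$ or entirely in $G_2$) decomposes as the disjoint combination of its parts in $G_1$ and $G_2$. Hence an optimal deletion set is the union of optimal deletion sets for the two factors, giving additivity. This direction ($\le$) uses that the union of feasible sets is feasible; the reverse ($\ge$) uses that the restriction of a feasible set to each factor is feasible for that factor. Both are routine.

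For \eqref{tau-join}: since $G_1 \join G_2$ contains all edges between the two sides, if the final edgeless graph retains a vertex from each side we get an edge, a contradiction; so a vertex cover must contain all of $V(G_1)$ or all of $V(G_2)$. If it contains all of $V(G_1)$, the cheapest completion is a minimum vertex cover of $G_2$, costing $|V(G_1)| + \tau(G_2)$; symmetrically for the other side. Taking the minimum gives the formula. For \eqref{sigma-join}, which is the main point, I would argue that after deleting a feasible set $S$, the graph $H = (G_1 \join G_2) - S$ is a disjoint union of cliques; let $V_i' = V(G_i) \setminus S$. Because every vertex of $V_1'$ is adjacent to every vertex of $V_2'$, the set $V_1' \cup V_2'$ is contained in a single clique of $H$ \emph{unless} one of $V_1', V_2'$ is empty. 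This yields a case split: either $V_1' = \emptyset$ (so $S \supseteq V(G_1)$ and $S \cap V(G_2)$ is a cluster-vd set of $G_2$, cost $\ge |V(G_1)| + \varsigma(G_2)$), or $V_2' = \emptyset$ (symmetric, cost $\ge |V(G_2)| + \varsigma(G_1)$), or both $V_1', V_2'$ nonempty, in which case $H[V_1' \cup V_2']$ is a single complete graph, meaning $G_i[V_i']$ is complete for each $i$, i.e.\ $V(G_i) \setminus S$ is a clique in $G_i$, equivalently $S \cap V(G_i)$ is a vertex cover of $\overline{G_i}$; this gives cost $\ge \tau(\overline{G_1}) + \tau(\overline{G_2})$. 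Thus $\varsigma(G_1 \join G_2)$ is at least the minimum of the three quantities. For the matching upper bound I would exhibit a feasible set achieving each of the three values: take $V(G_1)$ together with an optimal cluster-vd set of $G_2$; symmetrically; and a set whose complement in $G_i$ is a maximum clique of $G_i$ for each $i$ (the union of the two complementary cliques is then itself a clique in the join, so the deletion leaves a single clique — a cluster graph).

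The main obstacle, and the only subtle point, is the clean justification in \eqref{sigma-join} that when both $V_1'$ and $V_2'$ are nonempty the surviving graph on $V_1' \cup V_2'$ is forced to be a \emph{single} clique (not several): this follows because any two surviving vertices, one from each side, are adjacent, so they lie in the same component, and any two surviving vertices on the same side each share a component with a fixed survivor on the other side, hence all survivors are in one component, which must therefore be a clique since $H$ is a cluster graph; and then $G_i[V_i']$ being an induced subgraph of that clique is itself complete. Once this is established, everything reduces to the elementary vertex-cover/clique complementation $\tau(\overline{G_i}) = |V(G_i)| - \omega(G_i)$, and the two inequalities close up. I do not foresee any genuine difficulty beyond bookkeeping; I would present the three-way case analysis carefully and keep the upper-bound constructions explicit.
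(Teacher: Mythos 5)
Your proposal is correct and follows essentially the same route as the paper: the co-join identities are handled by the obvious additivity, the join upper bounds come from the same three candidate deletion sets, and the lower bound for $\varsigma(G_1\join G_2)$ rests on the same three-way case analysis (your split on whether $V_1'$ or $V_2'$ is empty versus both surviving sides inducing a single clique is just a reorganization of the paper's split on whether $G_1-S_1$ or $G_2-S_2$ fails to be a clique). The one point you flag as subtle --- that two nonempty surviving sides force the remainder to be one clique --- is exactly the observation the paper relies on as well, so no further work is needed.
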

\begin{proof}
(\ref{tau-cojoin}) and (\ref{sigma-cojoin}) are trivial.

\smallskip\noindent
(\ref{tau-join}): Let $S_i$ be a vertex cover of $G_i$ of optimal size $\tau(G_i)$, $i=1,2$. 
Then $S_1\cup V(G_2)$ and $S_2\cup V(G_1)$ are vertex covers of $G_1\join G_2$. 
Hence $\tau(G_1\join G_2)\le\min\{|S_1|+|V(G_2)|,|S_2|+|V(G_1)|\}=\min\{\tau(G_1)+|V(G_2)|,\tau(G_2)+|V(G_1)|\}$. 

For the other direction, let $S$ be  a vertex cover of $G_1\join G_2$ of optimal size, and write $S_i=S\cap V(G_i)$. Then $S_i$ is a vertex cover of $G_i$, and moreover, $S_1=V(G_1)$ or else $S_2=V(G_2)$ (because $S_i=V(G_i)$ for some $i$ is needed to cover the edges between $G_1$ and $G_2$). 
Hence $\tau(G_1\join G_2)\ge\min\{|S_1|+|V(G_2)|,|S_2|+|V(G_1)|\}\ge \min\{\tau(G_1)+|V(G_2)|,\tau(G_2)+|V(G_1)|\}$.

\smallskip\noindent
(\ref{sigma-join}): Let $S_i$ be a cluster vertex deletion set of $G_i$ of optimal size $\varsigma(G_i)$, $i=1,2$. 
Then $S_1\cup V(G_2)$ and $S_2\cup V(G_1)$ are cluster vertex deletion sets of $G_1\join G_2$. 
Hence $\varsigma(G_1\join G_2)\le\min\{|S_1|+|V(G_2)|,|S_2|+|V(G_1)|\}=\min\{\varsigma(G_1)+|V(G_2)|,\varsigma(G_2)+|V(G_1)|\}$. 
Let $S_i$ be a vertex cover of $\overline{G_i}$ of optimal size $\tau(\overline{G_i})$, $i=1,2$. Then $S_1\cup S_2$ is a cluster vertex deletion set of $G_1\join G_2$, hence  $\varsigma(G_1\join G_2)\le |S_1|+|S_2|=\tau(\overline{G_1})+\tau(\overline{G_2})$.

For the other direction, let $S$ be  a cluster vertex deletion set of $G_1\join G_2$ of optimal size, and write $S_i=S\cap V(G_i)$. Then $S_i$ is a cluster vertex deletion set of $G_i$, and moreover,
\begin{itemize}
\item if $G_1-S_1$ is not a clique then $S_2=V(G_2)$, likewise
\item if $G_2-S_2$ is not a clique then $S_1=V(G_1)$.
\end{itemize}
In these two cases, $|S|=\varsigma(G_1\join G_2)\ge \min\{|S_1|+|V(G_2)|, |S_2|+|V(G_1)|\}\ge \min\{\varsigma(G_1)+|V(G_2)|,\varsigma(G_2)+|V(G_1)|\}$.
In the third case where each of $G_1-S_1$ and $G_2-S_2$ is a clique, $S_1$ and $S_2$ are vertex covers of $\overline{G_1}$ and $\overline{G_2}$, respectively.
Hence in this case, $|S|=\varsigma(G_1\join G_2)=|S_1|+|S_2|\ge \tau(\overline{G_1})+\tau(\overline{G_2})$.
\end{proof}

\begin{remark}\label{rem:1}
For any integer $r\ge2$, Lemma~\ref{lem:join-cojoin} holds accordingly for $G_1\cojoin G_2\cojoin\cdots\allowbreak\cojoin G_r=\allowbreak G_1\cojoin(G_2\cojoin\cdots\cojoin G_r)$ and $G_1\join G_2\join\cdots\join G_r=G_1\join(G_2\join\cdots\join G_r)$. We also note that Lemma~\ref{lem:join-cojoin} holds for the weighted version, too.
\end{remark}

With each cograph $G=(V,E)$, one can associate a so-called \emph{cotree} $T$ of~$G$ as follows.
\begin{itemize}
\item The leaves of $T$ are the vertices of $G$;
\item Every internal node of $T$ has a label~$\cojoin$ or~$\join$, and has at least two children;
\item No two internal nodes of $T$ with the same label are adjacent;
\item Two vertices $u$ and $v$ of $G$ are (non-)adjacent if and only if the least common ancestor of~$u$ and~$v$ in~$T$ has label~$\join$ (respectively, $\cojoin$). 
\end{itemize}
In particular, the cotree of an $n$-vertex cograph has at most $2n-1$ nodes. 

Note that, for any internal node $v$ of $T$, the subtree $T_v$ of $T$ rooted at $v$ is the cotree of the subgraph of $G$ induced by the leaves of ${T}_v$. 
The cograph corresponding to ${T}_v$ where $v$ has label $\cojoin$ is the disjoint union of the cographs corresponding to the children of $v$. 
The cograph corresponding to ${T}_v$ where $v$ has label $\join$ is the join of the cographs corresponding to the children of $v$. 

In particular, the cotree of $\overline{G}$ can be obtained from the cotree of $G$ by changing the label~$\cojoin$ to~$\join$ and~$\join$ to~$\cojoin$.
 
In~\cite{CorneilPS85}, a linear time algorithm is given for recognizing if a given graph is a cograph, and if so, constructing its cotree. 
Note that the cotree can immediately  be transformed to an equivalent binary tree; see Fig.~\ref{fig:cotree} for an example of a cograph $G$, the cotree of $G$ and its binary version. 
For simplification, we will use the binary cotree in our algorithm below.

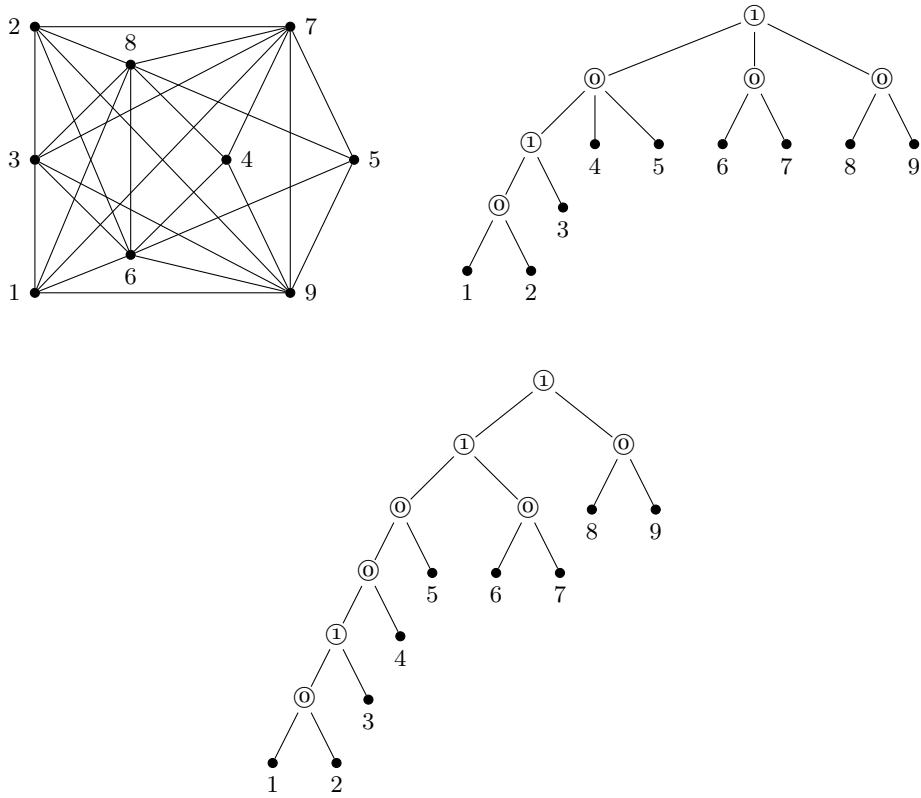
\begin{figure}[!ht]
\begin{center}
\tikzstyle{vertex}=[draw,circle,inner sep=1.2pt,fill=black] 

\begin{tikzpicture}[scale=.42] 
\node[vertex] (1) at (1,.8)  [label=left:{\small $1$}] {};
\node[vertex] (2) at (1,9.2)  [label=left:{\small $2$}] {};
\node[vertex] (3) at (1,5)  [label=left:{\small $3$}] {};
\node[vertex] (4) at (7,5)  [label=right:{\small $4$}] {};
\node[vertex] (5) at (11,5)  [label=right:{\small $5$}] {};
\node[vertex] (6) at (4,2)  [label=below:{\small $6$}] {};
\node[vertex] (7) at (9,9.2)  [label=right:{\small $7$}] {};
\node[vertex] (8) at (4,8) [label=above:{\small $8$}] {};
\node[vertex] (9) at (9,.8)  [label=right:{\small $9$}] {};
\draw (1)--(3)--(2);
\draw (6)--(8)--(7)--(9)--(6);
\draw (6)--(1)--(8); \draw (7)--(1)--(9);
\draw (6)--(2)--(8); \draw (7)--(2)--(9);
\draw (6)--(3)--(8); \draw (7)--(3)--(9);
\draw (6)--(4)--(8); \draw (7)--(4)--(9);
\draw (6)--(5)--(8); \draw (7)--(5)--(9);
\end{tikzpicture} 
\qquad
\begin{tikzpicture}[scale=.42] 
\node[vertex] (1) at (1,1)  [label=below:{\small $1$}] {};
\node[vertex] (2) at (3,1)  [label=below:{\small $2$}] {};
\node[vertex] (3) at (4,3)  [label=below:{\small $3$}] {};
\node[vertex] (4) at (5,5)  [label=below:{\small $4$}] {};
\node[vertex] (5) at (7,5)  [label=below:{\small $5$}] {};
\node[vertex] (6) at (9,5)  [label=below:{\small $6$}] {};
\node[vertex] (7) at (11,5)  [label=below:{\small $7$}] {};
\node[vertex] (8) at (13,5) [label=below:{\small $8$}] {};
\node[vertex] (9) at (15,5)  [label=below:{\small $9$}] {};
\node[knoten] (n0) at (2,3)  {\cojoin};
\node[knoten] (n1) at (3,5)  {\join};
\draw (1)--(n0)--(2); \draw (n0)--(n1)--(3);
\node[knoten] (n01) at (5,7)  {\cojoin};
\draw (n1)--(n01)--(4); \draw (n01)--(5);
\node[knoten] (n02) at (10,7)  {\cojoin};
\draw (6)--(n02)--(7);
\node[knoten] (n03) at (14,7)  {\cojoin};
\draw (8)--(n03)--(9);
\node[knoten] (n11) at (10,9)  {\join};
\draw (n01)--(n11)--(n02); \draw (n03)--(n11);
\end{tikzpicture} 

\vspace*{5ex}
\begin{tikzpicture}[scale=.42] 
\node[vertex] (1) at (1,1)  [label=below:{\small $1$}] {};
\node[vertex] (2) at (3,1)  [label=below:{\small $2$}] {};
\node[vertex] (3) at (4,3)  [label=below:{\small $3$}] {};
\node[vertex] (4) at (5,5)  [label=below:{\small $4$}] {};
\node[vertex] (5) at (6,7)  [label=below:{\small $5$}] {};
\node[vertex] (6) at (8,7)  [label=below:{\small $6$}] {};
\node[vertex] (7) at (10,7)  [label=below:{\small $7$}] {};
\node[vertex] (8) at (11,9) [label=below:{\small $8$}] {};
\node[vertex] (9) at (13,9)  [label=below:{\small $9$}] {};
\node[knoten] (n0) at (2,3)  {\cojoin}; 
\node[knoten] (n1) at (3,5)  {\join};
\draw (1)--(n0)--(2); \draw (n0)--(n1)--(3);
\node[knoten] (n01) at (4,7)  {\cojoin};
\draw (n1)--(n01)--(4); 
\node[knoten] (n02) at (5,9)  {\cojoin};
\draw (n01)--(n02)--(5);
\node[knoten] (n03) at (9,9)  {\cojoin};
\draw (6)--(n03)--(7);
\node[knoten] (n11) at (7,11)  {\join};
\draw (n02)--(n11)--(n03); 
\node[knoten] (n04) at (12,11) {\cojoin};
\draw (8)--(n04)--(9);
\node[knoten] (n111) at (9.5,13) {\join};
\draw (n11)--(n111)--(n04);
\end{tikzpicture} 
\caption{A cograph $G$, the cotree of $G$ and its binary version.}\label{fig:cotree}
\end{center}
\end{figure}

Now, given a cograph $G$ together with its binary cotree $T$, the bottom-up Algorithm~\ref{alg:sigma} below computes the cluster vertex deletion number $\varsigma(G)$ of $G$, as suggested by Lemma~\ref{lem:join-cojoin}. 
The algorithm traverses the cotree $T$ by post-order, that is, for the current node~$v$ of $T$, it recursively traverses the left subtree of $T_v$, then the right subtree of $T_v$, and finally visits the current node~$v$. The algorithm uses the following notations. For a node $v$ of $T$, 

\begin{itemize}
\item if $v$ is an internal node then $\ell(v)$ and $r(v)$ stands for the left child and the right child of~$v$, respectively;
\item $n(v)$ denotes the size of the subgraph of $G$ induced by the leaves of $T_v$. Thus, if $v$ is a leaf then $n(v)=1$ and if $v$ is the root of $T$ then $n(v)=|V(G)|$;
\item $\varsigma(v)$ denotes the cluster vertex deletion number of the subgraph of $G$ induced by the leaves of $T_v$. Thus, if $v$ is a leaf then $\varsigma(v)=0$ and if $v$ is the root of $T$ then $\varsigma(v)=\varsigma(G)$;
\item $\overline{\tau}(v)$ denotes the vertex cover number of the \emph{complement} of the subgraph of $G$ induced by the leaves of $T_v$. Thus, if $v$ is a leaf then $\overline{\tau}(v)=0$ and if $v$ is the root of $T$ then $\overline{\tau}(v)=\tau(\overline{G})$.
\end{itemize}

\begin{algorithm}[!ht]
\DontPrintSemicolon 
\KwIn{A cograph $G=(V,E)$ together with its (binary) cotree $T$.}
\KwOut{$\varsigma(G)$, the cluster vertex deletion number of $G$}

\medskip
Traverse $T$ by post-order and let $v$ be the current node\;
\If{$v$ is a leaf}{
    $n(v)\leftarrow 1$; 
    $\overline{\tau}(v)\leftarrow 0$;
    $\varsigma(v)\leftarrow 0$\;
    }
\Else{
       $n(v)\leftarrow n(\ell(v))+n(r(v))$\;
        \If{$v$ has label $\cojoin$}{
         $\overline{\tau}(v)\leftarrow \min\{\overline{\tau}(\ell(v))+n(r(v)), \overline{\tau}(r(v))+n(\ell(v))\}$\;
         $\varsigma(v)\leftarrow \varsigma(\ell(v))+\varsigma(r(v))$\;
         }
       \If{$v$ has label $\join$}{
         $\overline{\tau}(v)\leftarrow \overline{\tau}(\ell(v))+\overline{\tau}(r(v))$\;
         $\varsigma(v)\leftarrow \min\{\varsigma(\ell(v))+n(r(v)), \varsigma(r(v))+n(\ell(v)), \overline{\tau}(v)\}$\;
         }
}
\caption{\texttt{computing cluster vertex deletion number}}
\label{alg:sigma}
\end{algorithm}

\begin{proposition}\label{pro:P4free}
Given a  $P_4$-free $n$-vertex graph $G$ together with its cotree, Algorithm~\ref{alg:sigma} correctly computes the cluster deletion number $\varsigma(G)$ of $G$ in $O(n)$ time.
\end{proposition}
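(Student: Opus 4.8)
The plan is to prove correctness by structural induction on the binary cotree $T$, matching each line of Algorithm~\ref{alg:sigma} against the recurrences of Lemma~\ref{lem:join-cojoin}, and then to read off the $O(n)$ bound from the size of $T$ together with the fact that only constantly much work is done per node.

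First I would fix the bookkeeping. For a node $v$ of $T$, let $G_v$ denote the subgraph of $G$ induced by the leaves of $T_v$; as recalled just above the algorithm, $T_v$ is the (binary) cotree of $G_v$, and if $v$ is internal with children $\ell(v), r(v)$ then $G_v = G_{\ell(v)}\cojoin G_{r(v)}$ when $v$ carries the label $\cojoin$ and $G_v = G_{\ell(v)}\join G_{r(v)}$ when $v$ carries the label $\join$. I would also record the crucial fact (noted in the excerpt) that interchanging the two labels throughout $T_v$ produces the cotree of $\overline{G_v}$, so that $\overline{G_v} = \overline{G_{\ell(v)}}\join\overline{G_{r(v)}}$ in the $\cojoin$-case and $\overline{G_v} = \overline{G_{\ell(v)}}\cojoin\overline{G_{r(v)}}$ in the $\join$-case. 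The statement proved by induction on the number of nodes of $T_v$ is that, once the post-order traversal has visited $v$, the algorithm has stored $n(v)=|V(G_v)|$, $\overline{\tau}(v)=\tau(\overline{G_v})$ and $\varsigma(v)=\varsigma(G_v)$.

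The base case, $v$ a leaf, is immediate: $G_v$ is a single vertex, so $|V(G_v)|=1$, $\varsigma(G_v)=0$ (a one-vertex graph is $P_3$-free) and $\tau(\overline{G_v})=0$ (its complement has no edge), which are exactly the values assigned in the leaf branch. For the inductive step at an internal node $v$ I would invoke the induction hypothesis on $\ell(v)$ and $r(v)$ and then check line by line: $n(v)=n(\ell(v))+n(r(v))$ is trivial; if $v$ has label $\cojoin$ then $\varsigma(v)$ is governed by $(\ref{sigma-cojoin})$ and $\overline{\tau}(v)$ — being $\tau$ of the \emph{join} $\overline{G_{\ell(v)}}\join\overline{G_{r(v)}}$ — is governed by $(\ref{tau-join})$; if $v$ has label $\join$ then $\overline{\tau}(v)$ — being $\tau$ of the \emph{co-join} $\overline{G_{\ell(v)}}\cojoin\overline{G_{r(v)}}$ — is governed by $(\ref{tau-cojoin})$, and substituting the resulting value $\overline{\tau}(v)=\overline{\tau}(\ell(v))+\overline{\tau}(r(v))$ for the third term $\tau(\overline{G_{\ell(v)}})+\tau(\overline{G_{r(v)}})$ in $(\ref{sigma-join})$ reproduces precisely the update of $\varsigma(v)$. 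Taking $v$ to be the root of $T$ then yields $\varsigma(G)=\varsigma(v)$, the algorithm's output.

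For the running time I would use that the binary cotree of an $n$-vertex cograph is a strictly binary tree with $n$ leaves, hence $2n-1$ nodes in total; the post-order traversal touches each node once and performs only a constant number of additions and minima per node, so the algorithm runs in $O(n)$ time. The only delicate point — and the one I expect to be the main obstacle to getting right — is the label-swapping: the quantity $\overline{\tau}(v)$ tracks a vertex cover of the \emph{complement} $\overline{G_v}$, whose cotree has the labels of $T_v$ interchanged, so the recurrence applied to $\overline{\tau}$ at a $\cojoin$-node must be the \emph{join} formula $(\ref{tau-join})$ and at a $\join$-node the \emph{co-join} formula $(\ref{tau-cojoin})$; once the two graphs $G_v$ and $\overline{G_v}$ are kept cleanly apart in the induction, everything reduces to quoting Lemma~\ref{lem:join-cojoin}. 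I would also note in passing that the binary cotree may fail the usual convention that adjacent internal nodes carry distinct labels, but this is harmless, since Lemma~\ref{lem:join-cojoin} is stated for arbitrary graphs and both $\cojoin$ and $\join$ are associative.
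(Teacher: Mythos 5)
Your proof is correct and follows the same route as the paper: the paper's proof simply states that correctness "directly follows from Lemma~\ref{lem:join-cojoin}" and that constant work per cotree node gives $O(n)$ time, and your structural induction (including the careful label-swap so that $\overline{\tau}$ at a $\cojoin$-node uses~(\ref{tau-join}) and at a $\join$-node uses~(\ref{tau-cojoin})) is exactly the argument being implicitly invoked there, spelled out in full.
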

\begin{proof}
The correctness of Algorithm~\ref{alg:sigma} directly follows from Lemma~\ref{lem:join-cojoin}. 
Since per node in the cotree a constant number of operations is performed, the algorithm runs in $O(n)$ time.
\end{proof}

We remark that Algorithm~\ref{alg:sigma} can be slightly modified for computing a minimum cluster vertex deletion set. Also, since Lemma~\ref{lem:join-cojoin} holds accordingly for the weighted version, the minimum weight cluster vertex deletion number of cographs can be computed in linear time, too. 

\section{Computing the connected cluster vertex deletion number of cographs using the cotrees}\label{app:P4free-connected}
Recall that $S\subseteq V(G)$ is a connected cluster vertex deletion set if $G-S$ is a cluster graph and $G[S]$ is connected. 
Note that~$G$ has a connected cluster vertex deletion set if and only if~$G$ has at most one connected component that contains an induced~$P_3$ (if $G$ has more than two connected components containing an induced~$P_3$ then any cluster vertex deletion set must contain vertices in different connected components). 
Let $\varsigma_c(G)$ denote the connected cluster vertex deletion number of $G$, 
\begin{align*}
\varsigma_c(G)&=\min\{|S| :  \text{$S$ is a connected cluster vertex deletion set of $G$}\}. 
\end{align*}
(We set $\varsigma_c(G)=\infty$ if $G$ has no connected cluster vertex deletion set.) 

When computing $\varsigma_c(G)$, we will have to consider a special case of (connected) cluster vertex deletion.  A set $S\subseteq V(G)$ is a \emph{(connected) clique deletion set} if $G-S$ is a clique (and $G[S]$ is connected). Let $\theta(G)$ and $\theta_c(G)$ denote the clique vertex deletion number and the connected clique vertex deletion number of $G$, respectively,  
\begin{align*}
\theta(G)&=\min\{|S| : \text{$S$ is a clique deletion set of $G$}\},\\ 
\theta_c(G)&=\min\{|S| :  \text{$S$ is a connected clique deletion set of $G$}\}. 
\end{align*}
(Again, we set $\theta_c(G)=\infty$ if $G$ has no connected clique deletion set.) 
Notice that $\theta(G)=\tau(\overline{G})$, and thus $\theta(G)$ can be computed in linear time when restricted to cographs (by Lemma~\ref{lem:join-cojoin} and Proposition~\ref{pro:P4free}.) 
Notice also that $\theta(G)\le\theta_c(G)$ and $\varsigma(G)\le\varsigma_c(G)$.  
We will see in this section that $\theta_c(G)$ and $\varsigma_c(G)$ can be computed efficiently when restricted to cographs. 

We first consider the connected clique vertex deletion number. The following fact follows immediately from the definition:
\begin{lemma}\label{lem:conCliqueDcojoin}
For arbitrary graphs $G_1$ and $G_2$,
\[
\theta_c(G_1\cojoin G_2)=\begin{cases}
   \infty,&\text{if $G_1$ or $G_2$ is disconnected, or both $G_1, G_2$}\\
           &\text{are non-complete;}\\
   |V(G_1)|,&\text{if $G_2$ is a complete and $G_1$ a connected}\\
                &\text{non-complete graph;}\\
   |V(G_2)|,&\text{if $G_1$ is a complete and $G_2$ a connected}\\ 
                &\text{non-complete graph;}\\
   \min\{|V(G_1)|,|V(G_2)|\},&\text{if $G_1$ and $G_2$ are complete graphs.}
\end{cases}
\]
\end{lemma}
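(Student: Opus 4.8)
The plan is to verify each of the four cases in the formula directly from the definition of $\theta_c$, using the fact that in a co-join $G_1 \cojoin G_2$ the two parts share no edges, so any induced $P_3$ lives entirely inside one part, and, crucially, any \emph{connected} vertex set meeting both $V(G_1)$ and $V(G_2)$ is impossible since there is no edge between the parts. First I would record the two structural observations that drive everything: (i) $G_1 \cojoin G_2$ is a clique if and only if at most one of $G_1, G_2$ is nonempty and that one is a clique — in particular, if both parts are nonempty then $G_1 \cojoin G_2$ is never a clique (the two parts are in different components); and (ii) for any $S \subseteq V(G_1 \cojoin G_2)$, $G[S]$ connected forces $S \subseteq V(G_1)$ or $S \subseteq V(G_2)$ (once both parts are nonempty).

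Next I would handle the $\infty$ case. Suppose $S$ is a connected clique deletion set of $G_1 \cojoin G_2$. By observation (ii), $S$ lies inside one part, say $S \subseteq V(G_1)$; then $(G_1 \cojoin G_2) - S$ contains all of $V(G_2)$ and all of $V(G_1)\setminus S$, and for this to be a clique we need, by observation (i), $V(G_1)\setminus S = \emptyset$, i.e.\ $S = V(G_1)$, and $G_2$ complete. But $S = V(G_1)$ being a connected clique deletion set requires $G_1$ connected. So a connected clique deletion set exists only when one of the parts is complete and the other is connected; equivalently, $\theta_c(G_1\cojoin G_2) = \infty$ exactly when $G_1$ or $G_2$ is disconnected, or both are non-complete. (The degenerate subcase where a part is empty is subsumed: an empty graph is complete and connected by convention, and then $\theta_c$ of the co-join equals $\theta_c$ of the other part, consistent with the stated formula.)

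For the finite cases I would then simply identify the optimal $S$. If $G_2$ is complete and $G_1$ is connected and non-complete, then $S \subseteq V(G_1)$ is forced and, as argued, $S = V(G_1)$ is forced, giving $\theta_c = |V(G_1)|$; symmetrically for the third case. If both $G_1$ and $G_2$ are complete graphs, then taking $S = V(G_1)$ works (leaving the clique $G_2$) and $S = V(G_2)$ works (leaving the clique $G_1$), and each such $S$ induces a complete hence connected graph; no smaller connected clique deletion set can exist because any $S$ contained in one part with $V(G_i)\setminus S \neq \emptyset$ leaves both parts nonempty in the complement, which is not a clique — so $\theta_c = \min\{|V(G_1)|, |V(G_2)|\}$. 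Assembling these four cases gives the displayed formula.

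I do not expect a serious obstacle here; the lemma really is immediate once observations (i) and (ii) are isolated, which is presumably why the paper calls it immediate from the definition. The only mild subtlety is bookkeeping around the degenerate cases (empty parts, single-vertex parts) and making sure the convention that the empty graph and the one-vertex graph count as complete and connected is applied consistently so that the formula's four branches remain exhaustive and mutually consistent; I would state that convention explicitly at the start of the proof and then the case analysis closes cleanly.
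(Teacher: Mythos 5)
Your proof is correct and matches the paper's treatment: the paper gives no written proof, stating only that the lemma ``follows immediately from the definition,'' and your two observations (a co-join with both parts nonempty is never a clique, and a connected set $S$ cannot meet both parts) together with the resulting case check are exactly that immediate argument.
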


The following two lemmas provide a formula for computing the connected clique vertex deletion number of the join of two graphs.

\begin{lemma}\label{lem:conCliqueDjoin-a} 
Let $G_1$ be a complete graph and let $G_2$ be an arbitrary graph. 
Then:
\[
\theta_c(G_1\join G_2) = \min\left\{\theta_c(G_2), 1+\theta(G_2)\right\}.
\]
\end{lemma}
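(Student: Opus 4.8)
The statement to prove is the formula
\[
\theta_c(G_1\join G_2) = \min\left\{\theta_c(G_2),\, 1+\theta(G_2)\right\}
\]
when $G_1$ is a complete graph and $G_2$ is arbitrary. My plan is to prove the two inequalities separately. The key structural observation I would start from is this: if $S$ is any connected clique deletion set of $G=G_1\join G_2$, then $G-S$ is a clique, and since every vertex of $G_1$ is adjacent to every vertex of $G_2$, the only constraint on $G-S$ is that both $(G_1-S)$ and $(G_2-S)$ are cliques and one of $V(G_1)\setminus S$, $V(G_2)\setminus S$ is actually allowed to be arbitrary (because $G_1$ is already complete, the part coming from $G_1$ never obstructs cliqueness). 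So effectively $G-S$ is a clique if and only if $(G_2-S)\cap V(G_2)$ is a clique, i.e. $S\cap V(G_2)$ is a clique deletion set of $G_2$. The interesting part is the \emph{connectedness} of $G[S]$.

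\textbf{Upper bound.} For $\theta_c(G_1\join G_2)\le \theta_c(G_2)$: take an optimal connected clique deletion set $T$ of $G_2$; then $T$ is a clique deletion set of $G$ (as noted above), and $G[T]$ is connected because it induces the same subgraph as in $G_2$. Hence $\theta_c(G)\le |T|=\theta_c(G_2)$. For $\theta_c(G_1\join G_2)\le 1+\theta(G_2)$: take an optimal (not necessarily connected) clique deletion set $T'$ of $G_2$, pick any single vertex $z\in V(G_1)$, and set $S=T'\cup\{z\}$. Then $G-S$ is a clique (again by the observation, since $S\cap V(G_2)=T'$), and $G[S]$ is connected: every vertex of $T'\subseteq V(G_2)$ is adjacent to $z\in V(G_1)$ by the join. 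So $\theta_c(G)\le |T'|+1=1+\theta(G_2)$. This handles the easy direction.

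\textbf{Lower bound.} Now let $S$ be an optimal connected clique deletion set of $G$, so $|S|=\theta_c(G_1\join G_2)$, and write $S_1=S\cap V(G_1)$, $S_2=S\cap V(G_2)$. As observed, $S_2$ is a clique deletion set of $G_2$, so $|S_2|\ge\theta(G_2)$. I split into two cases according to whether $S_1=\emptyset$ or not. If $S_1=\emptyset$, then $S=S_2\subseteq V(G_2)$ and $G[S]=G_2[S_2]$ is connected, so $S_2$ is a connected clique deletion set of $G_2$ and $|S|=|S_2|\ge\theta_c(G_2)\ge\min\{\theta_c(G_2),1+\theta(G_2)\}$. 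If $S_1\ne\emptyset$, then $|S|=|S_1|+|S_2|\ge 1+|S_2|\ge 1+\theta(G_2)\ge\min\{\theta_c(G_2),1+\theta(G_2)\}$. Either way $\theta_c(G)\ge\min\{\theta_c(G_2),1+\theta(G_2)\}$, completing the proof.

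\textbf{Main obstacle.} I expect the only real subtlety to be the clean justification of the claim \emph{``$S$ is a clique deletion set of $G_1\join G_2$ if and only if $S\cap V(G_2)$ is a clique deletion set of $G_2$''}, which silently uses that $G_1$ is complete and that $G-S$, being an induced subgraph of a join, is a clique precisely when both its halves are cliques (the cross-edges are automatically all present). One should also double-check the degenerate corner cases --- e.g. $G_2$ empty or a single vertex, or $G_2$ already complete so $\theta(G_2)=0$ and the formula gives $\theta_c(G_1\join G_2)\le 1$, which is correct since then $G$ itself is complete and the empty set works only if we read $\theta_c$ of a complete graph as $0$; so one must be slightly careful that when $\theta(G_2)=0$ the term $1+\theta(G_2)=1$ is not spuriously forced, because in that case $\theta_c(G_2)=0$ as well and the minimum is $0$. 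Handling these boundary readings consistently with the convention $\theta_c(\text{complete})=0$ is the one place where a careless argument could go wrong.
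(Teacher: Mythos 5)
Your proof is correct and follows essentially the same route as the paper's: both directions rest on the observation that $S\cap V(G_2)$ must be a clique deletion set of $G_2$, with the lower bound split into cases by connectivity (you case on $S_1=\emptyset$ versus $S_1\neq\emptyset$, the paper on whether $G_2[S_2]$ is connected, which amounts to the same dichotomy). Your upper-bound argument, proving the two inequalities $\theta_c(G_1\join G_2)\le\theta_c(G_2)$ and $\theta_c(G_1\join G_2)\le 1+\theta(G_2)$ separately, is if anything slightly cleaner than the paper's single-witness case analysis.
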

\begin{proof}
Let  $S$ be  an optimal connected clique vertex deletion set of $G_1\join G_2$, and write $S_i=S\cap V(G_i)$, $i=1,2$. Then $S_1$ is a (connected) clique deletion set of $G_1$ (possibly empty) and~$S_2$ is a clique deletion set of $G_2$. 
Thus, $|S_2|\ge\theta(G_2)$. Moreover, if $G_2[S_2]$ is connected then $|S_2|\ge\theta_c(G_2)$, and hence in this case, $\theta_c(G_1\join G_2)=|S|=|S_1|+|S_2|\ge \theta_c(G_2)$. 
If $G_2[S_2]$ is disconnected then $|S_1\cap V(G_1)|=1$ (due to the connectedness and the optimality of~$S$) and $|S|\ge 1+\theta(G_2)$. 
Hence, in this case, $\theta_c(G_1\join G_2)=|S|\ge 1+\theta(G_2)$.

For the other direction, let $S$ be a clique vertex deletion set of $G_2$ of optimal size~$\theta(G_2)$. If $G_2[S]$ is connected then $S$ is a connected clique deletion set of $G_1\join G_2$, hence $\theta_c(G_1\join G_2)\le |S|=\theta_c(G_2)$.  
If $G_2[S]$ is disconnected then, for any vertex $u\in V(G_1)$, $S\cup\{u\}$ is a connected clique deletion set of $G_1\join G_2$, hence $\theta_c(G_1\join G_2)\le |S\cup\{u\}| =1+\theta(G_2)$.
\end{proof}

\begin{lemma}\label{lem:conCliqueDjoin-b} 
Let $G_1$ and $G_2$ be two arbitrary non-complete graphs. 
Then:
\[
\theta_c(G_1\join G_2) = \theta(G_1)+\theta(G_2).
\]
\end{lemma}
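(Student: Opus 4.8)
The claim is that for two arbitrary non-complete graphs $G_1$ and $G_2$, $\theta_c(G_1\join G_2) = \theta(G_1)+\theta(G_2)$. The plan is to prove the two inequalities separately, with the key observation being that since both $G_i$ are non-complete, an optimal clique deletion set $S_i$ of $G_i$ is necessarily nonempty (one must delete at least one vertex to turn a non-complete graph into a clique), and the join structure makes any union $S_1\cup S_2$ with $S_i\ne\emptyset$ automatically induce a connected subgraph.

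\textbf{Upper bound.} First I would take $S_i$ to be an optimal clique deletion set of $G_i$, so $|S_i|=\theta(G_i)$, and argue $S=S_1\cup S_2$ is a \emph{connected} clique deletion set of $G_1\join G_2$. For the cluster part: $(G_1\join G_2)-S = (G_1-S_1)\join(G_2-S_2)$ is the join of two cliques, hence a clique. For connectedness: since $G_1$ and $G_2$ are non-complete, $\theta(G_i)\ge 1$, so both $S_1$ and $S_2$ are nonempty; in the join every vertex of $S_1$ is adjacent to every vertex of $S_2$, so $(G_1\join G_2)[S]$ is connected. Hence $\theta_c(G_1\join G_2)\le |S_1|+|S_2| = \theta(G_1)+\theta(G_2)$.

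\textbf{Lower bound.} For the converse, let $S$ be an optimal connected clique deletion set of $G_1\join G_2$ and write $S_i=S\cap V(G_i)$. Then $(G_1\join G_2)-S=(G_1-S_1)\join(G_2-S_2)$ must be a clique, which forces each $G_i-S_i$ to be a clique, i.e.\ $S_i$ is a clique deletion set of $G_i$. Therefore $|S|=|S_1|+|S_2|\ge\theta(G_1)+\theta(G_2)$. Combining the two bounds gives the equality.

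\textbf{Main obstacle.} This lemma is genuinely easy — the only thing to get right is the use of the non-completeness hypothesis to guarantee $\theta(G_i)\ge 1$, which is exactly what makes the greedy union $S_1\cup S_2$ connected (and thus witnesses the upper bound); without it one would be in the regime of Lemma~\ref{lem:conCliqueDjoin-a}. I would also state explicitly, once, the identity $(G_1\join G_2)-S=(G_1-S_1)\join(G_2-S_2)$ and the fact that a join of two graphs is a clique iff both summands are cliques, since both directions of the proof rely on it. No nontrivial calculation is required.
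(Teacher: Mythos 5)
Your proof is correct and follows essentially the same route as the paper: the lower bound by restricting an optimal connected clique deletion set to each side, and the upper bound by taking the union of optimal clique deletion sets, which is nonempty on both sides (by non-completeness) and hence connected across the join. Your write-up merely makes explicit a few facts the paper leaves implicit, such as $(G_1\join G_2)-S=(G_1-S_1)\join(G_2-S_2)$ and why this forces each $G_i-S_i$ to be a clique.
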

\begin{proof}
Let $S$ be an optimal connected clique deletion set of $G_1\join G_2$ and write $S_i=S\cap V(G_i)$, $i=1,2$. Then $S_i$ is a clique deletion set of $G_i$, hence $|S|=\theta_c(G_1\join G_2) = |S_1|+|S_2|\ge \theta(G_1)+\theta(G_2)$. 

For the other direction let $T_i$ be an optimal clique deletion set of $G_i$, $i=1,2$. By assumption, $T_i\not=\emptyset$, hence $T_1\cup T_2$ is a connected clique deletion set of $G_1\join G_2$. Therefore, $\theta_c(G_1\join G_2) \le |T_1|+|T_2|= \theta(G_1)+\theta(G_2)$. 
\end{proof}

We now consider the connected cluster vertex deletion number of the disjoint union and the join of two graphs. 
The following fact follows immediately from the definition:
\begin{lemma}\label{lem:conCVDcojoin}
For arbitrary graphs $G_1$ and $G_2$,
\[
\varsigma_c(G_1\cojoin G_2)=\begin{cases}
   \infty,&\text{if $G_1$ or $G_2$ has two non-clique components, or both $G_1,G_2$}\\
           &\text{are not $P_3$-free;}\\
   \varsigma_c(C),&\text{if one of $G_1$ and $G_2$ is $P_3$-free and $C$ is the unique non-clique}\\
        &\text{component of the other;}\\
   0,&\text{if $G_1$ and $G_2$ are $P_3$-free.}
\end{cases}
\]
\end{lemma}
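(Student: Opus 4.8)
The plan is to reduce the whole statement to the elementary criterion recalled just before the lemma --- that a graph has a connected cluster vertex deletion set if and only if at most one of its connected components contains an induced $P_3$ --- combined with the trivial fact that the connected components of the disjoint union $G_1\cojoin G_2$ are exactly the components of $G_1$ together with the components of $G_2$.

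First I would record three routine observations. (i) A connected graph is a clique precisely when it is $P_3$-free, so a connected non-clique graph contains an induced $P_3$; hence a graph fails to be $P_3$-free iff it has at least one non-clique component, and ``component containing an induced $P_3$'' is synonymous with ``non-clique component''. (ii) If $S$ is a connected cluster vertex deletion set of a graph and $S\neq\emptyset$, then connectedness of $G[S]$ forces $S$ to lie inside a single component of $G$. (iii) If a component $D$ of $G$ is disjoint from $S$, then $D$ is unchanged in $G-S$, so for $G-S$ to be a cluster graph $D$ must already be a clique.

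Then I would split into the three cases of the claimed formula. If both $G_1$ and $G_2$ are $P_3$-free, then $G_1\cojoin G_2$ is $P_3$-free, i.e.\ already a cluster graph, so $\varsigma_c = 0$. If $G_1$ or $G_2$ has two non-clique components, or both are non-$P_3$-free, then $G_1\cojoin G_2$ has at least two components containing an induced $P_3$, so by the recalled criterion no connected cluster vertex deletion set exists and $\varsigma_c = \infty$. In the remaining case exactly one of $G_1,G_2$ is $P_3$-free and the other has a unique non-clique component $C$; then $C$ is the unique component of $G_1\cojoin G_2$ containing an induced $P_3$. For the equality $\varsigma_c(G_1\cojoin G_2)=\varsigma_c(C)$ I would prove both inequalities: a connected cluster vertex deletion set of $C$ is also one of $G_1\cojoin G_2$ (every remaining component being a clique), giving ``$\le$''; conversely any connected cluster vertex deletion set $S$ of $G_1\cojoin G_2$ is nonempty (the union is not a cluster graph), hence confined to one component by (ii), and by (iii) it cannot sit in a clique component (else $C$ would survive intact in $G-S$), so $S\subseteq V(C)$ and $S$ is a connected cluster vertex deletion set of $C$, giving ``$\ge$''.

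Finally I would verify that the three cases are mutually exclusive and jointly exhaustive by classifying the pair $(G_1,G_2)$ according to whether each of $G_1,G_2$ has $0$, $1$, or at least $2$ non-clique components, and matching each of the nine resulting configurations to exactly one case. I do not anticipate any genuine obstacle: the lemma is a bookkeeping exercise once the component criterion for connected cluster vertex deletion is available, and the only place deserving a word of care is the equality case, where one must explicitly rule out a connected deletion set hiding inside a clique component of the disjoint union.
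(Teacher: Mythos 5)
Your proposal is correct and matches the paper's approach: the paper states this lemma with the remark that it ``follows immediately from the definition,'' relying on exactly the component criterion you invoke (a connected cluster vertex deletion set exists iff at most one component contains an induced $P_3$, and a nonempty connected deletion set must live inside a single non-clique component). Your write-up simply makes explicit the bookkeeping that the paper leaves to the reader, including the correct handling of the equality case $\varsigma_c(G_1\cojoin G_2)=\varsigma_c(C)$.
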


Lemmas~\ref{lem:conCVDjoin-a} and~\ref{lem:conCVDjoin-c} below provide a formula for computing the connected cluster vertex deletion number of the join of two graphs.

\begin{lemma}\label{lem:conCVDjoin-a} 
Let $G_1$ be a complete graph and let $G_2$ be an arbitrary graph. 
Then:
\[
\varsigma_c(G_1\join G_2) = \min\left\{|V(G_1)|+\varsigma(G_2), \theta_c(G_2), 1+\theta(G_2)\right\}.
\]
\end{lemma}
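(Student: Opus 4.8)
The plan is to prove the identity by establishing the two inequalities separately, mirroring the structure already used for Lemmas~\ref{lem:conCliqueDjoin-a}, \ref{lem:conCliqueDjoin-b} and~\ref{lem:conCVDcojoin}. Write $G=G_1\join G_2$ where $G_1$ is complete. For the upper bound $\varsigma_c(G)\le\min\{|V(G_1)|+\varsigma(G_2),\theta_c(G_2),1+\theta(G_2)\}$, I would exhibit three candidate connected cluster vertex deletion sets of $G$: first, $V(G_1)\cup S_2$ where $S_2$ is an optimal cluster vertex deletion set of $G_2$, which is connected because $G_1$ is a clique joined to everything; second, an optimal connected clique deletion set of $G_2$ (which after removal leaves $G_1\join(\text{clique})$, a clique, hence $P_3$-free); third, $\{u\}\cup S_2'$ for any $u\in V(G_1)$ and $S_2'$ an optimal clique deletion set of $G_2$, which is connected through $u$ since $u$ is joined to all of $V(G_2)$. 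Each shows $\varsigma_c(G)$ is at most the corresponding term.

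For the lower bound, let $S$ be an optimal connected cluster vertex deletion set of $G$ and put $S_i=S\cap V(G_i)$. The key structural observation is that $G-S=(G_1-S_1)\join(G_2-S_2)$ must be $P_3$-free, and the join of two nonempty graphs is $P_3$-free only if it is a complete graph; hence either $S_1=V(G_1)$, or $S_2=V(G_2)$, or both $G_1-S_1$ and $G_2-S_2$ are cliques. I would split into cases accordingly. If $S_1=V(G_1)$, then $S_2$ is a cluster vertex deletion set of $G_2$ and $|S|\ge|V(G_1)|+\varsigma(G_2)$. If $S_1\subsetneq V(G_1)$ (so $G_1-S_1$ is a nonempty clique), then $G_2-S_2$ must be a clique, i.e. $S_2$ is a clique deletion set of $G_2$; now if $S_1=\emptyset$ then $S=S_2$ is connected, so $S_2$ is a connected clique deletion set of $G_2$ and $|S|\ge\theta_c(G_2)$; if $S_1\ne\emptyset$ then by connectedness and optimality we may take $|S_1|=1$ (a single vertex of the clique $G_1$ already connects everything in $S_2$, and more would contradict optimality), giving $|S|\ge1+\theta(G_2)$. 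Combining the cases yields $|S|\ge\min\{|V(G_1)|+\varsigma(G_2),\theta_c(G_2),1+\theta(G_2)\}$.

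The step I expect to require the most care is the lower-bound case analysis, specifically handling the interaction between the connectedness requirement on $S$ and the optimality assumption when $S_1\ne\emptyset$. I must argue that if $G_1-S_1$ is a nonempty clique and $S_2$ already is a clique deletion set, then a single vertex of $S_1$ suffices to make $S$ connected and no larger $S_1$ can occur in an optimal solution; this uses that every vertex of $G_1$ is adjacent to every vertex of $G_2$, so one vertex of $S_1$ dominates $S_2$. A subtle point is the degenerate situation $\varsigma(G_2)=0$, $S_2=\emptyset$: then $S$ could in principle be empty or a single vertex, but $\varsigma_c(G_1\join G_2)$ should still be $0$ exactly when $G_1\join G_2$ is already $P_3$-free, i.e. a clique — I would check the formula returns $0$ here via the term $\theta_c(G_2)$ (with $\theta_c$ of an edgeless or complete $G_2$ equal to $0$), confirming consistency with the convention $\varsigma_c=0$ for $P_3$-free graphs and $\infty$ otherwise being correctly propagated through $\theta_c$.
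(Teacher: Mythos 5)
Your proposal is correct and follows essentially the same route as the paper: the same three upper-bound constructions (the paper obtains two of them by invoking $\varsigma_c\le\theta_c$ together with Lemma~\ref{lem:conCliqueDjoin-a}, whose proof exhibits exactly your sets) and an equivalent lower-bound case analysis, merely organized by cases on $S_1$ instead of on whether $G_2-S_2$ is a clique and whether $G_2[S_2]$ is connected. The subtle points you flag (one vertex of $G_1$ sufficing for connectivity, and the degenerate $P_3$-free case) are handled the same way in the paper.
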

\begin{proof} 
Let  $S$ be  a connected cluster vertex deletion set of $G_1\join G_2$ of optimal size, and write $S_i=S\cap V(G_i)$, $i=1,2$. Then $S_1$ is a (connected) clique deletion set of $G_1$ (possibly empty) and $S_2$ is a cluster vertex deletion set of $G_2$. Moreover, if $G_2-S_2$ is not a clique then $S_1=V(G_1)$, hence $|S|=\varsigma_c(G_1\join G_2)\ge |V(G_1)|+\varsigma(G_2)$.
In the case where $G_2-S_2$ is a clique, $|S_2|\ge\theta(G_2)$. Moreover, if $G_2[S_2]$ is connected then $S_1=\emptyset$ (because of the optimality of $S$) and $|S_2|\ge\theta_c(G_2)$; if $G_2[S_2]$ is disconnected, $|S_1\cap V(G_1)|=1$. 
Hence in this case, $|S|=\varsigma_c(G_1\join G_2)=|S_1|+|S_2|\ge\min\left\{\theta_c(G_2), 1+\theta(G_2)\right\}$.

For the other direction, observe first that by definition, $\varsigma_c(G_1\join G_2)\le \theta_c(G_1\join G_2)$, and hence by Lemma~\ref{lem:conCliqueDjoin-a}, $\varsigma_c(G_1\join G_2)\le\min\left\{\theta_c(G_2), 1+\theta(G_2)\right\}$. 
Observe next that, for any cluster vertex deletion set $S$ of $G_2$ of optimal size $\varsigma(G_2)$, $V(G_1)\cup S$ is a connected cluster vertex deletion set of $G_1\join G_2$, hence $\varsigma_c(G_1\join G_2)\le |V(G_1)|+\varsigma(G_2)$.
\end{proof}

For two non-complete graphs, we first show:
\begin{lemma}\label{lem:conCVDjoin-b} 
Let $G_1$ and $G_2$ be two arbitrary, non-complete graphs. Then:
\[
\varsigma_c(G_1\join G_2) \ge \min\left\{|V(G_1)|+\varsigma(G_2), |V(G_2)|+\varsigma(G_1), \theta(G_1)+\theta(G_2)\right\}.
\]
Furthermore, if both $G_1$ and $G_2$ are disconnected, then:
\[
\varsigma_c(G_1\join G_2) \ge \min\left\{|V(G_1)|+\max\{\varsigma(G_2),1\}, |V(G_2)|+\max\{\varsigma(G_1),1\},\theta(G_1)+\theta(G_2)\right\}.
\]
\end{lemma}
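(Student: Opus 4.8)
The plan is to take an optimal connected cluster vertex deletion set $S$ of $G_1\join G_2$ and to analyze it via the two trace sets $S_i=S\cap V(G_i)$, $i=1,2$. Note first that $G_1$ and $G_2$ are nonempty (a non-complete graph has at least two vertices), so $G_1\join G_2$ is connected and $\varsigma_c(G_1\join G_2)$ is finite; hence $S$ exists. I would record two elementary facts to be used repeatedly: (i) the join $A\join B$ of two \emph{nonempty} graphs is $P_3$-free if and only if both $A$ and $B$ are cliques --- equivalently, $A\join B$ is then a complete graph --- because otherwise two non-adjacent vertices on one side together with any vertex on the other side induce a $P_3$; and (ii) $G[S]=G_1[S_1]\join G_2[S_2]$ and $(G_1\join G_2)-S=(G_1-S_1)\join(G_2-S_2)$.

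The argument then splits according to whether $G_i-S_i$ is empty. If $S_1=V(G_1)$, then $(G_1\join G_2)-S=G_2-S_2$ is a cluster graph, so $S_2$ is a cluster vertex deletion set of $G_2$ and $\varsigma_c(G_1\join G_2)=|V(G_1)|+|S_2|\ge|V(G_1)|+\varsigma(G_2)$. Moreover, if $G_1$ is disconnected and $S_2=\emptyset$, then $G[S]=G_1$ would be disconnected, contradicting connectedness of $S$; hence in that situation $|S_2|\ge\max\{\varsigma(G_2),1\}$. The case $S_2=V(G_2)$ is symmetric, giving $|V(G_2)|+\varsigma(G_1)$, respectively $|V(G_2)|+\max\{\varsigma(G_1),1\}$ when $G_2$ is disconnected.

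In the last case both $G_1-S_1$ and $G_2-S_2$ are nonempty; since their join $(G_1\join G_2)-S$ is $P_3$-free, fact (i) forces both $G_1-S_1$ and $G_2-S_2$ to be cliques, so each $S_i$ is a clique deletion set of $G_i$ and $\varsigma_c(G_1\join G_2)=|S_1|+|S_2|\ge\theta(G_1)+\theta(G_2)$. Taking the minimum over the three cases yields the first inequality; the refined bounds from the disconnected subcases, together with the bound $\theta(G_1)+\theta(G_2)$ of the last case, yield the ``Furthermore'' part, since $\theta(G_1)+\theta(G_2)$ occurs in both minima.

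I do not expect a real obstacle. The only point needing care is to keep the connectedness requirement on $G[S]$ active precisely in the two extreme cases $S_i=V(G_i)$, where it is exactly what upgrades $\varsigma(G_{3-i})$ to $\max\{\varsigma(G_{3-i}),1\}$ when $G_i$ is disconnected; one should also verify fact (i) in the degenerate subcase where a surviving side is a single vertex, which is immediate.
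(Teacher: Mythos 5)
Your proposal is correct and follows essentially the same route as the paper's proof: take an optimal connected cluster vertex deletion set $S$, split it into the traces $S_i=S\cap V(G_i)$, distinguish the case where both $G_1-S_1$ and $G_2-S_2$ are (nonempty) cliques, yielding $\theta(G_1)+\theta(G_2)$, from the cases $S_i=V(G_i)$, and invoke connectedness of $G[S]$ only in the latter cases to upgrade $\varsigma(G_{3-i})$ to $\max\{\varsigma(G_{3-i}),1\}$ when both graphs are disconnected. The only difference is cosmetic: the paper branches on whether each $G_i-S_i$ is a clique, while you branch on whether $S_i=V(G_i)$, which amounts to the same case analysis.
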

\begin{proof}
Let $S$ be  a connected cluster vertex deletion set of $G_1\join G_2$ of optimal size, and write $S_i=S\cap V(G_i)$, $i=1,2$. Then $S_i$ is a cluster vertex deletion set of $G_i$. Note, moreover, that at least one of $G_1-S_1$ and $G_2-S_2$ must be a clique.

If each of $G_1-S_1$ and $G_2-S_2$ is a clique, $S_1$ and $S_2$ are clique deletion sets of~$G_1$ and~$G_2$, respectively.
Hence in this case, $|S|=\varsigma_c(G_1\join G_2)=|S_1|+|S_2|\ge\theta(G_1)+\theta(G_2)$. 
If $G_1-S_1$ is not a clique then $S_2=V(G_2)$, and likewise, if $G_2-S_2$ is not a clique then $S_1=V(G_1)$. 
In these two cases, $|S|=\varsigma_c(G_1\join G_2)\ge \min\{|V(G_1)|+\varsigma(G_2), |V(G_2)|+\varsigma(G_1)\}$.

Now, suppose that both $G_1$ and $G_2$ are disconnected. Then, the connectivity of $S$ implies that if $S_1=V(G_1)$ then $|S_2\cap V(G_2)|\ge 1$, and likewise, if $S_2=V(G_2)$ then $|S_1\cap V(G_1)|\ge 1$. Hence, $|S|=\varsigma_c(G_1\join G_2)\ge \min\{|V(G_1)|+\max\{\varsigma(G_2),1\}, |V(G_2)|+\allowbreak\max\{\varsigma(G_1),\allowbreak 1\}\}$.
\end{proof}

\begin{lemma}\label{lem:conCVDjoin-c} 
Let $G_1$ and $G_2$ be two arbitrary, non-complete graphs. 
\begin{itemize}
\item[\em (1)] If $G_1$ or $G_2$ is connected, then:
\[
\varsigma_c(G_1\join G_2) = \min\left\{|V(G_1)|+\varsigma(G_2), |V(G_2)|+\varsigma(G_1), \theta(G_1)+\theta(G_2)\right\}.
\]
\item[\em (2)] If both $G_1$ and $G_2$ are disconnected, then:
\[
\varsigma_c(G_1\join G_2) = \min\left\{|V(G_1)|+\max\{\varsigma(G_2),1\}, |V(G_2)|+\max\{\varsigma(G_1),1\},\theta(G_1)+\theta(G_2)\right\}.
\]
\end{itemize}
\end{lemma}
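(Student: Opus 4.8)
The plan is to derive both claimed equalities by pairing the lower bounds already proved in Lemma~\ref{lem:conCVDjoin-b} with matching upper bounds. Concretely, the first inequality of Lemma~\ref{lem:conCVDjoin-b} is exactly the \lq$\ge$\rq\ part of~(1), valid for any non-complete $G_1,G_2$, and its second inequality (for disconnected $G_1,G_2$) is the \lq$\ge$\rq\ part of~(2); so everything reduces to showing that $\varsigma_c(G_1\join G_2)$ is at most each of the three terms inside the respective minimum, which I would do by exhibiting explicit connected cluster vertex deletion sets of the prescribed sizes.

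There are three kinds of sets to build. (i) For the term $\theta(G_1)+\theta(G_2)$: take optimal clique deletion sets $T_1,T_2$ of $G_1,G_2$; both are non-empty since $G_1,G_2$ are non-complete, the join puts all edges between $T_1$ and $T_2$, so $S=T_1\cup T_2$ is connected while $(G_1\join G_2)-S=(G_1-T_1)\join(G_2-T_2)$ is a clique, hence $P_3$-free. (ii) For the term $|V(G_1)|+\varsigma(G_2)$ when $\varsigma(G_2)\ge1$: take an optimal, hence non-empty, cluster vertex deletion set $S_2$ of $G_2$ and set $S=V(G_1)\cup S_2$; every vertex of $S_2$ sees all of $V(G_1)$, so $G[S]$ is connected and $(G_1\join G_2)-S=G_2-S_2$ is a cluster graph. (iii) The boundary case $\varsigma(G_2)=0$: here $G_2$ is already $P_3$-free, so if $G_1$ is connected we simply take $S=V(G_1)$ (connected, with complement $G_2$), and if $G_1$ is disconnected we take $S=V(G_1)\cup\{v\}$ for an arbitrary $v\in V(G_2)$ (connected via $v$, with complement $G_2-v$ still $P_3$-free). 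The analogous constructions with the roles of $G_1,G_2$ swapped cover the term built on $|V(G_2)|$.

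It remains to assemble the case split. In~(2) both $G_i$ are disconnected, so the only available construction for the $|V(G_1)|$-term has size $|V(G_1)|+\varsigma(G_2)$ if $\varsigma(G_2)\ge1$ and $|V(G_1)|+1$ if $\varsigma(G_2)=0$, i.e.\ size $|V(G_1)|+\max\{\varsigma(G_2),1\}$; together with the symmetric term and the $\theta$-term this matches Lemma~\ref{lem:conCVDjoin-b}'s second bound exactly. In~(1), by the symmetry of the right-hand side we may assume $G_1$ is connected; a connected non-complete graph contains an induced $P_3$, hence $\varsigma(G_1)\ge1$, which is exactly what makes the $|V(G_2)|$-term achievable (use $S=V(G_2)\cup S_1$ with $S_1\ne\emptyset$) and what collapses $\max\{\varsigma(G_1),1\}$ to $\varsigma(G_1)$; and for the $|V(G_1)|$-term the refinement in~(iii) using the connectivity of $G_1$ yields size exactly $|V(G_1)|$ when $\varsigma(G_2)=0$, so the three upper bounds coincide term-by-term with the three quantities in Lemma~\ref{lem:conCVDjoin-b}(1).

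I expect the only real subtlety — hardly an obstacle — to be the bookkeeping of which hypothesis each of the three constructions needs: connectivity of the built set always follows from the all-edges-across property of the join together with non-emptiness of the relevant parts, and the single structural fact doing the work is that a connected non-complete graph has $\varsigma\ge1$. One must also be careful that in case~(1) the extra-vertex trick of~(iii) is replaced by $S=V(G_1)$ to avoid an off-by-one, and that the \lq$\ge$\rq\ direction of~(1) requires no connectivity hypothesis, so invoking Lemma~\ref{lem:conCVDjoin-b} there is safe.
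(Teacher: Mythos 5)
Your proposal is correct and follows essentially the same route as the paper: lower bounds from Lemma~\ref{lem:conCVDjoin-b}, the $\theta(G_1)+\theta(G_2)$ upper bound via non-empty optimal clique deletion sets joined across (the paper routes this through $\varsigma_c\le\theta_c$ and Lemma~\ref{lem:conCliqueDjoin-b}, whose proof is your construction), and the remaining upper bounds via $V(G_i)\cup S_{3-i}$ with the same case analysis on connectivity and on whether the optimal set is empty. The key structural fact you isolate --- a connected non-complete graph contains an induced $P_3$, hence has $\varsigma\ge 1$ --- is exactly the observation the paper uses.
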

\begin{proof} 
By Lemma~\ref{lem:conCVDjoin-b}, it remains to show that in both claims the left-hand side is at most the right-hand side. Observe first that $\varsigma_c(G_1\join G_2)\le\theta_c(G_1\join G_2)$, and so by Lemma~\ref{lem:conCliqueDjoin-b}, $\varsigma_c(G_1\join G_2)\le\theta(G_1)+\theta(G_2)$. 

\medskip\noindent
(1): Let $G_1$ be connected, say. 
Observe that any cluster vertex deletion set $S_1$ of $G_1$ is non-empty (because~$G_1$ is connected non-complete), hence $V(G_2)\cup S_1$ is a connected cluster vertex deletion set of $G_1\join G_2$, and for any cluster vertex deletion set $S_2$ of $G_2$, $V(G_1)\cup S_2$ is a connected cluster vertex deletion set of $G_1\join G_2$ (because~$G_1$ is connected). 
Thus, 
$\varsigma_c(G_1\join G_2) \le\min\{|V(G_1)|+\varsigma(G_2), |V(G_2)|+\varsigma(G_1)\}$. 

\medskip\noindent
(2): 
Observe that for any cluster vertex deletion set $S_1$ of $G_1$ of optimal size $\varsigma(G_1)$, $V(G_2)\cup S_1$ (if $S_1\not=\emptyset$) or $V(G_2)\cup\{u\}$ (if $S_1=\emptyset$), where $u$ is any vertex of $G_1$,  is a connected cluster vertex deletion of $G_1\join G_2$. Hence $\varsigma_c(G_1\join G_2)\le |V(G_2)|+\max\{\varsigma(G_1),1\}$. Similarly, $\varsigma_c(G_1\join G_2)\le |V(G_1)|+\max\{\varsigma(G_2),1\}$.
\end{proof}

Now, given a cograph $G$ together with its cotree, with Lemmas~\ref{lem:conCliqueDcojoin}, \ref{lem:conCliqueDjoin-a}, \ref{lem:conCliqueDjoin-b}, \ref{lem:conCVDcojoin}, \ref{lem:conCVDjoin-a} and~\ref{lem:conCVDjoin-c} we can compute the connected clique vertex deletion number and the connected cluster deletion number of $G$ in linear time. This is done in the same way for computing the vertex cover number and the cluster vertex deletion number in Appendix~\ref{app:P4free}, hence we omit the details.

\end{document}